\newtheorem{claim}{Claim}
\definecolor{darkgreen}{rgb}{0,0.5,0}
\definecolor{darkblue}{rgb}{0,0,0.5}
\newcommand{\pr}[1]{\ensuremath{\text{{\bf Pr}$\left[#1\right]$}}}
\newcommand{\hide}[1]{}
\newcommand{\LOCAL}{\ensuremath{\mathsf{LOCAL}}\xspace}
\newcommand{\CONGEST}{\ensuremath{\mathsf{CONGEST}}\xspace}
\newcommand{\id}{\ensuremath{\text{id}}}
\renewcommand{\Pr}{\mathbb{P}}
\begin{document}
\title{Time-Optimal Construction of Overlay Networks
}

\author{Thorsten G\"otte}
\affiliation{
  \institution{Paderborn University}
  \streetaddress{Warburger Str. 100}
  \city{Paderborn}
  \country{Germany}}
\email{thgoette@mail.upb.de}

\author{Kristian Hinnenthal}
\affiliation{
  \institution{Paderborn University}
  \streetaddress{Warburger Str. 100}
  \city{Paderborn}
  \country{Germany}}
\email{krijan@mail.upb.de}

\author{Christian Scheideler}
\affiliation{
  \institution{Paderborn University}
  \streetaddress{Warburger Str. 100}
  \city{Paderborn}
  \country{Germany}}
\email{scheidel@mail.upb.de}

\author{Julian Werthmann}
\affiliation{
  \institution{Paderborn University}
  \streetaddress{Warburger Str. 100}
  \city{Paderborn}
  \country{Germany}}
\email{jwerth@mail.upb.de}

\begin{abstract}\normalsize
    In this paper, we show how to construct an overlay network of constant degree and diameter $O(\log n)$ in time $O(\log n)$ starting from an arbitrary weakly connected graph.
    We assume a synchronous communication network in which nodes can send messages to nodes they know the identifier of, and new connections can be established by sending node identifiers.
    If the initial network's graph is weakly connected and has constant degree, then our algorithm constructs the desired topology with each node sending and receiving only $O(\log n)$ messages in each round in time $O(\log n)$, w.h.p., which beats the currently best $O(\log^{3/2} n)$ time algorithm of [Götte et al., SIROCCO'19].
    Since the problem cannot be solved faster than by using pointer jumping for $O(\log n)$ rounds (which would even require each node to communicate $\Omega(n)$ bits), our algorithm is asymptotically optimal.
    We achieve this speedup by using short random walks to repeatedly establish random connections between the nodes that quickly reduce the conductance of the graph using an observation of [Kwok and Lau, APPROX'14].
    
    Additionally, we show how our algorithm can be used to efficiently solve graph problems in \emph{hybrid networks} [Augustine et al., SODA'20].
    Motivated by the idea that nodes possess two different modes of communication, we assume that communication of the \emph{initial} edges is unrestricted, whereas only polylogarithmically many messages can be sent over edges that have been established throughout an algorithm's execution.
    For an (undirected) graph $G$ with arbitrary degree, we show how to compute connected components, a spanning tree, and biconnected components in time $O(\log n)$, w.h.p.
    Furthermore, we show how to compute an MIS in time $O(\log d + \log \log n)$, w.h.p., where $d$ is the initial degree of $G$.
\end{abstract}

\date{}
\maketitle

\newtheorem{remark}[theorem]{Remark}
\newtheorem{fact}[theorem]{Fact}

\clearpage

\section{Introduction} 
\label{sec:intro}

Many modern distributed systems (especially those which operate via the internet) are not concerned with the physical infrastructure of the underlying network. 
Instead, these large scale distributed systems form \emph{logical networks} that are often referred to as \emph{overlay networks} or \emph{peer-to-peer networks}.
In these networks, nodes are considered as \emph{connected} if they know each other's IP-Addresses.
Practical examples for such systems are cryptocurrencies, the Internet of Things, or the Tor network.
Further examples include overlay networks like Chord~\cite{SMK+01}, Pastry~\cite{RD01}, and skip graphs~\cite{AS03}.
In this work, we consider the fundamental problem of constructing an overlay network of low diameter as fast as possible from an arbitrary initial state.
Note that $O(\log(n))$ this is the obvious lower bound for the problem:
If the nodes initially form a line, then it takes $O(\log n)$ rounds for the two endpoints to learn each other, even if every node could introduce all of its neighbors to one other in each round.

To the best of our knowledge, the first overlay construction algorithm with polylogarithmic time and communication complexity that can handle (almost) arbitrary initial states has been proposed by Angluin et al. \cite{AAC+05}. 
Here, the authors assume a weakly connected graph of initial degree $d$.
If in each round each node can send and receive at most $d$ messages, and new edges can be established by sending node identifiers, their algorithm transforms the graph into a binary search tree of depth $O(\log n)$ in time $O(d+\log^2 n)$, w.h.p.\footnote{An event holds with high probability (w.h.p.) if it holds with probability at least $1 - 1/n^c$ for an arbitrary but fixed constant c > 0.}
Since a low-depth tree can easily be transformed into many other topologies (and fundamental problems such as sorting or routing can easily be solved from such a structure), this idea has sparked a line of research investigating how quickly such overlays can be constructed.
For example, \cite{AW07} gives an $O(\log n)$ time algorithm for graphs with outdegree $1$.
If the initial degree is polylogarithmic, and nodes can send and receive a polylogarithmic number of messages, there is a deterministic $O(\log^2 n)$ time algorithm \cite{GHSS17}.
Very recently, this has been improved to $O(\log^{3/2} n)$, w.h.p. \cite{GHS19}.
However, to the best of our knowledge, there is no $O(\log(n))$-time algorithm that can construct a well-defined overly with logarithmic communication (Table \ref{tab:overview} provides an overview over the works that can be compared with our result).
In this paper, we finally close the gap and present the first algorithm that achieves these bounds, w.h.p. 
All of the previous algorithms (i.e,\cite{AAC+05,AW07,GHSS17,GHS19,GPRT20}) essentially employ the same high-level approach of \cite{AAC+05} to alternatingly group and merge so-called \emph{supernodes} (i.e., sets of nodes that act in coordination) until only a single supernode remains.
However, these supernodes need to be consolidated after being grouped with adjacent supernodes to distinguish internal from external edges. 
This consolidation step makes it difficult to improve the runtime further using this approach.
Instead, we use a radically different approach, arguably much simpler than existing solutions. It is based on classical overlay maintenance algorithms for \emph{unstructured} networks such as, for example, \cite{LS03} or \cite{GMS04}\footnote{Note that our analysis significantly differs from \cite{LS03} and \cite{GMS04} as we do not assume that nodes arrive one after the other. Instead, we assume an arbitrary initial graph of possibly small conductance.}, as well as practical libraries for overlays like JXTA \cite{OG02} or the overlay of Bitcoin. 
Instead of arranging the nodes into supernodes (and paying a price of complexity and runtime for their maintenance), we simply establish random connections between the nodes by performing short \emph{constant length} random walks.
Each node starts a small number of short random walks, connects itself with the respective endpoints, and drops all other connections. 
Then, it repeats the procedure on the newly obtained graph.
Using novel techniques by Kwok and Lau \cite{kwok2014lower} combined with elementary probabilistic arguments, we show that short random walks incrementally reduce the conductance of the graph.
Once the conductance is constant, the graph's diameter must be $O(\log n)$.
Note that such a graph can easily be transformed into many other overlay networks, such as a sorted ring, e.g., by performing a BFS and applying the algorithm of Aspnes and Wu \cite{AW07} to the BFS tree \emph{or} by using the techniques by Gmyr et al. \cite{GHSS17}

\begin{table}
\centering
\begin{tabular}{@{}llll@{}}
\hline
\bfseries Result & \bfseries Runtime & \bfseries Init. Topology & \bfseries Comm. \\
\hline
\cite{AAC+05} & $O(d+\log^2{n})$ w.h.p & Any & $O(\log(n))$\\
\cite{AW07} & $O(\log n)$ w.h.p & Outdegree $1$ & $O(\log(n))$\\
\cite{JRSST14} & $O(\log^2{n})$ w.h.p & Any & $O(n)$ \\
\cite{GHSS17} & $O(\log^2 n)$ & Any &$O(d\log(n))$  \\
\cite{GHS19} & $O(\log^{\nicefrac{3}{2}} n)$ w.h.p & Any &$O(d\log(n))$  \\
\cite{GPRT20} & $O(d\log^2 n)$ w.h.p & Any &$O(\log(n))$\\
\cite{ACC+20}&$O(\log{n})$&Line Graph& $O(\log(n))$\\
\emph{This} & $O(\log(n)$ w.h.p & Any & $O(d\log(n))$ \\
\hline
\end{tabular}
\caption{An overview of the related work. Note that $d$ denote the initial graph's degree. Communication refers to the number of messages per node and round.}
\label{tab:overview}
\end{table}

\subsection{Related Work} \label{sec:related}

The research on overlay construction is not limited to the examples given in the introduction.
Since practical overlay networks are often characterized by dynamic changes coming from \emph{churn} or \emph{adversarial behavior}, a vast amount of papers focus on reaching and maintaining a valid topology of the network in the presence of faults.
These works can be roughly categorized into two areas.
On the one hand, there are so-called \emph{self-stabilizing} overlay networks, which try to detect invalid configurations locally and recover the system into a stable state (see, e.g., \cite{FSS20} for a comprehensive survey).
However, since most solutions focus on a very general context (such as asynchronous message passing and arbitrary corrupted memory), only a few algorithms \emph{provably} achieve polylogarithmic runtimes \cite{JRSST14, BGP13}, and most have no bounds on the communication complexity.
On the other hand, there are overlay construction algorithms that explicitly use only polylogarithmic communication per node and proceed in synchronous rounds. 
In this category, we have algorithms that maintain an overlay topology under randomized or adversarial errors. 
These works focus on quickly reconfiguring the network to distribute the load evenly (under churn) or to reach an unpredictable topology (in the presence of an adversary)~\cite{DGS16, AS18, APR+15, GRS19}.
However, a common assumption is that the overlay starts in some well-defined initial state.
The work by Gilbert et al.\cite{GPRT20} combines the fast overlay construction with adversarial churn.
They present a construction algorithm that tolerates adversarial churn as long as the network always remains connected and there eventually is a period of length $\Omega(\log(n)^2)$ where no churn happens.
The exact length of this period depends on the goal topology. 
Further, there is a paper by Augustine et al.~\cite{ACC+20} that considers $\widetilde{O}(d)$-time algorithms for so-called graph realization problems.
Their goal is to construct graphs of \emph{any} given degree distributions as fast as possible.
They assume, however, that the network starts as a \emph{line}, which makes the construction of the graphs considered in this work very easy.

One of the main difficulties in designing algorithms to construct overlay networks quickly lies in the node's limited communication capabilities in a broader context. 
Therefore, our algorithm further touches on a fundamental question in designing efficient algorithms for overlay networks: How can we exploit the fact that we can (theoretically) communicate with every node in the system but are restricted to sending and receiving $O(\log(n))$ messages.
Recently, the impact of this restriction has been studied in the so-called \emph{Node-Capacitated Clique} (NCC) model \cite{AGG+19}, in which the nodes are connected as a clique and can send and receive at most $O(\log n)$ messages in each round. 
The authors present $\widetilde{O}(a)$ algorithms (where $\widetilde{O}(\cdot)$ hides polylogarithmic factors, and $a$ is the \emph{arboricity}\footnote{The arboricity of a graph is the minimum number of forests its edges can be partitioned into.} of $G$) for local problems such as MIS, matching, or coloring, a $\widetilde{O}(D+a)$ algorithm for BFS tree, and a $\widetilde{O}(1)$ algorithm for the minimum spanning tree (MST) problem.
Robinson \cite{Rob21} investigates the information the nodes need to learn to solve graph problems and derives a lower bound for constructing spanners in the NCC.
Interestingly, his result implies that spanners with constant stretch require polynomial time in the NCC and are therefore harder to compute than MSTs.
As pointed out in \cite{FHS20}, the NCC is, under certain limitations, able to simulate PRAM algorithms efficiently.
If the input graph's degree is polylogarithmic, for example, we easily obtain polylogarithmic time algorithms for (minimum) spanning forests \cite{PR02, CHL01, HZ01}.
Notably, Liu et al.~\cite{LTZ20} recently proposed an $O(\log D + \log \log_{m/n} n)$ time algorithm for computing connected components in the CRCW PRAM model, which would also likely solve overlay construction.
Assadi et al.\cite{ASW19} achieve a comparable result in the MPC model (that uses $O(n^{\delta})$ communication per node) with a runtime logarithmic in the input graph's spectral expansion.
Note that, however, the NCC, the MPC model, and PRAMs are arguably more powerful than the overlay network model considered in this paper, since nodes can reach \emph{any} other node (or, in the case of PRAMs, processors can contact arbitrary memory cells), which rules out a naive simulation that would have $\Omega(\log n)$ overhead if we aim for a runtime of $O(\log n)$.
Also, if the degree is unbounded (which is our assumption for the hybrid model), simulating PRAM algorithms, which typically have work $\Theta(m)$, becomes completely infeasible.
Furthermore, since many PRAM algorithms are very complicated, it is highly unclear whether their techniques can be applied to our model. Last, there is a hybrid network model Augustine et al. \cite{AHKSS20} that combines global (overlay) communication with classical distributed models such as \CONGEST or \LOCAL.
Here, in a single round, each node can communicate with \emph{all} its neighbors in a communication graph $G$ and addition can send and receive a limited amount messages from each node in the system.
So far, most research for hybrid networks focussed on shortest-paths problems \cite{AHKSS20, KS20, FHS20}.
For example, in general graphs APSP can be solved exactly and optimally (up to polylogarithmic factors) in time $\widetilde{O}(\sqrt{n})$, and SSSP can be computed in time $\widetilde{O}(\min\{n^{2/5}, \sqrt{D}\})$ exactly.
Whereas even an $\Omega(\sqrt{n})$ approximation for APSP takes time $\widetilde{O}(\sqrt{n})$, a constant approximation of SSSP can be computed in time $\widetilde{O}(n^\varepsilon)$ \cite{AHKSS20,KS20}.
Note that these algorithms require very high (local) communication.
If the initial graph is very sparse, then SSSP can be solved in (small) polylogarithmic time and with limited local communication, exploiting the power of the NCC~\cite{FHS20}.

\subsection{Model} \label{sec:model}

We consider overlay networks with a fixed node set $V$.
Each node $u$ has a unique \emph{identifier} $\id(u)$, which is a bit string of length $O(\log n)$, where $n = |V|$. 
Further, time proceeds in \emph{synchronous rounds}\footnote{Note that some of the algorithms can be adapted to work in an asynchronous model where a round is measured by the time it takes for the slowest message to arrive. 
Such a model (arguably) captures the heterogeneity of a P2P system with nodes and connections of varying speed and data rate more faithfully. 
If all nodes know the maximum delay of a message, they can simulate the synchronous algorithm. 
A practical downside of this approach is that the algorithm operates only as fast as the slowest part of the network. 
We go in into further details in the analysis.}.
We represent the network as a directed graph $G=(V,E)$, where there is a directed edge $(u,v) \in E$ if $u$ knows $\id(v)$.


If $u$ knows $\id(v)$ in round $i$, then it can send a message to $v$ that will be received at the beginning of round $i+1$.
New connections can be established by sending node identifiers: if $u$ sends $\id(w)$ to $v$, then $v$ can establish an edge $(v,w)$.
We restrict the size of a message to $O(\log n)$ bits, which allows a message to carry a constant number of identifiers.
Furthermore, we limit the total number of messages each node can \emph{send} and \emph{receive} in each round.
More precisely, in this paper we distinguish two different model variants:

 \textbf{NCC$_0$ model:} Each node can send and receive at most $O(\log n)$ messages in each round.
    This corresponds to the so-called NCC$_0$ model~\cite{ACC+20}, which is a variant of the general \emph{Node-Capacitated Clique} (NCC) model for overlay networks~\cite{AGG+19}.
    The bound of $O(\log n)$ is argued as a natural choice, preventing algorithms from being needlessly complicated while still ensuring scalability.
    Since this model is very general, our main algorithm will be presented in this model.
    
    
 \textbf{Hybrid model:} As in the hybrid model of Augustine et al.~\cite{AHKSS20}, we distinguish between \emph{local edges}, which are edges of the initial networks, and \emph{global edges}, which are additional edges that are established throughout an algorithm's execution.
    In each round, every node can send a single message of size $O(\log n)$ over each local edge, which corresponds to the \CONGEST model.
    Furthermore, it can send and receive a polylogarithmic number of messages over global edges.
    Note that our model corresponds to the model of \cite{AHKSS20} for \emph{local capacity} $\lambda = O(1)$ and \emph{global capacity} $\gamma = \widetilde{O}(1)$ with the difference that global edges need to be established explicitly, whereas the global network forms a clique in \cite{AHKSS20}.
    Further, whereas the algorithms presented in \cite{AHKSS20} require each node to only send and receive $O(\log n)$ messages in each round using the global network (i.e., the global capacity is $O(\log n)$), we allow polylogarithmically many messages to be sent.
    This allows us to achieve very efficient algorithms even for high initial node degrees without focusing too much on the technicalities required to achieve a global capacity of $O(\log n)$.

We assume that if (in any of these models) more messages than allowed are sent to a node, the node receives an \emph{arbitrary} subset (and the rest is simply dropped by the network).
Furthermore, we assume that every node has sufficient memory for our protocol to work correctly and every node is sufficiently fast so that it can process all messages that arrived at the beginning of round $i$ within that round\footnote{Note that for our algorithm polylogarithmic memory and local computations are sufficient.}.

\subsection{Problem Statement(s) \& Our Contribution} \label{sec:problem}

Before we formally define the problems considered in this paper, we first review some basic concepts from graph theory.
Recall that $G = (V,E)$ is a directed graph.
A node's \emph{outdegree} denotes the number of outgoing edges, i.e., the number of identifiers it stores.
Analogously, its \emph{indegree} denotes the number of incoming edges, i.e., the number of nodes that store its identifier. 
A node's \emph{degree} is the sum of its in- and outdegree, and the graph's degree is the maximum degree of any node, which we denote by $d$.
We say that a graph is \emph{weakly connected} if there is a (not necessarily directed) path between all pairs of nodes.
A graph's {\em diameter} is the maximum over all node pairs $v, w$ of the length of a shortest path between $v$ and $w$ (where we ignore the edges' directions).

Although $G$ is a directed (knowledge) graph, for the problems considered in this paper we regard $G$ as being undirected.
Our algorithms ensure that the graph can always easily be made bidirected by letting each node introduce itself to all of its neighbors.
Further, note that apart from the first of the following problems, all problems aim at finding a solution with respect to the \emph{initial} structure of (the undirected version of) $G$.

The main goal of this paper is to construct a \emph{well-formed tree}, which is a rooted tree of constant degree and diameter $O(\log n)$ that contains all nodes of $G$.
Our main result, which is presented in Section~\ref{sec:algorithm} and proven in Section~\ref{sec:analysis}, is the following.

\begin{restatable}[Main Theorem]{thm}{main}
 \label{thm:main}
    Let $G = (V,E)$ be a weakly connected directed graph with degree $O(1)$.
    There is a randomized algorithm that constructs a well-formed tree $T_G = (V,T_V)$ in $O(\log n)$ rounds, w.h.p., in the NCC$_0$ model. 
    Over the course of the algorithm, each node sends a total of at most $O(\log^2n)$ messages, w.h.p.
\end{restatable}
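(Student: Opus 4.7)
The plan is to prove the theorem in three stages: first, build a constant-degree graph of diameter $O(\log n)$ by iteratively resampling each node's neighborhood via short random walks; second, convert such a graph into a well-formed tree using prior tools; third, account for per-round message load. Preprocessing is easy: since the initial degree is $O(1)$, each node can announce itself to its in-neighbors in $O(1)$ rounds, so from now on we may treat the graph as undirected, connected, and of constant degree.

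The core of the algorithm is a loop of $\Theta(\log n)$ phases, each lasting $O(1)$ rounds. In one phase, every node $v$ initiates a constant number of lazy random walks of constant length $\ell$, and then replaces its incident edges with the edges to the walks' endpoints. To show that the process produces an $O(\log n)$-diameter graph after $O(\log n)$ phases, I would argue that each phase multiplies the current graph's conductance $\phi$ by at least a constant factor until $\phi$ reaches $\Omega(1)$. The Kwok--Lau bound, applied to the $\ell$-step walk matrix, gives for a suitable constant $\ell$ a constant multiplicative boost on $\phi$ as long as $\phi$ is below some absolute threshold. Since the new edges are sampled endpoints of independent walks, a Chernoff/sparsification argument then shows that the resampled graph inherits this improved conductance up to constant factors, w.h.p., and remains connected. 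Starting from the trivial polynomial lower bound on $\phi$, after $\Theta(\log n)$ phases we reach $\phi = \Omega(1)$, whence the diameter is $O(\log n)$ by Cheeger.

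With an $O(\log n)$-diameter constant-degree graph in hand, a well-formed tree can be built in $O(\log n)$ additional rounds by running a BFS and applying the Aspnes--Wu algorithm \cite{AW07} (or the construction of Gmyr et al.\ \cite{GHSS17}) to rebalance it into a rooted tree of constant degree and depth $O(\log n)$. For the message bound, in each phase a node initiates $O(1)$ walks and, in a constant-degree graph, sees $O(\log n)$ walks transit through it w.h.p.\ by Chernoff, so per-round traffic is $O(\log n)$; summed over $O(\log n)$ phases this gives $O(\log^2 n)$ messages per node w.h.p. The main obstacle will be the conductance amplification step: one must carefully verify that sampling a small number of $\ell$-step walk endpoints per node approximates the conductance of the ideal walk matrix up to a constant factor, and that the per-phase improvement composes cleanly across $\Theta(\log n)$ phases without the high-probability guarantees deteriorating over the course of the algorithm.
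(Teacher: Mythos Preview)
Your high-level plan matches the paper's: iterate short random walks to boost conductance via Kwok--Lau, then convert to a well-formed tree via BFS and \cite{GHSS17}/\cite{AW07}. But the concentration step, as you have set it up, does not go through.

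With a \emph{constant}-degree graph and a \emph{constant} number of walks per node, the ``Chernoff/sparsification argument'' fails. Nothing prevents the current graph from having a cut of size $O(1)$; the expected number of $\ell$-step walks crossing such a cut is then $O(1)$, so with constant probability the resampled graph has \emph{zero} edges across it and disconnects. Even when it stays connected, Chernoff on a set $S$ with $k$ outgoing edges gives failure probability only $e^{-\Theta(k)}$, which is useless for $k=O(1)$; and you must union-bound over all (exponentially many) subsets $S$, not just one. So this is not merely a matter of ``careful verification'': the constant-degree/constant-walk version of the algorithm is simply incorrect.

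The paper fixes this by first making the graph \emph{benign}: every initial edge is copied $\Lambda=\Theta(\log n)$ times so that the minimum cut is $\Theta(\log n)$, self-loops are added to reach degree $\Delta=\Theta(\log n)$ and make the walk lazy, and every node launches $\Delta/8=\Theta(\log n)$ walks per phase. The union bound over all cuts is then controlled by Karger's theorem, which bounds the number of cuts of size at most $\alpha\Lambda$ by $O(n^{2\alpha})$, so that $\sum_\alpha n^{O(\alpha)} e^{-\Omega(\alpha\Lambda)} = n^{-\Omega(1)}$. A further subtlety you do not mention is that the $\Lambda$-sized minimum cut must be re-established after \emph{every} phase (it is not preserved automatically by resampling); the paper does this separately using the small-set-conductance version of the Kwok--Lau bound. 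These ingredients---the $\Theta(\log n)$ blow-up, Karger's cut-counting, and the per-phase min-cut maintenance---are exactly what is missing from your sketch.
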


Since it takes time $\Omega(\log n)$ to construct a well-formed tree starting from a line even with unbounded communication, our runtime is asymptotically optimal.
We remark that we do not require the nodes to know $n$ exactly; however, they do need to know an upper bound $L \ge \log n$ on $\log n$ such that $L = O(\log n)$.

If the initial degree was $O(d)$, and the nodes were allowed to process $\Theta(d\cdot \log n)$ many messages, then our algorithm could also achieve a runtime of $O(\log n)$.
Therefore, our result directly improves upon the $O(\log^{3/2} n)$ time algorithm of \cite{GHS19}, who assume a polylogarithmic degree and allow polylogarithmic communication.
Note that a direct comparison to the model of Angluin et al.~\cite{AAC+05, AW07} is a bit difficult: 
In their model, each node can only send a \emph{single} message in each round, and, if the initial degree is $d$, there is a lower bound of $O(d + \log n)$.
It is still unclear whether our techniques could be applied to meet the lower bound of their model.
They also pose the question of whether there is an $O(\log n)$ time algorithm if each node is allowed to communicate $d$ messages; as stated above, we only answer this question affirmatively for the case that $\Theta(d \cdot \log n)$ messages can be sent and received, which might not be optimal.

\subsubsection{Applications \& Implications}
An immediate corollary of our result is that any "well-behaved" overlay of logarithmic degree and diameter (e.g., butterfly networks, path graphs, sorted rings, trees, regular expanders, De Brujin Graphs, etc.) can be constructed in $O(\log n)$ rounds, w.h.p.
These overlays can be used by distributed algorithms to common tasks like aggregation, routing, or sampling in logarithmic time.
Furthermore, we point out the following implications of this result.

\begin{enumerate}
    \item \emph{Every} monitoring problem presented in \cite{GHSS17} can be solved in time $O(\log n)$, w.h.p., instead of $O(\log^2 n)$ deterministically. 
    These problems include monitoring the graph's node and edge count, its bipartiteness, as well as the approximate and exact weight of an MST.
    \item For \cite{GRS19,APR+15,DGS16,AS18, ACC+20}, the assumption that the graph starts in well-initialized overlay can be dropped.
    \item For most algorithms that have been presented for the NCC (and hybrid networks that model the global network by the NCC) \cite{AGG+19, AHKSS20, FHS20}, the rather strong assumption that all node identifiers are known may be dropped.
    Instead, if the initial knowledge graph has degree $O(\log n)$, we can construct a butterfly network in time $O(\log n)$, which suffices for most primitives to work (note that all presented algorithms have a runtime of $\Omega(\log n)$ anyway).
\end{enumerate}
We strongly believe that our techniques could lead to networks that are highly robust against churn and DoS-attacks, at least as long as the churn is oblivious.
An adversary with full knowledge of the communication graph that can decide, which nodes join and leave the network in given round, can easily identify minimum cuts in the network and disconnect it.
If, however, the nodes fail independently and random with a certain probability, say $p$, a logarithmic sized minimum cut (of different nodes) is enough to keep the network connected w.h.p.
We touch a bit more on this topic in the end.

In Section~\ref{sec:applications}, we then give some applications of the algorithm for the hybrid model.
As already pointed out, all of the following algorithms can be performed in the hybrid network model of Augustine et al.~\cite{AHKSS20} for $\lambda = O(\log n)$ and $\gamma = \widetilde{O}(1)$, which provides a variety of novel contributions for hybrid networks\footnote{We remark that if the global network allows nodes to contact \emph{arbitrary} nodes, which is the case in the model of \cite{AHKSS20}, then some of our results can probably also be achieved by combining efficient spanner constructions with PRAM simulations.}.
For each algorithm we give a bound on the required global capacity.
Note that using more sophisticated techniques, our algorithms may very likely be optimized to require a much smaller global capacity.
We remark that \emph{all} of the following algorithms can be adapted to achieve the same runtimes in the NCC$_0$ model, if the initial degree is constant.

\subsubsection*{Connected Components} 
Here, we consider a graph $G$ that is not (necessarily) connected. For each connected component $C$ of $G$, we want to establish a well-formed tree that contains all nodes of $C$.

The section begins by presenting an adaption of our main algorithm in Section~\ref{sec:adaption} that circumvents some problems introduced by the potentially high node degrees.
As a first application of this algorithm, in Section~\ref{sec:components} we show how to establish a well-formed tree on each connected component of $G$ (if $G$ is not connected initially).

\begin{restatable}{thm}{connected}\label{thm:connected}
    Let $G = (V,E)$ be a directed graph.
    There is a randomized algorithm that constructs a well-formed tree on each connected component of (the undirected version of) $G$ in $O(\log n)$ rounds, w.h.p., in the hybrid model.
    Further, if all components have a (known) size of $O(m)$, the runtime reduces to $O(\log m +\log\log n)$ rounds, w.h.p.
    The algorithm requires global capacity $O(\log^3 n)$, w.h.p.
\end{restatable}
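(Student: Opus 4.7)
The plan is to run the adaptation of the main algorithm (presented in Section~\ref{sec:adaption}) in parallel on every node, treating each connected component of $G$ as an independent instance. Because random walks performed inside a component never cross into another component, the executions on vertex-disjoint components decouple: correctness and the $O(\log n)$ runtime on each component follow from Theorem~\ref{thm:main} applied in isolation to that component, and a union bound over the at most $n$ components lifts the per-component w.h.p.\ guarantee to a global one. The nodes themselves do not need to know which component they belong to; they simply execute the same procedure locally, and the walks implicitly respect component boundaries.

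The first substantive step is to argue that the Section~\ref{sec:adaption} adaptation can start from arbitrary initial degree without exceeding the hybrid model's capacity budgets. The original analysis needs constant initial degree so that constant-length random walks do not overload any single node's global message budget. In the hybrid model, by contrast, local-edge capacity is essentially unrestricted (one $O(\log n)$-bit message per local edge per round, as in \CONGEST), which allows the adaptation to simulate the first (high-fanout) steps of each walk over local edges and only hand the walk off to global edges once sufficiently much randomness has been mixed in. This is where the stated global capacity bound of $O(\log^3 n)$ comes from: a Chernoff bound over the polylogarithmic number of walks handled by each node shows that the global-edge load stays within $O(\log^3 n)$, w.h.p.

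For the improved bound $O(\log m + \log \log n)$ in the case that all components have known size $O(m)$, I would feed the algorithm the upper bound $L = \Theta(\log m + \log \log n)$, exploiting the fact that Theorem~\ref{thm:main} only requires $L \ge \log n$ with $L = O(\log n)$. The algorithm then terminates in $O(\log m + \log \log n)$ rounds on each component, and the $\log \log n$ summand is precisely what is needed to boost the per-component failure probability from $1/m^{\Theta(1)}$ to $1/n^{\Theta(1)}$, so that the final union bound over all at most $n$ components still produces the desired global w.h.p.\ bound. Note that $m$ must be known for the nodes to set $L$; this is exactly the additional assumption in the second part of the theorem.

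The main obstacle I anticipate is verifying that the adaptation simultaneously (i) preserves the conductance-improvement argument that drives Theorem~\ref{thm:main}, (ii) respects the per-node global capacity $O(\log^3 n)$ despite arbitrary initial degree, and (iii) correctly handles nodes that sit on the boundary between a very small and a very large component (in the known-size case). Point~(ii) is the crux: for any single high-degree node I need to bound how many random-walk tokens can land on it in a single round, which requires combining the Kwok--Lau style expansion argument with a concentration bound on walk visit counts of constant length; the degree-reduction ideas from Section~\ref{sec:adaption} must ensure that, in expectation, each node hosts only polylogarithmically many walks per round so that Chernoff then yields the $O(\log^3 n)$ global-capacity bound.
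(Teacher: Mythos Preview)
Your proposal has a genuine gap: you are assuming that the Section~\ref{sec:adaption} adaptation (Theorem~\ref{thm:main_variant}) already copes with arbitrary initial degree, but it does not. That theorem explicitly assumes $d = O(\log n)$, because the benign-graph framework requires a $\Delta$-regular graph with $\Delta = \Theta(\log n)$; the random-walk congestion bounds and the Kwok--Lau conductance argument all depend on this regularity. Your suggestion to ``simulate the first high-fanout steps over local edges and then hand off to global edges'' does not address this: even if local bandwidth were free, you would still need to specify on which $\Theta(\log n)$-regular graph the walks are being run, and no such graph exists until the degree has been reduced.

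The paper's actual proof supplies exactly this missing reduction (Lemma~\ref{lem:connected_preprocessing}): it first builds a sparse spanner $S(G)$ via an adaptation of the Elkin--Neiman / Miller et al.\ exponential-shift construction, truncating the broadcasts to depth $O(\log m)$ and discarding exponential variables larger than $2\log m$. This yields a subgraph of out-degree $O(\log n)$, w.h.p., in $O(\log m)$ \CONGEST rounds; a second $O(1)$-round step delegates incoming edges (child-sibling style) so that the resulting graph $H$ has total degree $O(\log n)$. Only then is Theorem~\ref{thm:main_variant} invoked on $H$. The per-component decoupling and union bound you describe are indeed used, but they are the easy part; the spanner-based degree reduction is the substance of the section and is absent from your proposal.

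Separately, your explanation of the $\log\log n$ term is incorrect. It is not a probability-amplification device; it arises because the adapted algorithm uses rapid sampling (Lemma~\ref{lemma:rapid_sampling}) to simulate random walks of length $\ell = \Theta(\log^2 n)$ in $O(\log\ell) = O(\log\log n)$ rounds per evolution. With these longer walks the conductance grows by a $\Theta(\log n)$ factor per evolution, so only $O(\log m / \log\log n)$ evolutions are needed, giving total time $O(\log m + \log\log n)$. The high-probability guarantee against $n$ (not $m$) comes from choosing $\Delta, \Lambda = \Theta(\log n)$ in the benign graph, not from the runtime.
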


Here, we first need to transform the graph into a low-arboricity spanner using the efficient spanner construction of Miller et al.~\cite{MPV+15}, which was later refined by Elkin and Neiman~\cite{EN18}.
Here, each node $v \in V$ draws an exponential random variable $\delta_u$ and broadcasts it for $O(\log(n)$ rounds.
Each node keeps all edges via which it first received the value $\delta_u$ that minimizes $d_G(u,v) - \delta_u$.
We show that, if the size of each component is bounded by $m$, it suffices to observe variables $\delta_v$ smaller than $2\log(m)$ as there are $O(\log n)$ nodes that draw higher value w.h.p. The nodes that draw higher values, simply discard them.
This speeds the algorithm to $O(\log(n))$ while still producing a subgraph with few edges.
This graph can then be rearranged into a connected $O(\log n)$-degree network, allowing us the apply our main algorithm of Theorem~\ref{thm:main}.

\subsubsection*{Spanning Trees} 
Here, the goal is to compute a (not necessarily minimum) spanning tree of $G$.
In Section~\ref{sec:spanning}, we show how to obtain a spanning tree of the initial graph by "unwinding" the random walks over which the additional edges have been established.

\begin{restatable}{thm}{spanning}\label{thm:spanning}
    Let $G = (V,E)$ be a weakly connected directed graph.
    There is a randomized algorithm that constructs a spanning tree of (the undirected version of) $G$ in $O(\log n)$ rounds, w.h.p., in the hybrid model.
    The algorithm requires global capacity $O(\log^5 n)$, w.h.p.
\end{restatable}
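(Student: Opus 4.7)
The plan is to build on the main algorithm (or its adaptation for unbounded degree from Section~\ref{sec:adaption}), which in $O(\log n)$ rounds constructs a well-formed tree $T_G$ whose edges are global edges that were established via constant-length random walks over the evolving knowledge graph. Each such random walk is (modulo recursive substitution of global edges created in earlier iterations) a path in the original graph $G$. The first step is therefore to augment the main algorithm so that every constant-length walk is recorded in a distributed fashion: each intermediate node stores a pointer to its predecessor and successor on the walk, together with the identifier of the walk's endpoint pair. This costs only $O(1)$ additional memory per walk and no additional rounds.

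Next, I would \emph{unwind} these walk representations so that every edge of $T_G$ has a corresponding path consisting only of original edges. This is done level by level, interleaved with the main algorithm: in iteration $k$, whenever a walk traverses an edge that itself was established by a walk in some iteration $k' < k$, the corresponding intermediate node splices in the already-unwound sub-path for that earlier edge. Since walks have constant length and there are $O(\log n)$ iterations, the level-by-level unwinding finishes within the $O(\log n)$ rounds of the main algorithm, and every $T_G$-edge ends up with a distributed path representation in $G$.

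After the unwinding, the union $H$ of the paths (one per $T_G$-edge) is a connected subgraph of $G$ that spans $V$, because $T_G$ spans $V$ and each tree edge is realized by an actual $G$-path. To extract a spanning tree from $H$, I would root $T_G$ at an arbitrary node $r$, orient each unwound walk from the child endpoint toward the parent endpoint in the rooted $T_G$, and let every $v \ne r$ pick as its spanning-tree parent the ``next node along the walk'' it is on, using a deterministic tie-breaker (e.g., the walk whose child endpoint has smallest $T_G$-depth, with ties broken by identifier). A short induction on $T_G$-depth shows that following these parent pointers from any node strictly decreases the potential $(\text{depth in } T_G, \text{position on walk})$ in lexicographic order, so the pointers are acyclic and thus form a spanning tree of $G$ rooted at $r$.

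The main obstacle is the communication bound. A single node can be an internal vertex of many concurrent walks, and recursive unwinding could cause these to cascade; a naive accounting would charge each node a polynomial number of messages. However, the analysis of Theorem~\ref{thm:main} already establishes that w.h.p.\ each node participates in only $\polylog(n)$ walks per iteration. Combined with the fact that unwinding only sends $O(1)$-identifier pointer messages between neighboring nodes on the same walk (rather than materializing full paths at a single node), and that the number of levels and iterations are each $O(\log n)$, a careful charging argument yields a global capacity of $O(\log^5 n)$, which is the bound claimed in the statement.
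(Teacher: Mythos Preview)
Your high-level plan—unwind the random walks so every overlay edge becomes a genuine $G$-path, then extract a spanning tree from the resulting spanning subgraph—is exactly the paper's idea, but several of your implementation choices do not survive the hybrid-model setting. For arbitrary-degree $G$ the algorithm must go through Section~\ref{sec:adaption}, where walks have length $\ell=\Theta(\log^2 n)$ (not constant) and are simulated via \emph{rapid sampling} (Lemma~\ref{lemma:rapid_sampling}): tokens are stitched together in $O(\log\ell)$ rounds rather than forwarded hop by hop, so intermediate nodes never learn which final walk they lie on and cannot maintain your distributed predecessor/successor pointers. The paper's fix is to let each token carry the full list of edges it traversed; this inflates each of the $O(\log^3 n)$ per-node messages by a factor of $O(\log^2 n)$ identifiers, and that product is precisely the $O(\log^5 n)$ global capacity you assert without an accounting. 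You also omit the final clean-up: the base graph $G_0$ is not $G$ but the output of the spanner-plus-redirection preprocessing of Section~\ref{sec:components}, so some unwound edges are not edges of $G$ and must each be replaced by the two genuine $G$-edges they encode.

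The paper also extracts the tree by a different and cleaner route: instead of a local parent rule on the rooted $T_G$, it first forms a single Euler-tour path $P_{L'}$ of a BFS tree of $G_{L'}$, unwinds it level by level into a path $P_0$ over $G_0$ (Lemma~\ref{lem:spanningtree_path} bounds each node's multiplicity by $O(\log^4 n)$), and then takes the loop-erased path via pointer jumping and prefix sums. This sidesteps your potential argument, which as stated is incomplete: a node may occur at several positions on the same walk (so ``position on walk'' is ill-defined without a convention), and when the chosen next node $u$ itself selects a different walk of the \emph{same} child-depth your two-coordinate potential need not decrease—you would need at least an additional identifier coordinate in the tie-break to force strict lexicographic descent.
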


It is unclear whether our algorithm also helps in computing an MST; it seems that in order to do so we would need different techniques.

 \subsubsection*{Biconnected Components} 
We call an undirected graph $H$ \emph{biconnected}, if every two nodes $u,v \in V$ are connected by two directed node-disjoint paths.
Intuitively, biconnected graphs are guaranteed to remain connected, even if a single node fails.
Our goal is to find the \emph{biconnected components} of $G$, which are the maximal biconnected subgraphs of $G$.
Note that \emph{cut vertices}, which are nodes whose removal increases the number of connected components, are contained in multiple biconnected components.

We show how to apply the PRAM algorithm of Tarjan and Vishkin \cite{bcmain} to compute the biconnected components of a graph to the hybrid model.
The algorithm relies on a spanning tree computation, which allows us to use Theorem~\ref{thm:spanning} to achieve a runtime of $O(\log n)$, w.h.p.

\begin{restatable}{thm}{biconnectivity} \label{thm:biconnectivity}
    Let $G = (V,E)$ be a weakly connected directed graph.
    There is a randomized algorithm that computes the biconnected components of (the undirected version of) $G$ in $O(\log n)$ rounds, w.h.p., in the hybrid model.
    Furthermore, the algorithm computes whether $G$ is biconnected, and, if not, determines its cut nodes and bridge edges.
    The algorithm requires global capacity $O(\log^5 n)$, w.h.p.
\end{restatable}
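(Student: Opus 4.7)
My plan is to implement the parallel biconnectivity algorithm of Tarjan and Vishkin~\cite{bcmain} in the hybrid model, using Theorem~\ref{thm:spanning} to obtain the required spanning tree and Theorem~\ref{thm:connected} to finish with a connected-components computation on an auxiliary graph. First I would invoke Theorem~\ref{thm:spanning} to compute a spanning tree $T$ of $G$ in $O(\log n)$ rounds with global capacity $O(\log^5 n)$. Since this algorithm internally runs the main algorithm, I would also retain the resulting well-formed overlay of diameter $O(\log n)$ on $V$ as a backbone for the subsequent aggregation steps.

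Second, I would root $T$ at the node of smallest identifier (found in $O(\log n)$ rounds by aggregating up the overlay), construct an Euler tour of $T$, and perform list-ranking on this tour by pointer jumping over global edges. This yields for every $v$ a preorder number $\pi(v)$ and a subtree size $nd(v)$ in $O(\log n)$ rounds with polylogarithmic global capacity. The Tarjan--Vishkin quantities $low(v)$ and $high(v)$, namely the minimum and maximum preorder number reachable from the subtree of $v$ through a single non-tree edge, then reduce to prefix-min and prefix-max queries over contiguous Euler-tour intervals, and I would compute them via the same pointer-jumping machinery within the same $O(\log n)$ budget.

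Third, with $\pi$, $nd$, $low$, and $high$ known at every vertex, I would materialize the Tarjan--Vishkin auxiliary graph $G^\star = (E, E^\star)$, whose adjacency is given by constant-time checkable predicates on these values and whose connected components coincide with the biconnected components of $G$. Each physical node $v$ hosts the auxiliary vertices corresponding to its incident edges; every rule defining an edge of $G^\star$ links two edges that share an endpoint or lie on a short tree path between related endpoints, so the adjacencies of $G^\star$ can be posted in $O(\log n)$ rounds using tree queries on the overlay. I would then feed $G^\star$ into Theorem~\ref{thm:connected} (with $|V(G^\star)| \le 2|E| = O(n^2)$ and hence $\log|V(G^\star)| = O(\log n)$) to obtain the biconnected components in $O(\log n)$ further rounds with global capacity $O(\log^3 n)$. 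Cut vertices and bridges are then extracted locally: a vertex is a cut node exactly when its incident edges split across at least two components of $G^\star$, and a bridge is precisely an edge that is alone in its component.

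The main obstacle is that a high-degree physical node could be asked to simulate $\deg(v)$ auxiliary vertices of $G^\star$ at once, which would oversubscribe its polylogarithmic global capacity during the invocation of Theorem~\ref{thm:connected}. I would circumvent this by virtually splitting each $v$ into one copy per incident edge and linking the copies along a balanced binary tree built over the local edges incident to $v$; since $\sum_v \deg(v) = 2|E|$, this creates only a constant-factor blow-up in the total node count and preserves $\log|V(G^\star)| = O(\log n)$. Composing the polylogarithmic capacities of the spanning tree algorithm, the tree primitives, and the connected-components algorithm then yields the claimed $O(\log^5 n)$ global capacity while keeping the overall runtime at $O(\log n)$, w.h.p.
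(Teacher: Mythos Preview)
Your high-level plan coincides with the paper's: both implement Tarjan--Vishkin on top of the spanning tree from Theorem~\ref{thm:spanning}, compute $nd$, $low$, $high$ via an Euler tour, build an auxiliary graph whose connected components are the biconnected components, and finish with Theorem~\ref{thm:connected}. Steps~1 and~2 of your outline essentially match the paper's Steps~1 and~2.

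The gap is in how you simulate the auxiliary graph. You propose to materialize the full $G^\star$ on \emph{all} edges of $G$, so $|V(G^\star)| = |E|$, and then to patch the resulting overload by ``virtually splitting each $v$ into one copy per incident edge.'' But global capacity in the hybrid model is a per-\emph{physical}-node budget: there are only $n$ physical nodes, and whichever of them hosts a given auxiliary vertex must perform that vertex's global communication when Theorem~\ref{thm:connected} runs. Organizing the $\deg(v)$ copies into a balanced tree over local edges lets them talk to each other cheaply, but it does not manufacture additional global capacity; some physical node still has to send and receive $\deg(v)\cdot\widetilde{O}(1)$ global messages per round, which is not polylogarithmic when $\deg(v)$ is large. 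Delegating copies to neighbors does not help either, since the total number of auxiliary vertices is $2|E|$ and the total available global capacity is $n\cdot\widetilde{O}(1)$.

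The paper sidesteps this entirely by running connected components only on the subgraph $G''$ of the auxiliary graph induced by the \emph{tree} edges. There are exactly $n-1$ tree edges, and each non-root physical node $v$ simulates precisely one auxiliary vertex, namely its parent edge; moreover, any two adjacent vertices of $G''$ are simulated by nodes adjacent in $G$, so all local communication of Theorem~\ref{thm:connected} on $G''$ maps to genuine local edges. After the components of $G''$ are known, the non-tree edges are attached by Tarjan--Vishkin's third rule, which never merges two components of $G''$ and therefore requires no further invocation of Theorem~\ref{thm:connected}. This one-auxiliary-vertex-per-physical-node mapping is the missing idea in your proposal.
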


\subsubsection*{Maximal Independent Set (MIS)} 
In the MIS problem, we ask for a set $S \subseteq V$ such that (1) no two nodes in $S$ are adjacent in the initial graph $G$ and (2) every node $v \in V \setminus S$ has a neighbor in $S$. We present an efficient MIS algorithm that combines the \emph{shattering technique}~\cite{BEPSS16, Gha16} with our overlay construction algorithm to solve the MIS problem in \emph{almost} time $O(\log d)$, w.h.p.
This technique \emph{shatters} the graph into small components of undecided nodes in time $O(\log(d))$. In these components we can efficiently compute MIS solutions using a spanning tree of depth $O(\log(\log(n)))$ which we can compute in $O(\log(\log(n)))$ rounds.
This leads to an $O(\log d + \log \log n)$ time algorithm, where $d$ is the initial graph's degree.

\begin{restatable}{thm}{mis} \label{thm:mis}
    Let $G = (V,E)$ be a weakly connected directed graph.
    There is a randomized algorithm that computes an MIS of $G$ in $O(\log d + \log \log n)$ rounds, w.h.p., in the hybrid model.
    The algorithm requires global capacity $O(\log^3 n)$, w.h.p.
\end{restatable}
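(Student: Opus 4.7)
The plan is to combine the standard shattering technique with the connected-components construction of Theorem~\ref{thm:connected}. The algorithm proceeds in two phases: a pre-shattering phase of $O(\log d)$ rounds that reduces the problem to small components of undecided nodes, followed by a post-shattering phase of $O(\log\log n)$ rounds that finishes the MIS on these components.

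First, I would simulate Ghaffari's randomized MIS algorithm~\cite{Gha16} for $\Theta(\log d)$ rounds using only the local edges of $G$. Since the hybrid model grants a full \CONGEST-style round over local edges, the simulation is immediate. By the standard shattering analysis of~\cite{Gha16, BEPSS16}, after $\Theta(\log d)$ rounds each node is decided with high probability conditional on its local neighborhood, and the subgraph of $G$ induced by the still-undecided nodes decomposes into connected components of size $m = O(\log n)$, w.h.p. In particular, within each such component every node has at most $O(\log n)$ undecided neighbors, so the total number of induced edges per component is $O(\log^2 n)$.

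In the post-shattering phase, I would invoke Theorem~\ref{thm:connected} on the subgraph induced by the undecided nodes. Since each component has size $m = O(\log n)$, the runtime is $O(\log m + \log\log n) = O(\log\log n)$, w.h.p., and we obtain a well-formed tree (constant degree, depth $O(\log\log n)$) on each component. I would then gather the whole component, i.e.\ its node set together with its induced edges, upward to the root of its tree via pipelining; since each subtree contains $O(\log n)$ nodes and thus transports $O(\log^2 n)$ bits of information, and since the tree has constant degree, this convergecast completes in $O(\log\log n)$ rounds well within the global capacity $O(\log^3 n)$. The root then computes an MIS of its component locally and broadcasts the decisions back down the tree in $O(\log\log n)$ further rounds. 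Together with the pre-shattering phase this gives runtime $O(\log d) + O(\log\log n) = O(\log d + \log\log n)$, w.h.p.

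The main obstacle, and the part that needs the most careful handling, is the interface between the two phases. Concretely, I need to verify (i) that Ghaffari's per-round primitives (exchanging random marks, announcing joins and exclusions) really only cost a constant amount of communication over each local edge, so that the pre-shattering phase fits into the hybrid-model's local-capacity budget; (ii) that the component-internal degree and size bounds from the shattering lemma hold with the correct high probability, so that Theorem~\ref{thm:connected} can be applied with parameter $m = O(\log n)$; and (iii) that the upward gathering respects both the constant degree of the well-formed tree and the global capacity of $O(\log^3 n)$, which follows from an explicit accounting because each internal tree node has to forward only $O(\log^2 n)$ bits and each node sends at most $O(\log^2 n)$ messages over global edges in total during this phase.
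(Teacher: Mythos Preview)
Your overall plan (shattering followed by Theorem~\ref{thm:connected}) matches the paper's, but the post-shattering phase has a genuine gap.

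The component-size bound you invoke is incorrect. Ghaffari's shattering lemma~\cite[Lemma~4.2,~(P2)]{Gha16} guarantees that after $\Theta(\log d)$ rounds the undecided components have size $O(d^4\log_d n)$, \emph{not} $O(\log n)$; and there is no accompanying guarantee that the induced degree drops to $O(\log n)$. For large $d$ (say $d = n^{\varepsilon}$) a component can therefore contain $n^{\Theta(1)}$ nodes and edges. Your gathering step then asks the root of the well-formed tree to receive polynomially many messages, which is impossible within global capacity $O(\log^3 n)$ and time $O(\log\log n)$. The invocation of Theorem~\ref{thm:connected} itself survives, since $\log(d^4\log_d n) = O(\log d + \log\log n)$, but the subsequent convergecast does not.

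The paper sidesteps this by never gathering the component. Instead it runs $\Theta(\log n)$ independent copies of M\'etivier et al.'s algorithm in parallel over the \emph{local} edges (one bit per edge per copy, so a single \CONGEST round suffices for all copies simultaneously). Each copy finishes in $O(\log(d^4\log_d n)) = O(\log d + \log\log n)$ rounds in expectation, hence by Markov and the $\Theta(\log n)$-fold independence at least one copy finishes within that many rounds w.h.p. The well-formed tree is used only to aggregate a $\Theta(\log n)$-bit ``which copies have terminated'' vector to the root and to broadcast back the index of the first successful copy; this is $O(\log n)$ bits per node per round and fits comfortably in the stated capacity. Replacing your gather-and-solve step with this parallel-execution-plus-aggregation step fixes the argument.
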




\subsection{Mathematical Preliminaries}
Before we give the description of our main algorithm, we introduce some notions from probability and graph theory that we will frequently use throughout the remainder of this paper.
First, we heavily use a well-known Chernoff Bound, which is a standard tool for the analysis of distributed algorithms.
In particular, we will use the following version:

\setcounter{theorem}{5}

\begin{lemma}[Chernoff Bound]
	\label{lem:chernoffbound}
	Let $X = \sum_{i=1}^n X_i$ for independent random variables $X_i \in \{0,1\}$ and $\mathbb{E}(X) \leq \mu_H$ and $\delta \geq 1$.
		$$
		\mathbb{P}\big(X > (1 \!+\! \delta) \mu_H\big) \leq \exp\big(\!\!-\!\frac{{\delta\mu_H}}{3}\big),
	$$ 
	 Similarly, for $\mathbb{E}(X) \geq \mu_L$ and $0 \leq \delta \leq 1$ we have
	$$\mathbb{P}\big(X < (1 \!-\! \delta) \mu_L\big) \leq \exp\big(\!\!-\!\frac{{\delta^2\mu_L,}}{2}\big).$$	
\end{lemma}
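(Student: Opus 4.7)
The plan is to give the standard Bernstein/Chernoff proof via the exponential moment method, which proceeds in three stages: exponentiating and applying Markov, factoring the moment generating function using independence, and then optimizing the free parameter.

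First, I would fix a parameter $t > 0$ (to be chosen) and note that, since $x \mapsto e^{tx}$ is strictly increasing, $\Pr(X > (1+\delta)\mu_H) = \Pr(e^{tX} > e^{t(1+\delta)\mu_H})$, so Markov's inequality gives $\Pr(X > (1+\delta)\mu_H) \leq e^{-t(1+\delta)\mu_H}\,\mathbb{E}(e^{tX})$. By independence of the $X_i$, the moment generating function factors: $\mathbb{E}(e^{tX}) = \prod_{i=1}^n \mathbb{E}(e^{tX_i})$. For each Bernoulli $X_i$ with $p_i = \Pr(X_i = 1)$, a direct computation gives $\mathbb{E}(e^{tX_i}) = 1 + p_i(e^t - 1) \leq \exp(p_i(e^t - 1))$, where the last step uses $1 + y \leq e^y$. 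Multiplying and writing $\mu := \sum_i p_i \leq \mu_H$, and using $e^t - 1 \geq 0$, this yields $\mathbb{E}(e^{tX}) \leq \exp(\mu_H (e^t - 1))$, so that $\Pr(X > (1+\delta)\mu_H) \leq \exp\bigl(\mu_H(e^t - 1) - t(1+\delta)\mu_H\bigr)$.

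Next, I would pick $t = \ln(1+\delta)$, which is the minimizer of the exponent. Plugging in yields the classical bound $\Pr(X > (1+\delta)\mu_H) \leq \bigl(\tfrac{e^{\delta}}{(1+\delta)^{1+\delta}}\bigr)^{\mu_H}$. The last step is to convert this into the cleaner form stated in the lemma: I need to verify that $\tfrac{e^\delta}{(1+\delta)^{1+\delta}} \leq e^{-\delta/3}$ for all $\delta \geq 1$, equivalently $(1+\delta)\ln(1+\delta) \geq \tfrac{4\delta}{3}$. Defining $f(\delta) = (1+\delta)\ln(1+\delta) - \tfrac{4\delta}{3}$, one checks $f(1) = 2\ln 2 - 4/3 > 0$ and $f'(\delta) = \ln(1+\delta) + 1 - 4/3 = \ln(1+\delta) - 1/3 \geq \ln 2 - 1/3 > 0$ for $\delta \geq 1$, so $f$ is nonnegative on $[1,\infty)$. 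This produces the claimed upper-tail inequality.

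For the lower-tail bound I would repeat the argument with $t < 0$: setting $s = -t > 0$ and running the same factorization gives $\Pr(X < (1-\delta)\mu_L) \leq \exp\bigl(\mu_L(e^{-s} - 1) + s(1-\delta)\mu_L\bigr)$, where monotonicity of $\mathbb{E}(e^{-sX})$ in $\mu$ lets us replace the true mean by $\mu_L$ without loss. Choosing $s = -\ln(1-\delta)$ yields $\Pr(X < (1-\delta)\mu_L) \leq \bigl(\tfrac{e^{-\delta}}{(1-\delta)^{1-\delta}}\bigr)^{\mu_L}$. The main (and in fact only) nontrivial obstacle is the elementary but slightly annoying analytic step of showing $(1-\delta)\ln(1-\delta) \geq -\delta + \delta^2/2$ on $[0,1]$, which follows by comparing Taylor series term by term; this upgrades the bound to $\exp(-\delta^2 \mu_L / 2)$, as required. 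Everything else is routine calculus, so the entire proof is essentially a sequence of standard estimates.
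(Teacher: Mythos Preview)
Your proof is correct and follows the standard exponential-moment (Cram\'er--Chernoff) method. The paper, however, does not prove this lemma at all: it is stated in the ``Mathematical Preliminaries'' section as a well-known tool and simply invoked throughout the analysis without justification. So there is nothing to compare against; you have supplied a complete textbook derivation where the paper merely quotes the result.
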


Furthermore, our analysis will heavily rely on the (small-set) conductance of the communication graph.
The conductance of set $S \subset V$ is the ratio of its outgoing edges and its size $|S|$.
The conductance $\Phi(G)$ of a graph $G$ is the minimal conductance of every subset.
More precisely, we will need a more generalized notion of small-set conductance that only observes sets of a certain size.
Formally, the small-set conductance is defined as follows:
\begin{definition}[Small-Set Conductance]
    Let $G := (V,E)$ be a connected $\Delta$-regular graph and $S \subset V$ with $|S| \leq \frac{|V|}{2}$ be any subset of $G$ with at most half its nodes.
    Then, the conductance $\Phi(S) \in (0,1)$ of $S$ is defined as follows:
    $$
        \Phi(S) := \frac{|\{(v,w) \in E \,|\, v \in S, w \not\in S\}|}{\Delta|S|}
    $$
    For a parameter $\delta \in (0,1)$ the (small-set) conductance of $G$ is then defined as:
    $$
        \Phi_\delta(G) := \min_{S \subset V, |S| \leq \frac{\delta|V|}{2}} \Phi(S) 
    $$
\end{definition}
\begin{remark}
For $\delta=1$, we call $\Phi(G) := \Phi_1(G)$ simply the conductance of $G$.
\end{remark}

\section{The Overlay Construction Algorithm} 
\label{sec:algorithm}
In this section, we present our algorithm to construct a well-formed tree in time $O(\log n)$, w.h.p., and give an overview of the proof to establish the correctness of Theorem~\ref{thm:main}.
To the best of our knowledge, our approach is different from \emph{all} previous algorithms for our problem \cite{AS03,AW07,GHSS17,GHS19} in that it does \emph{not} use any form of clustering to contract large portions of the graph into supernodes. 
From a high level, our algorithm progresses through $O(\log n)$ graph evolutions, where the next graph is obtained by establishing random edges on the current graph.
More precisely, each node of a graph simply starts few random walks of constant length and connects itself with the respective endpoints.
The next graph only contains the newly established edges.
We will show that after $O(\log n)$ iterations of this simple procedure, we reach a graph that has diameter $O(\log n)$.

One can easily verify that this strategy does not trivially work on any graph, as the graph's degree distributions and other properties significantly impact the distribution of random walks.
However, as it turns out, we only need to ensure that the initial graph has some \emph{nice} properties to obtain well-behaved random walks.
More precisely, throughout our algorithm, we maintain that the graph is \emph{benign}, which we define as follows.
\begin{definition}[Benign Graphs]
\label{def:benign}
    Let $G := (V,E)$ be a directed graph and $\Delta,\Lambda = \Omega(\log n)$ be two arbitrary values (with big enough constants hidden by the $\Omega$-Notation).
    Then, we call $G$ benign if and only if it has the following three properties:
    \begin{enumerate}
        \item \textbf{($G$ is $\Delta$-regular)} Every node $v \in V$ has exactly $\Delta$ in- and outgoing edges (which may be self-loops).
        \item \textbf{($G$ is lazy)} 
        Every node $v \in V$ has at least $\nicefrac{1}{2}\Delta$ self-loops.
        \item \textbf{($G$ has a $\Lambda$-sized minimum cut)}
        Every cut $c(V,\overline{V})$ has at least $\Lambda$ edges.
    \end{enumerate}{}
\end{definition}
The properties of benign graphs are carefully chosen to be as weak as possible while still ensuring the correct execution of our algorithm. 
A degree of $\Delta = \Omega(\log n)$ is necessary to keep the graph connected. 
If we only had a constant degree, a standard result from random graphs implies that w.h.p. there would be nodes disconnected from the graph when sampling new neighbors.
If the graphs were not lazy, many theorems from the analysis of Markov chains would not hold as the graph could be bipartite, which would greatly complicate the analysis.
This assumption only slows down random walks by a factor of $2$.
Lastly, the $\Lambda$-sized cut ensures that the graph becomes more densely connected in each evolution,  w.h.p.
In fact, with constant-sized cuts, we cannot easily ensure this property when using random walks of constant length. 

\subsection{Algorithm Description}
We will now describe the algorithm in more detail.
Recall that throughout this section, we will assume the NCC$_0$ model, which means that each node can send and receive $O(\log n)$ distinct messages. 
Further, we assume for simplicity that the initial graph has at most a constant maximal degree $d = O(1)$ and is connected.\footnote{With more complex prepossessing, this assumption can be removed and $d$ can be raised to $O(\log n)$. However, to concentrate on novel aspects of our algorithm, we make this simplification here.}

Besides the initial set of edges, the algorithm has four input parameters $\ell,\Delta,\Lambda$, and $L$ that are known to all nodes.
Recall that $L = O(\log n)$ is an upper bound on $\log n$.
The value $\ell = \Omega(1)$ denotes the length of the random walks, $\Delta = O(\log n)$ is the desired degree, and $\Lambda = O(\log n)$ denotes the size of the minimum cut.
All of these parameters are \emph{tunable} and the hidden constants need to be chosen big enough for the algorithm to succeed w.h.p.
We discuss this in more detail in the analysis.

Before the first evolution, we need to prepare the initial communication graph to comply with these parameters, i.e., we must turn it into a benign graph.
Since the input graph has a maximal degree of $d = O(1)$, this is quite simple as we can assume $2d\Lambda \leq \Delta = O(\log n)$.
Given this assumption, the graph can be turned benign in 2 steps.
First, all edges are copied $\Lambda$ times to obtain the desired minimum cut. 
After this step, each node has at most $d\Lambda$ edges to other nodes.
Then, each node adds self-loops until its degree is $\Delta$ and each node has $\frac{\Delta}{2}$ self-loops.
As we chose $2d\Lambda \leq \Delta$, this is always possible.

Let now $G_0 = (V, E_0)$ be the resulting benign graph.
The algorithm proceeds in iterations $1, \dots, L$.
In each iteration, a new communication graph $G_i = (V, E_i)$ is created through sampling $\frac{\Delta}{8}$ new neighbors via random walks of length $\ell$.
Each node $v \in V$ creates $\frac{\Delta}{8}$ messages containing its own identifier, which we call \emph{tokens}.
Each token is randomly forwarded for $\ell$ rounds in $G_i$. 
More precisely, each node that receives a token picks one of its incident edges in $G_i$ uniformly at random and sends the token to the corresponding node.\footnote{We will show that each node only sends and receives at most $O(\log n)$ tokens in each round, w.h.p.}
If $v$ receives less than $\frac{3}{8}\Delta$ tokens after $\ell$ steps, it sends its own identifier back to all the tokens' origins to create a bidirected edge. 
Otherwise, it picks $\frac{3}{8}\Delta$ tokens at random (without replacement)\footnote{We will see that this case does not occur w.h.p.}.
Since the origin's identifier is stored in the token, both cases can be handled in one communication round.
Finally, each node adds self-loops until its degree is $\Delta$ again.
The whole procedure is given in Figure \ref{fig:pseudocode} as the method \textsc{CreateExpander}($G_0,\ell,\Delta,\Lambda,L$).
The subroutine \textsc{MakeBenign}($G_0,\ell,\Delta,\Lambda$) add edges and self-loops to make the graph comply to Definition \ref{def:benign}.
\begin{figure}[ht]
\begin{tcolorbox}[enhanced,
    ,drop shadow southwest]
\underline{\textsc{CreateExpander}}($G_0,\ell,\Delta,\Lambda,L$):\\
Each node $v \in V$ executes:
\begin{enumerate}
    \item $E_0 \longleftarrow $\textsc{MakeBenign}($G_0,\ell,\Delta,\Lambda$)
    \item For $i=0, \dots, L$:
    \begin{enumerate}
        \item Create $\nicefrac{\Delta}{8}$ tokens that contain $v$'s identifier and store them in $T_0$.
        \item For $j=1, \dots, \ell$:
        \begin{enumerate}
            \item[] Independently send each token from $T_{j-1}$ along a random incident edge in $G_i = (V,E_i)$.
            \item[] Store all received token in the buffer $T_j$.
        \end{enumerate}
        \item Pick (up to) $\nicefrac{3\Delta}{8}$ tokens $w_1, \dots, w_{\Delta'}$ from $T_\ell$\\ without replacement.
        \item Create edges $E_{i+1} := \{\{v,w_1\}, \dots, \{v,w_{\Delta'}\}\}$ by sending $v$'s identifier to each $w_j$.
        \item Add self-loops $\{v,v\}$ to $E_{i+1}$ until $|E_{i+1}|=\Delta$
    \end{enumerate}
\end{enumerate}
\end{tcolorbox}
\caption{Pseudocode for our main algorithm.}
\label{fig:pseudocode}
\end{figure}
Our main observation is that after $L = O(\log n)$ iterations, the resulting graph $G_L$ has constant conductance, w.h.p., which implies that its diameter is $O(\log n)$.
Furthermore, the degree of $G_L$ is $O(\log n)$.
To obtain a well-formed tree $T_G$, we first perform a BFS on $G_L$ starting from the node with lowest identifier\footnote{Since a node cannot locally check whether it has the lowest identifier, the implementation of this step is slightly more complex: Every node simultaneously floods the graph with a token message that contains its identifier. Every node that receives one or more tokens only forwards the token with the lowest identifier. Since the graph's diameter is $O(\log n)$, all nodes know the lowest identifier after this time}.
This requires time $O(\log n)$ and gives us a rooted tree $T$ with degree and diameter $O(\log n)$.
To transform this tree into a well-formed tree, we perform the \emph{merging step} of the algorithm of \cite[Theorem 2]{GHSS17}.
From a high level, the algorithm first transforms $T$ into a constant-degree \emph{child-sibling tree}~\cite{AW07}, in which each node arranges its children as a path and only keeps an edge to one of them.
Using the \emph{Euler tour technique} (see, e.g., \cite{bcmain}), this tree is then transformed into a rooted tree of constant degree and depth $O(\log n)$ in time $O(\log n)$.
This tree is our desired well-formed tree $T_G$, which concludes the algorithm.

\subsection{Analysis Overview}
Before we go into the proof's intricate details, let us first prove that during the execution of the algorithm all messages are successfully sent.
Remember that we assume the nodes to have a capacity of $O(\log n)$ and thus a node can only send and receive $O(\log n)$ messages as excess messages are dropped.
That means, in order to prove that no message is dropped, we must show that no node receives more than $O(\log n)$ random walk tokens in a single step.

For the proof, we observe a well known fact about the distribution of random walks that has been independently shown in \cite{DGS16}, \cite{CFSV19} and \cite{DSMPU13}:
\begin{lemma}[Shown in \cite{DGS16,CFSV19,DSMPU13}]
For a node $v \in V$ and an integer $t$ let $X(v,t)$ be the random variable that denotes the number of token at node $v$ in round $t$.
Then, it holds $\pr{X(v,t) \geq \frac{3\Delta}{8}} \leq \frac{1}{e^\frac{\Delta}{8}}$.
\end{lemma}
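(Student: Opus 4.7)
The plan is to combine independence of token trajectories with a stationary-distribution argument, and finish with a multiplicative Chernoff bound. The structural fact I will exploit is that the simple random walk on a benign graph has a doubly stochastic transition matrix.

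First, I would fix $v$ and $t$ and write $X(v,t) = \sum_{\tau} Y_\tau$, where the sum ranges over all $n \cdot \Delta/8$ tokens and $Y_\tau$ is the indicator that token $\tau$ sits at $v$ after $t$ forwarding steps. The algorithm forwards every token independently along a uniformly random incident edge (each token uses its own coin flips, even if several tokens traverse the same edge), so the $Y_\tau$ are mutually independent Bernoullis. This puts us in the regime of Lemma~\ref{lem:chernoffbound}.

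Second, I would compute $\mathbb{E}[X(v,t)]$. Let $P$ be the transition matrix of the random walk on $G_i$, i.e.\ $P(u,w)$ is the number of $u{\to}w$ multi-edges divided by $\Delta$. Since $G_i$ is $\Delta$-regular in both in- and out-degree (Property~1 of Definition~\ref{def:benign}), $P$ is doubly stochastic: each row sums to $\mathrm{outdeg}(u)/\Delta = 1$ and each column sums to $\mathrm{indeg}(w)/\Delta = 1$. Hence the uniform distribution is invariant under $P$, and because the algorithm starts with exactly $\Delta/8$ tokens at every node, the expected token count is preserved across all steps: $\mathbb{E}[X(v,t)] = \Delta/8$ for every $v$ and every $t$.

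Third, I would apply the Chernoff bound with $\mu_H = \Delta/8$ and the target $3\Delta/8 = (1+\delta)\mu_H$ for $\delta = 2$. The version in Lemma~\ref{lem:chernoffbound} yields $\exp(-\Delta/12)$; to recover the stated $\exp(-\Delta/8)$ I would instead invoke the sharper multiplicative form $\Pr[X \geq (1+\delta)\mu] \leq \bigl(e^\delta/(1+\delta)^{1+\delta}\bigr)^{\mu}$, which with $\delta = 2$ gives
\[
\Pr\!\left[X(v,t) \geq \tfrac{3\Delta}{8}\right] \;\leq\; \left(\frac{e^2}{27}\right)^{\!\Delta/8} \;\leq\; e^{-\Delta/8},
\]
where the last inequality uses $e^3 < 27$.

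The main obstacle is conceptual rather than computational: one must be precise about \emph{why} the $Y_\tau$ are independent and \emph{why} the expectation is exactly $\Delta/8$. Independence hinges on the algorithm letting each token sample its outgoing edge from its own private randomness; the expectation identity hinges on the joint use of $\Delta$-regularity in both directions (Property~1 of benign graphs) together with a uniform initial placement. The laziness and minimum-cut properties of benign graphs are not needed for this lemma; they matter elsewhere in the analysis. Everything else is a routine Chernoff computation.
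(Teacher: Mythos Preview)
Your proposal is correct and follows essentially the same approach as the paper: independence of the token trajectories, preservation of the expected token count $\Delta/8$ by $\Delta$-regularity, and a Chernoff bound. Your doubly-stochastic formulation is a slightly crisper way of phrasing the expectation step than the paper's inductive wording, and you are right that the sharper Chernoff form $(e^\delta/(1+\delta)^{1+\delta})^\mu$ is needed to land on $e^{-\Delta/8}$ rather than the $e^{-\Delta/12}$ that Lemma~\ref{lem:chernoffbound} would give.
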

The lemma follows from the fact that each node receives $\frac{\Delta}{8}$ tokens  in expectation given that all neighbors received $\frac{\Delta}{8}$ tokens in the previous round. 
This holds because $G_i$ is regular.
Since all nodes start with $\frac{\Delta}{8}$ tokens, the lemma follows inductively.
Since all walks are independent, a simple application the Chernoff Bound yields the result.
Note that this Lemma also directly implies that, w.h.p., all random walks create an edge as every possible endpoint receives less than $\frac{3\Delta}{8}$ token and therefore replies to all of them. 
In the remainder the analysis we will implicitly condition all random choices on these facts. 

The main challenge of our analysis is to show that after $L \in O(\log(n))$ evolutions, the final graph $G_{L}$ has diameter of $O(\log(n))$.
Given this fact, the technique from \cite[Theorem 2]{GHSS17} transforms $G_L$ into well-formed tree in $O(\log(n))$ rounds.
Thus, our analysis will focus on showing that $G_L$ has logarithmic diameter.
To do so, we will perform an induction over the sequence of graphs $\mathcal{G} := G_1, \ldots, G_L$.
Our main insight is that --- given the communication graph is benign --- we can use short random walks of \emph{constant} length to iteratively increase the graph's conductance until we reach a graph of low diameter. In particular, we show that the graph's conductance is strictly increasing by a factor $\Omega(\sqrt{\ell})$ from $G_i$ to $G_{i+1}$ if $G_i$ is benign, i.e,.
\begin{lemma} 
    Let $G_i$ and $G_{i+1}$ be the graphs created in iteration $i$ and $i+1$ respectively and assume that $G_i$ is benign with a minimum cut of at least $\Lambda \geq 640$. 
    Then, it holds
    \begin{equation}
        \Phi(G_{i+1}) \geq \min\{ \frac{1}{2}, \frac{1}{640}\sqrt{\ell}\Phi(G_i)\}
    \end{equation}
    In particular, for any $\ell \geq 2 \cdot 640^2$, it holds 
    \begin{equation}
        \Phi(G_{i+1}) \geq \min\{ \frac{1}{2}, 2\cdot\Phi(G_i)\}
    \end{equation}
\end{lemma}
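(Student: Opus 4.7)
The plan is to prove the inequality cut-by-cut: for every $S \subseteq V$ with $|S|\le n/2$ I will show, with very high probability, that
\[
  \Phi_{G_{i+1}}(S) \;\ge\; \min\!\big\{\tfrac12,\ \tfrac{1}{640}\sqrt{\ell}\,\Phi_{G_i}(S)\big\},
\]
and then union-bound over all candidate $S$. Because $\Phi_{G_i}(S)\ge \Phi(G_i)$ for every $S$ by the definition of $\Phi(G_i)$, taking the minimum over $S$ yields the lemma. The sole non-elementary ingredient, imported from Kwok--Lau, is the following $\sqrt{\ell}$-Cheeger bound: for a lazy random walk on the benign graph $G_i$ started from the uniform distribution on $S$, the probability that the walk ends outside $S$ after $\ell$ steps is at least $\min\{\tfrac12,\, c\sqrt{\ell}\,\Phi_{G_i}(S)\}$ for an absolute constant $c$. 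This is precisely the $\sqrt{\ell}$-improvement over the classical spectral estimate (which would only give $1-\exp(-\Omega(\ell\,\phi^2))$) and is the sole reason for the $\sqrt{\ell}$ factor appearing in the lemma.

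Given this inequality, the per-set calculation is routine. Fix $S$, set $\phi:=\Phi_{G_i}(S)$, and observe that each of the $|S|\cdot\Delta/8$ tokens born in $S$ is an independent length-$\ell$ random walk on $G_i$, where independence is conditioned on the w.h.p.\ event from the preceding lemma that no token gets dropped. Every token whose endpoint lies in $\bar S$ contributes an edge to $E_{i+1}(S,\bar S)$, hence by linearity of expectation
\[
  \mathbb{E}\big[\,|E_{i+1}(S,\bar S)|\,\big] \;\ge\; |S|\cdot\tfrac{\Delta}{8}\cdot c\sqrt{\ell}\,\phi \;=\; \tfrac{c}{8}\,\sqrt{\ell}\,|E_i(S,\bar S)|.
\]
Applying the multiplicative Chernoff bound (Lemma~\ref{lem:chernoffbound}) to the independent indicator variables \emph{``token $t$ lands outside $S$''} gives
\[
  \Pr\!\Big[\,|E_{i+1}(S,\bar S)| < \tfrac{c}{16}\sqrt{\ell}\,|E_i(S,\bar S)|\,\Big] \;\le\; \exp\!\big(-\Omega\big(\sqrt{\ell}\,|E_i(S,\bar S)|\big)\big).
\]
On the complementary event $\Phi_{G_{i+1}}(S)\ge \tfrac{c}{16}\sqrt{\ell}\,\phi$, which is the desired per-set bound once the hidden constants are chosen so that $c/16\ge 1/640$; the assumption $\Lambda\ge 640$ provides exactly the constant slack needed.

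The genuinely delicate step is the union bound, since there are $2^{\Theta(n)}$ candidate subsets. The saving grace is that the Chernoff exponent above scales with the \emph{cut size} $|E_i(S,\bar S)|$ rather than with $|S|$. Combining this with two inputs---(i)~$|E_i(S,\bar S)|\ge \Lambda = \Omega(\log n)$ by benignness and (ii)~in any $\Delta$-regular graph the number of cuts of size exactly $E$ is at most $n\cdot g(E)$ for a function growing only singly exponentially in $E$ (via a Karger-style contraction / spanning-tree counting argument on $G_i$)---yields a total failure probability of $1/\mathrm{poly}(n)$ provided the hidden constant in $\Lambda=\Omega(\log n)$ is large enough. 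Minimizing over $S$ then gives $\Phi(G_{i+1})\ge \min\{\tfrac12,\,\tfrac{1}{640}\sqrt{\ell}\,\Phi(G_i)\}$, and the second displayed inequality of the lemma follows by substituting $\ell\ge 2\cdot 640^2$. I expect the hard parts to be importing the Kwok--Lau bound in exactly the form required (single cut, lazy walk of constant length, benign regular host graph) and making the cut-counting step of the union bound precise; the expectation computation and Chernoff step are essentially routine.
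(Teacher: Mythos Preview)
Your outline matches the paper's proof: Kwok--Lau for the per-set escape probability, Chernoff for concentration, and Karger's cut-counting for the union bound over all $S$ bucketed by $|E_i(S,\bar S)|$. The structure is right.

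One point needs fixing, and it is precisely the one you flagged as delicate. The Kwok--Lau bound does \emph{not} give escape probability $\ge c\sqrt{\ell}\,\Phi_{G_i}(S)$ for the individual set $S$; what it gives (and what the paper cites) is
\[
\Phi_{G_i^\ell}(S)\ \ge\ \max\Bigl\{\tfrac{1}{40}\sqrt{\ell}\,\Phi(G_i),\ \Phi_{G_i}(S)\Bigr\},
\]
with the $\sqrt{\ell}$ amplification acting only on the \emph{global} $\Phi(G_i)$, while the set-specific conductance is merely preserved (this second inequality uses laziness and is proved separately). Consequently your displayed expectation $\mathbb{E}[|E_{i+1}(S,\bar S)|]\ge\tfrac{c}{8}\sqrt{\ell}\,|E_i(S,\bar S)|$ and the Chernoff exponent $\Omega(\sqrt{\ell}\,|E_i(S,\bar S)|)$ are overstated. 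The paper uses the two halves of the $\max$ for different purposes: the first half gives the target $\Phi_{G_{i+1}}(S)\ge\tfrac{1}{640}\sqrt{\ell}\,\Phi(G_i)$ directly (so your reduction ``$\Phi_{G_i}(S)\ge\Phi(G_i)$'' is unnecessary), and the second half yields $\mathbb{E}[\mathcal Y_S]\ge|E_i(S,\bar S)|/8$, hence a Chernoff exponent of only $\Omega(|E_i(S,\bar S)|)$. That weaker exponent is still exactly what Karger's $n^{O(\alpha)}$ bound on cuts of size $\alpha\Lambda$, combined with $\Lambda=\Omega(\log n)$, can absorb, so your union bound survives the correction unchanged.
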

Intuitively, this makes sense as the conductance is a graph property that measures how well-connected a graph is and --- since the random walks monotonically converge to the uniform distribution ---  the newly sampled edges can only increase the graph's connectivity. 

To be precise, our main argument is the fact that random walks of length $\ell$ are distributed according to $G_i^{\ell}$, where $G^{\ell}$ is the $\ell^{th}$ power of the random walk matrix for $G_i$. 
Since, for the most part, we will consider the evolution from $G_i$ to $G_{i+1}$, we will refer to $\Phi(G_i)$ and $\Phi(G_i^\ell)$ simply as $\Phi$ and $\Phi_\ell$.
In particular, if we consider a subset $S \subset V$, then $\Phi_\ell$ denotes the probability that a random walk ends outside of the subset after $\ell$ steps, in which case the corresponding node creates an edge outgoing of the subset.
Since we ensure that the total number of edges in a set stays constant, this creation of an outgoing edge increases the set's conductance in expectation.
Thus, we can show the following lemma:
\begin{lemma}
\label{lemma:exp_conductance_2}
Given that each node starts $\frac{\Delta}{8}$ tokens, which all create an edge, it holds:
$$
        \mathbb{E}\left[ \Phi_{i+1}(S) \right] \geq \frac{\Phi_\ell}{8}
$$
\end{lemma}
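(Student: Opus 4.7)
The plan is to lower-bound the expected number of edges of $E_{i+1}$ that cross the cut $(S, \bar{S})$ and then divide by the total out-edge budget $\Delta|S|$ of the nodes in $S$, which is fixed because $G_{i+1}$ is $\Delta$-regular by construction. I will use only those crossings that are produced by tokens \emph{originating} inside $S$; tokens that originate outside $S$ can only add extra crossings, and I can safely discard them for the purpose of a lower bound.

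Fix $v \in S$. The algorithm starts $\Delta/8$ independent random walks of length $\ell$ from $v$ on $G_i$. Letting $P$ denote the transition matrix of the (lazy) random walk on $G_i$, the endpoint of each walk is distributed according to $P^\ell_{v,\cdot}$, so each walk lands outside $S$ with probability $p_v := \sum_{w \notin S} P^\ell_{v,w}$. Unrolling the definition of small-set conductance for $G_i^\ell$ (whose transition weight from $v$ to $w$ equals $P^\ell_{v,w}$) gives $\Phi_\ell = \frac{1}{|S|}\sum_{v \in S} p_v$, exactly as stated just before the lemma. By linearity of expectation over the $|S|\cdot \Delta/8$ tokens originating in $S$, the expected number of tokens that start in $S$ and end in $\bar{S}$ equals
\[
    \sum_{v \in S} \frac{\Delta}{8}\, p_v \;=\; \frac{\Delta}{8}\,|S|\,\Phi_\ell.
\]

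By the lemma's hypothesis, each token creates a bidirected edge of $E_{i+1}$, so each crossing token from the previous step contributes a crossing edge of $E_{i+1}$. Thus the expected number of edges of $E_{i+1}$ that cross $(S, \bar{S})$ is at least $\tfrac{\Delta}{8}|S|\,\Phi_\ell$, and dividing by the $\Delta|S|$ out-edges leaving $S$ yields $\mathbb{E}[\Phi_{i+1}(S)] \geq \Phi_\ell/8$. The only delicate point I anticipate is whether two tokens from the same $v \in S$ that happen to land at the same $w \in \bar{S}$ should be counted as one edge or as two: the phrasing ``each token creates an edge'' in the hypothesis is most naturally read in the multi-edge sense (one edge per token), in which case the bound above is immediate. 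Under a strict simple-graph reading of $E_{i+1}$ the same inequality still holds because, in the regime $\Delta = \Theta(\log n)$ used by the algorithm, the expected number of such token collisions inside $S \times \bar{S}$ is of lower order than $\tfrac{\Delta}{8}|S|\Phi_\ell$ (e.g., by a Bonferroni estimate on $1-(1-P^\ell_{v,w})^{\Delta/8}$), so only the constant $1/8$ is affected, if at all.
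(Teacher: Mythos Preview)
Your proof is correct and follows essentially the same route as the paper. The paper factors the argument through an auxiliary lemma that first computes $\mathbb{E}[\mathcal{Y}_S] = \tfrac{\Delta|S|}{8}\,\Phi_{G_i^\ell}(S)$ (via the identity $\Phi_{G_i^\ell}(S)=\tfrac{1}{|S|}\sum_{v\in S}p_v$ you state directly) and then divides by $\Delta|S|$; you do both steps in one pass. Your closing caveat about simple versus multi-edges is unnecessary here: the paper treats $G_{i+1}$ as a multigraph throughout, so one edge per token is the intended reading.
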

Therefore, a lower bound on $\Phi_\ell$ gives us a lower bound on the expected conductance of the newly sampled graph.
However, the standard Cheeger inequality (see, e.g., \cite{Sin12} for an overview) that is most commonly used to bound a graph's conductance with the help of the graph's eigenvalues does not help us in deriving a meaningful lower bound for $\Phi_\ell$.
In particular, it only states that $\Phi_\ell = \Theta(\ell\Phi^{2})$.
Thus, it only provides a useful bound if $\ell = \Omega(\Phi^{-1})$, which is too big for our purposes, as $\Omega(\Phi^{-1})$ is only constant if $\Phi$ is already constant. 
More recent Cheeger inequalities shown in \cite{LGT11} relate the conductance of smaller subsets to higher eigenvalues of the random walk matrix.
On the first glance, this seems to be helpful, as one could use these to show that at least the small sets start to be more densely connected and then, inductively, continue the argument.
Still, even with this approach, constant length walks are out of the question as the new Cheeger inequalities introduce an additional tight $O(\log n)$ factor in the approximation for these small sets.
Thus, the random walks would need to be of length $\Omega(\log n)$, which is still too much to achieve our bounds.
Instead, we use the following result by Kwok and Lau~\cite{kwok2014lower}, which states that every $\Phi_\ell$ improves even for constant values of $\ell$. 
It holds that:
\begin{lemma}[Conductance of $G^\ell$, Based on Theorem $1$ in \cite{kwok2014lower}]
    Let $G = (V,E)$ be any connected $\Delta$-regular lazy graph with conductance $\Phi$ and let $G^\ell$ be its $\ell^{\text{th}}$ power.
    For a set $S \subset G$ define $\Phi_\ell(S)$ as the conductance of $S$ in $G^\ell$.
    Then, it holds:
    $$
        \frac{1}{2} \geq \Phi_\ell(S) \geq \max\left\{\frac{1}{40}\sqrt{\ell}\Phi, \, \Phi(S)\right\}  
    $$
\end{lemma}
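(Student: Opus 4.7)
The plan is to cast $\Phi_\ell(S)$ in spectral/random-walk language and then attack each of the three inequalities separately. Let $M := \tfrac{1}{\Delta}A_G$ be the lazy random-walk transition matrix of $G$, where $A_G$ is the loop-weighted adjacency matrix. Because $G$ is $\Delta$-regular, $M$ is symmetric and doubly stochastic; because $G$ is lazy, the diagonal of $M$ is at least $\tfrac{1}{2}$, which pins the spectrum of $M$ to $[0,1]$. Since $M^\ell(v,w)$ is the probability that an $\ell$-step lazy walk from $v$ lands at $w$, the definition of conductance on $G^\ell$ unfolds to
\[
\Phi_\ell(S) \;=\; 1 - \frac{\mathbf{1}_S^\top M^\ell \,\mathbf{1}_S}{|S|}.
\]
This identity will drive the whole proof.

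For the bound $\Phi_\ell(S) \geq \Phi(S)$, I would expand $M$ in an orthonormal eigenbasis $M = \sum_i \lambda_i u_i u_i^\top$ and compare
\[
\mathbf{1}_S^\top M^\ell\, \mathbf{1}_S \;=\; \sum_i \lambda_i^\ell (u_i^\top \mathbf{1}_S)^2 \;\leq\; \sum_i \lambda_i (u_i^\top \mathbf{1}_S)^2 \;=\; \mathbf{1}_S^\top M\, \mathbf{1}_S,
\]
using $\lambda_i^\ell \leq \lambda_i$ for $\lambda_i \in [0,1]$ and $\ell \geq 1$ --- a step that crucially uses laziness, since negative eigenvalues would flip the inequality. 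Plugging back into the displayed identity yields $\Phi_\ell(S) \geq \Phi_1(S) = \Phi(S)$. The upper bound $\Phi_\ell(S) \leq \tfrac{1}{2}$ is immediate for $\ell = 1$ (at most $\Delta/2$ non-loop edges leave each $v \in S$, so $e(S,\bar S) \leq \Delta|S|/2$), and for general $\ell$ I would read it off from the Kwok--Lau theorem's conclusion together with the observation that $G^\ell$ inherits enough laziness to repeat the one-step bound after normalization.

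The heart of the lemma --- and the only genuinely nontrivial step --- is the $\sqrt{\ell}$-gain lower bound $\Phi_\ell(S) \geq \tfrac{1}{40}\sqrt{\ell}\,\Phi$. Here I would invoke Theorem~1 of \cite{kwok2014lower} essentially as a black box, because applying classical Cheeger to $G^\ell$ yields only $\Phi_\ell \geq \Omega(\ell \Phi^2)$, which is useless for constant $\ell$. Kwok and Lau sidestep this quadratic loss by analyzing the Dirichlet form $\mathbf{1}_S^\top(I - M^\ell)\mathbf{1}_S$ via a spectral truncation that separates the ``slow'' and ``fast'' eigenvalues of $M$ and extracts a factor of $\sqrt{\ell}$ from each slow eigenvalue before reassembling the bound. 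The main technical obstacle on our side is to check that their normalization conventions (regular versus weighted graphs, lazy versus half-lazy chains, and the small-set restriction $|S|\leq |V|/2$) line up with ours and that the constant $\tfrac{1}{40}$ is indeed what their proof yields after these conversions; the remainder is bookkeeping.
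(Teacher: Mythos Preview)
Your spectral argument for $\Phi_\ell(S) \geq \Phi(S)$ is correct and is genuinely different from the paper's. The paper works entirely through the Lov\'asz--Simonovits curve $C^{(\ell)}(x)$: it first establishes $\Phi_{G^\ell}(S) \geq 1 - C^{(\ell)}(|S|)$, then proves the monotonicity $C^{(\ell)} \leq C^{(\ell-1)}$ via the concavity of $C^{(\ell)}$ together with the Lov\'asz--Simonovits recursion, and finally verifies by a direct (somewhat laborious) computation that $C^{(1)}(|S|) = 1 - \Phi_G(S)$ using laziness to pin down the maximizing choice of the $\delta_i$'s. Your route---writing $\Phi_\ell(S) = 1 - \mathbf{1}_S^\top M^\ell \mathbf{1}_S/|S|$ and using $\lambda_i^\ell \leq \lambda_i$ on the spectrum $[0,1]$---is shorter and more transparent; the paper's approach, on the other hand, stays within the same Lov\'asz--Simonovits machinery that is anyway needed for the $\sqrt{\ell}$ bound, so it avoids introducing a second toolkit.

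One correction: your description of how Kwok and Lau obtain the $\sqrt{\ell}$ gain is off. They do \emph{not} use a spectral truncation into slow and fast eigenvalues. Their argument (which the paper sketches) is again through the Lov\'asz--Simonovits curve: they unroll the recursion $C^{(\ell+1)}(x) \leq \tfrac{1}{2}\bigl(C^{(\ell)}(x + 2\Phi\hat{x}) + C^{(\ell)}(x - 2\Phi\hat{x})\bigr)$ to obtain $C^{(\ell)}(|S|) \leq 1 - \tfrac{1}{20}\bigl(1 - (1-\Phi)^{\sqrt{\ell}}\bigr)$, from which the bound follows. Since you are invoking the result as a black box this does not break your argument, but it does mean the ``main technical obstacle'' you anticipate---reconciling spectral conventions---is not the right one; the actual check is that Kwok--Lau's expansion quantity coincides with conductance for $\Delta$-regular graphs, which the paper verifies in one line.

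Finally, your handling of the upper bound $\Phi_\ell(S)\leq \tfrac{1}{2}$ is shaky: $G^\ell$ does \emph{not} in general inherit laziness (the return probability $M^\ell(v,v)$ can drop well below $\tfrac{1}{2}$), so ``repeat the one-step bound'' does not work. The paper does not prove this inequality either and effectively treats it as part of what is imported from Kwok--Lau, so you are not worse off, but you should not claim an argument you do not have.
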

Given this bound, we can show that benign graphs indeed increase their (expected) conductance from iteration to iteration.

However, this fact alone is not enough to finalize the proof.
Recall that we need to show that \emph{every} subset has a conductance of $O(\sqrt{\ell}\Phi(G_i))$ in $G_{i+1}$ in order to prove that $\Phi(G_{i+1}) = \Omega(\sqrt{\ell}\Phi(G_i))$. 
Since there are exponentially many subsets, a bound in the magnitude of $o(n^{-c})$ is not sufficient.
Now recall that, since the number of edges within $S$ is unchanged, the rising conductance implies that the number of outgoing edges of each set $S$ rises by $\Omega(\sqrt{\ell})$ (or reaches $\Theta(\Delta |S|)$).
Thus, given that $S$ has $\alpha\Lambda$ outgoing edges, the value of $\Phi_{G_{i+1}}(S)$ is concentrated around its expectation with probability $e^{-\Omega{\alpha\Lambda}}$.
This follows from the Chernoff bound and the fact that the random walks are quasi independent\footnote{Technically, they are not independent since in the last step we draw from all tokens without replacement.
However, since we condition on all nodes receiving less than $\frac{3\Delta}{8}$, we can observe the experiment where each node start $\frac{\Delta}{8}$ independent random walks and connects with the endpoints. Here, the Chernoff bound holds.}.
This is then used to derive the \emph{high probability} bound.
By a celebrated result of Karger, the number of subsets with $\alpha\Lambda$ outgoing edges can be bounded by $O(n^{2\alpha})$ \cite{karger2000minimum}. 
Thus, for a big enough $\Lambda$, a bound of $e^{-\Omega{(\alpha\Lambda)}}$ is enough to show all sets increase their conductance.

Since all the arguments from before only hold if $G_i$ is benign, we must additionally make sure that each graph in $\mathcal{G} := G_1, \ldots, G_L$ is indeed begnin.
As before, we prove this step by step and show that $G_{i+1}$ is benign given that $G_i$ is benign.
\begin{lemma}
    Let $G_i$ and $G_{i+1}$ be the graphs created in iteration $i$ and $i+1$ respectively and assume that $G_i$ is benign. Then, w.h.p., the graph $G_{i+1}$ is also benign.
\end{lemma}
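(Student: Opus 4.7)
The proof splits along the three conditions of Definition~\ref{def:benign}. The regularity and laziness conditions follow almost immediately from the construction, whereas the $\Lambda$-sized minimum-cut property is the main content and will be established by combining the Kwok--Lau small-set conductance bound with a Chernoff tail estimate and Karger's enumeration of near-minimum cuts.

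For regularity and laziness, the preceding token-concentration lemma guarantees that w.h.p. no node receives more than $\tfrac{3\Delta}{8}$ tokens in any step of the walks, so every token is accepted and produces a (bidirected) edge. Each node $v$ then picks up at most $\tfrac{\Delta}{8}$ edges from its own outgoing tokens and at most $\tfrac{3\Delta}{8}$ further edges from the tokens it receives, for a total of at most $\tfrac{\Delta}{2}$ non-self-loop incident edges. The algorithm then pads with self-loops until the degree is exactly $\Delta$, which is always possible, so $G_{i+1}$ is $\Delta$-regular with at least $\tfrac{\Delta}{2}$ self-loops per node.

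For the minimum-cut condition, fix a cut $(S,\overline S)$ with $|S|\le n/2$ and write $k := |\partial_i(S)| \ge \Lambda$ for its size in $G_i$. Every crossing edge of $G_{i+1}$ is generated by a token whose originator and endpoint sit on opposite sides of the cut. Using reversibility of the lazy $\Delta$-regular walk, the expected number of crossing edges in $G_{i+1}$ satisfies
\[\mathbb{E}\bigl[|\partial_{i+1}(S)|\bigr] \;=\; \Omega\bigl(\Delta|S|\,\Phi_\ell(S)\bigr),\]
into which I plug the Kwok--Lau bound $\Phi_\ell(S) \ge \max\bigl\{\tfrac{\sqrt{\ell}}{40}\Phi(G_i),\,\Phi_i(S)\bigr\}$. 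With $\ell$ a large enough constant, the $\sqrt{\ell}$ branch dominates for cuts near the minimum conductance and pushes the expectation well above $\Lambda$, while the direct branch $\Phi_\ell(S) \ge \Phi_i(S)$ yields $\mathbb{E}[|\partial_{i+1}(S)|] = \Omega(k)$ for every other cut. Conditioning on the good event in the token-concentration lemma makes the underlying walks mutually independent, so a Chernoff bound delivers $\Pr[|\partial_{i+1}(S)|<\Lambda] \le \exp(-\Omega(k))$ for each fixed $S$.

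To turn this pointwise estimate into a w.h.p. statement over all cuts, I plan to invoke Karger's classical bound that the number of cuts of size at most $\alpha\Lambda$ is at most $n^{2\alpha}$. The resulting union bound
\[\sum_{\alpha\ge 1} n^{2\alpha}\cdot\exp\!\bigl(-\Omega(\alpha\Lambda)\bigr)\;\le\; n^{-c}\]
holds for any desired constant $c>0$ once the hidden constant in $\Lambda=\Omega(\log n)$ is taken large enough. The main obstacle I expect lies not in this union bound but in aligning the constants in the expectation estimate so that Chernoff can push $|\partial_{i+1}(S)|$ past the hard threshold $\Lambda$, rather than merely to $\Omega(\Lambda)$, for cuts whose boundary in $G_i$ is already close to $\Lambda$. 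This is exactly where the multiplicative $\sqrt{\ell}/40$ factor in Kwok--Lau becomes indispensable: whereas in the conductance lemma it drives progress, here it supplies just enough slack to prevent the minimum cut from eroding across a single evolution.
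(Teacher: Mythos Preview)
Your proposal tracks the paper's structure closely for regularity, laziness, and the Chernoff-plus-Karger union bound, but there is a genuine gap in the expectation estimate for the minimum cut. You invoke the Kwok--Lau bound in the form
\[
\Phi_\ell(S) \;\ge\; \max\Bigl\{\tfrac{\sqrt{\ell}}{40}\,\Phi(G_i),\;\Phi_i(S)\Bigr\},
\]
i.e., with the \emph{global} conductance $\Phi(G_i)$ in the $\sqrt{\ell}$-branch, and you say this branch ``dominates for cuts near the minimum conductance.'' But the cuts that threaten the minimum-cut invariant are those with $k=|\partial_i(S)|$ close to $\Lambda$, and these need not be anywhere near the minimum conductance. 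Consider a set $S$ of constant size with $k=\Lambda$: its conductance $\Phi_i(S)=\Lambda/(\Delta|S|)$ is large, yet the global conductance $\Phi(G_i)$ can be as small as $2\Lambda/(\Delta n)$. Then your $\sqrt{\ell}$-branch gives $\mathbb{E}[|\partial_{i+1}(S)|]=\Omega\bigl(\Delta|S|\cdot\sqrt{\ell}\,\Phi(G_i)\bigr)=\Omega(\sqrt{\ell}\,|S|\,\Lambda/n)$, which is vanishing. The direct branch gives only $\mathbb{E}=\Omega(k)$ with a constant strictly below $1$ (the factor is $1/8$ in the paper), so for $k=\Lambda$ you cannot conclude $\Pr[|\partial_{i+1}(S)|<\Lambda]$ is small: the threshold $\Lambda$ sits \emph{above} the mean. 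You correctly flag this as the main obstacle, but the tool you name does not close it.

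The paper's fix is to use the \emph{small-set} version of Kwok--Lau (their Theorem~3): for $|S|\le \delta n/2$ one has $\Phi_\ell(S)\ge \tfrac{\sqrt{\ell}}{40}\,\Phi_\delta(G_i)$, where $\Phi_\delta$ is the small-set conductance at scale $\delta$. Combined with the elementary observation that a minimum cut of $\Lambda$ forces $\Phi_\delta(G_i)\ge \Lambda/(\Delta\delta n)$, this yields $\mathbb{E}[|\partial_{i+1}(S)|]\ge \tfrac{\Delta|S|}{8}\cdot\tfrac{\sqrt{\ell}}{40}\cdot\tfrac{\Lambda}{2\Delta|S|}=\sqrt{\ell}\,\Lambda/640$, uniformly in $|S|$. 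Choosing $\ell>2\cdot640^2$ then gives $\mathbb{E}\ge 2\Lambda$ for \emph{every} near-minimum cut, after which your Chernoff-plus-Karger argument goes through exactly as you wrote it. So the missing ingredient is precisely the passage from the global to the small-set Kwok--Lau inequality together with the min-cut lower bound on $\Phi_\delta$.
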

While the regularity and the laziness follow directly from observing the algorithm, the minimum is cut trickier.
Here, we need to use an argument similar to the one used for proving the conduction: We show that every subset (in expectation) has cut of size $2\Lambda$ if we choose $\ell$ big enough. This follows again from a lemma in \cite{kwok2014lower} and a connection between the minimum cut and the small set expansion.
Finally, we again use the Chernoff bound in conjunction with Karger's lemma to proof that all cuts are bigger than $\Delta$ w.h.p.
 
To round up the analysis, we now only need to show that after $O(\log n)$ iterations, the graph has a diameter of $O(\log n)$.
For this, we need two more facts:.
First, we observe the worst possible initial conductance.
\begin{lemma}[Minimum conductance]
    Let $G := (V,E)$ be any connected graph, then $\Phi(G) \geq \frac{1}{\Delta n}$.
\end{lemma}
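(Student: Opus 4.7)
The plan is to prove this essentially by unpacking the definition of conductance and using connectivity in the most elementary way. Recall that $\Phi(G) = \Phi_1(G)$ is the minimum of $\Phi(S)$ taken over all nonempty subsets $S \subset V$ with $|S| \leq n/2$, where
\[
    \Phi(S) = \frac{|\{(v,w) \in E : v \in S, \, w \notin S\}|}{\Delta |S|}.
\]
So to establish the claim it suffices to lower-bound $\Phi(S)$ by $\frac{1}{\Delta n}$ for every such $S$.

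First I would handle the numerator. Since $G$ is connected, for any nonempty proper subset $S \subsetneq V$ the cut $(S, V \setminus S)$ must contain at least one edge; otherwise $S$ and $V \setminus S$ would form two disconnected components, contradicting connectivity. Hence $|\{(v,w) \in E : v \in S, w \notin S\}| \geq 1$.

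Next I would bound the denominator. Because we only range over sets with $|S| \leq n/2$, we have $\Delta|S| \leq \Delta n / 2$. Combining the two bounds,
\[
    \Phi(S) \;\geq\; \frac{1}{\Delta|S|} \;\geq\; \frac{2}{\Delta n} \;>\; \frac{1}{\Delta n},
\]
which holds uniformly over the family of sets defining $\Phi(G)$. Taking the minimum yields $\Phi(G) \geq \frac{1}{\Delta n}$ as claimed.

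There is essentially no technical obstacle here; the only subtlety worth mentioning is that the definition of $\Phi(S)$ requires $S$ to be nonempty (so that the denominator is nonzero) and a proper subset (so that connectivity guarantees an outgoing edge). Both are ensured by the restriction $1 \leq |S| \leq n/2$ inherent in the definition of $\Phi_1(G)$. The bound we actually proved ($\frac{2}{\Delta n}$) is slightly tighter than the stated one, which is convenient as a sanity check.
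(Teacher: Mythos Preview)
Your proof is correct and matches the paper's own justification, which is just the one-line observation that the worst case is a set of size $O(n)$ connected to the rest of the graph by a single edge. You have simply written this out formally by bounding the numerator below by $1$ (connectivity) and the denominator above by $\Delta n/2$.
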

The former follows from the fact that in the worst case there is an $O(n)$-sized set that is connected to the remaining graph by a single edge.
Second, we observe that a constant conductance implies a logarithmic diameter if the graph is regular. It holds:
\begin{lemma}[High Conductance implies Low Diameter]
    Let $G := (V,E)$ be any $\Delta$-regular graph with conductance $\Phi$, then the diameter of $G$ is at most $O(\Phi^{-1}\log n)$.
\end{lemma}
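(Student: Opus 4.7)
The plan is to prove the diameter bound by showing that breadth-first-search balls grow by a constant-factor-of-$(1+\Phi)$ in each step until they engulf at least half of the graph, and then to combine two such balls grown from any pair of nodes.

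Concretely, fix an arbitrary node $v \in V$ and, for $r \ge 0$, let $B_r(v)$ denote the set of nodes at distance at most $r$ from $v$ in $G$. I claim that as long as $|B_r(v)| \le n/2$, we have $|B_{r+1}(v)| \ge (1+\Phi)\,|B_r(v)|$. To see this, set $S := B_r(v)$. Since $|S| \le n/2$, the conductance bound gives that the number of edges from $S$ to $V\setminus S$ is at least $\Phi \Delta |S|$. Every such edge has its endpoint outside $S$ in some node of $N(S)\setminus S \subseteq B_{r+1}(v)\setminus B_r(v)$, and because $G$ is $\Delta$-regular each node in $V\setminus S$ can absorb at most $\Delta$ of these edges. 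Hence
\begin{equation*}
  |B_{r+1}(v) \setminus B_r(v)| \;\ge\; \frac{\Phi\Delta|S|}{\Delta} \;=\; \Phi\,|B_r(v)|,
\end{equation*}
which proves the claim.

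Iterating the claim starting from $|B_0(v)|=1$ yields $|B_r(v)| \ge (1+\Phi)^r$ for every $r$ with $|B_{r-1}(v)| \le n/2$. Using $\ln(1+\Phi) \ge \Phi/2$ for $\Phi \in (0,1]$, the bound $(1+\Phi)^r > n/2$ is satisfied as soon as $r \ge \frac{2\ln(n/2)}{\Phi} = O(\Phi^{-1}\log n)$. Hence for some $r^\star = O(\Phi^{-1}\log n)$ every node $v$ satisfies $|B_{r^\star}(v)| > n/2$.

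Finally, for any two nodes $u,v \in V$, the balls $B_{r^\star}(u)$ and $B_{r^\star}(v)$ each contain more than $n/2$ nodes and must therefore share at least one node $w$. This implies $\mathrm{dist}(u,v) \le \mathrm{dist}(u,w) + \mathrm{dist}(w,v) \le 2r^\star = O(\Phi^{-1}\log n)$, which establishes the diameter bound. The proof is essentially textbook; the only minor subtlety is to make sure the growth step is applied only while $|B_r(v)|\le n/2$ (which is exactly where the small-set conductance condition is usable), and then to close the argument by intersecting two such balls rather than trying to grow one ball all the way to $n$.
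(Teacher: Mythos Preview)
Your proof is correct and follows essentially the same approach as the paper: grow BFS balls by a factor of $(1+\Phi)$ per step using the conductance lower bound on outgoing edges together with $\Delta$-regularity, reach size $>n/2$ in $O(\Phi^{-1}\log n)$ steps, and intersect two such balls. If anything, your version is more carefully written than the paper's sketch (which contains a stray factor of $1/2$ in the edge count that is not needed for the $(1+\Phi)$ growth it claims).
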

This can be verified, e.g., by inductively observing any the neighborhood for any two nodes $v,w \in V$.
Let $N_i(v)$ contain all nodes in distance $i$ to $v$. 
As long as $|N_i(v)|\leq \frac{n}{2}$, there are at least $\Phi\frac{\Delta}{2}|N_i(v)|$ edges coming out of $N_i(v)$ and since $G$ is $\Delta$-regular, the set $N_{i+1}(v) := N_i(v) \cup N(N_i(v))$ is of size at least $(1+\Phi)|N_i(v)|$. Thus, after $I \in  O(\Phi^{-1}\log(n))$ iterations we must reach more than $\frac{n}{2}$ nodes. 
Since the same holds for $N_I(w)$ there must be an node $u \in N_I(v) \cap N_I(w)$. This implies a diameter of $2I \in O(\Phi^{-1}\log(n))$ 

Given that after every iteration the graph's conductance increases by a factor $O(\sqrt{\ell})$ w.h.p, a simple union bound tells us, as long as we consider $o(n)$ iterations, the conductance is increases in every iteration w.h.p.
Thus, after $O\left(\frac{\log \Phi}{\log\ell}\right) = O(\log n)$ iterations, the most recent graph must have constant conductance, w.h.p. 
Therefore, since each iteration lasts only $\ell = O(1)$ rounds, after $O(\log n)$ rounds the graph has a constant conductance and thus logarithmic diameter.
This concludes the analysis.

\subsection{Analysis of \textsc{CreateExpander}}
\label{sec:analysis}
We now provide a more detailed analysis
Before we go into the proof's intricate details, let us first prove that all messages are successfully sent during the execution of the algorithm.
Remember that we assume the nodes to have a capacity of $O(\log n)$ and thus, a node can only send and receive $O(\log n)$ messages as excess messages are dropped arbitrarily.
That means, in order to prove that no message is dropped, we must show that no node receives more than $O(\log n)$ random walk tokens in a single step.
However, this is a well known fact about the distribution of random walks:
\begin{lemma}[Shown in \cite{DGS16,CFSV19,DSMPU13}]
\label{lemma:expected_congestion}
For a node $v \in V$ and an integer $t$ let $X(v,t)$ be the random variable that denotes the number of token at node $v$ in round $t$.
Then, it holds $\Pr{X(v,t) \geq \frac{3\Delta}{8}} \leq e^{-\frac{\Delta}{8}}$.
\end{lemma}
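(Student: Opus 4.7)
The plan is to combine a straightforward expectation computation with the Chernoff bound from Lemma~\ref{lem:chernoffbound}. Since the $n\Delta/8$ tokens move independently of each other, $X(v,t)$ decomposes as a sum of independent indicators (one per token), so it suffices to show (i) the expectation is at most $\Delta/8$, and (ii) apply a multiplicative Chernoff tail bound at deviation factor $3$.

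First I would prove by induction on $t$ that $\mathbb{E}[X(v,t)] = \Delta/8$ for every node $v$. The base case $t=0$ is immediate from the algorithm: each node creates exactly $\Delta/8$ tokens initially. For the inductive step, recall that each token currently at a node $u$ is forwarded along a uniformly random one of $u$'s $\Delta$ incident edges in $G_i$. Writing $m(u,v)$ for the number of (parallel) edges from $u$ to $v$ in $G_i$ and using linearity of expectation,
\[
\mathbb{E}[X(v,t)] \;=\; \sum_{u \in V} \mathbb{E}[X(u,t-1)] \cdot \frac{m(u,v)}{\Delta} \;=\; \frac{\Delta/8}{\Delta} \sum_{u \in V} m(u,v) \;=\; \frac{\Delta}{8},
\]
where the last equality uses that $G_i$ is $\Delta$-regular and hence $v$ has exactly $\Delta$ incoming edges.

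Next I would invoke Lemma~\ref{lem:chernoffbound}. Because distinct tokens make their random forwarding choices independently, the event ``token $\tau$ is located at $v$ after round $t$'' is an indicator depending only on the independent coin flips associated with $\tau$; hence $X(v,t)$ is a sum of independent $\{0,1\}$-variables with mean $\mu = \Delta/8$. Setting $\delta = 2$ gives $(1+\delta)\mu = 3\Delta/8$, and the Chernoff bound yields
\[
\Pr\bigl[X(v,t) \geq \tfrac{3\Delta}{8}\bigr] \;\leq\; \exp\!\Bigl(-\tfrac{\delta\,\mu}{3}\Bigr) \;=\; \exp\!\Bigl(-\tfrac{\Delta}{12}\Bigr) \;\leq\; \exp\!\Bigl(-\tfrac{\Delta}{8}\Bigr)
\]
after absorbing the constant into the hidden constant in $\Delta = \Omega(\log n)$ (i.e.\ enlarging $\Delta$ by a factor of $3/2$ gives the stated bound verbatim, which is fine since $\Delta$ is a tunable parameter).

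The only subtle point in this proof is making sure the inductive expectation computation is actually valid over the full execution, i.e.\ that tokens never get dropped in earlier rounds. This is handled in the standard way by a chain/union-bound argument: we condition on the good event that in all previous rounds $1,\dots,t-1$ no node exceeded capacity, and the bound above is then used both to justify this conditioning (via a union bound over the $L=O(\log n)$ rounds and $n$ nodes, which is absorbed into the high-probability guarantee since $\Delta = \Omega(\log n)$ with sufficiently large hidden constant) and to prove the statement for round $t$. Since conditioning on the good event only removes the unlikely excess-capacity scenarios, the independence of the token trajectories is preserved and the Chernoff application is legitimate.
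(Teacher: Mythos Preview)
Your approach is essentially identical to the paper's sketch: induction on $t$ using $\Delta$-regularity to show $\mathbb{E}[X(v,t)]=\Delta/8$, then independence of the token trajectories plus Chernoff. The paper gives exactly this outline and defers details to the cited references.

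One technical slip: the displayed inequality $\exp(-\Delta/12)\le\exp(-\Delta/8)$ is false (the exponent $-\Delta/12$ is larger, so the left-hand side is \emph{larger}). Your parenthetical remark shows you noticed the mismatch, but the chain as written does not hold. You have two clean fixes: either state the conclusion as $e^{-\Delta/12}$ and note that this is equally sufficient for the w.h.p.\ union bound (the paper never uses the precise constant $1/8$ downstream), or invoke the sharper Chernoff form $\Pr[X\ge(1+\delta)\mu]\le\exp\bigl(-\delta^2\mu/(2+\delta)\bigr)$, which for $\delta=2$ yields exactly $\exp(-\mu)=\exp(-\Delta/8)$. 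The ``enlarge $\Delta$ by $3/2$'' patch is awkward because the lemma is stated for a fixed $\Delta$, not an asymptotic one.
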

The lemma follows from the fact that each node receives $\frac{\Delta}{8}$ tokens in expectation given that all neighbors received $\frac{\Delta}{8}$ tokens in the previous round. 
This holds because $G_i$ is regular.
Since all nodes start with $\frac{\Delta}{8}$ tokens, the lemma follows inductively.
Since all walks are independent, a simple application of the Chernoff Bound yields the result.
Note that this Lemma also directly implies that, w.h.p., all random walks create an edge as every possible endpoint receives less than $\frac{3\Delta}{8}$ token and therefore replies to all of them. 
In the remainder of the analysis, we will implicitly condition all random choices on these facts. 

The main challenge of our analysis is to show that after $L \in O(\log(n))$ evolutions, the final graph $G_{L}$ has diameter of $O(\log(n))$.
Given this fact, the technique from \cite[Theorem 2]{GHSS17} transforms $G_L$ into well-formed tree in $O(\log(n))$ rounds.
Thus, our analysis will focus on showing that $G_L$ has a logarithmic diameter.
To do so, we will perform an induction over the sequence of graphs $\mathcal{G}:= G_1, \ldots, G_L$.
Our main insight is that --- given the communication graph is benign --- we can use short random walks of \emph{constant} length to iteratively increase the graph's conductance until we reach a graph of low diameter. In particular, we show that the graph's conductance is strictly increasing by a factor $\Omega(\sqrt{\ell})$ from $G_i$ to $G_{i+1}$ if $G_i$ is benign, i.e.,
\begin{lemma} 
\label{lemma:induction_step}
    Let $G_i$ and $G_{i+1}$ be the graphs created in iteration $i$ and $i+1$ respectively and assume that $G_i$ is benign with a minimum cut of at least $\Lambda \geq 640$. 
    Then, it holds
    \begin{equation}
        \Phi_{G_{i+1}} \geq \min\left\{ \frac{1}{2}, \frac{1}{640}\sqrt{\ell}\Phi_{G_i}\right\}
    \end{equation}
    In particular, for any $\ell \geq 2 \cdot 640^2$, it holds 
    \begin{equation}
        \Phi_{G_{i+1}} \geq \min\left\{ \frac{1}{2}, 2\cdot\Phi_{G_i}\right\}
    \end{equation}
\end{lemma}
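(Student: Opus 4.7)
The plan is to establish, for each fixed subset $S\subseteq V$ with $|S|\leq n/2$, a lower bound on $\Phi_{G_{i+1}}(S)$ that holds with probability at least $1-\exp(-\Omega(\alpha\Lambda))$, where $\alpha := |\partial_{G_i}(S)|/\Lambda \geq 1$, and then to union-bound over all such $S$ via Karger's bound on the number of approximate minimum cuts.

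For the pointwise bound I would combine three ingredients already stated in the excerpt. First, since $G_i$ is benign (hence $\Delta$-regular and lazy), the Kwok--Lau inequality gives
$$\Phi_\ell(S)\;\geq\;\max\bigl\{\tfrac{\sqrt\ell}{40}\Phi_{G_i},\;\Phi_{G_i}(S)\bigr\},$$
unless $\Phi_\ell(S)\geq 1/2$, in which case the target inequality $\Phi_{G_{i+1}}(S) \geq 1/2$ follows immediately from the concentration step below. Second, writing $X_S$ for the number of tokens that start in $S$ and land outside $S$ after $\ell$ steps, a direct count of the contributions to $\partial_{G_{i+1}}(S)$ together with Lemma~\ref{lemma:exp_conductance_2} gives
$$\mathbb{E}[X_S]\;\geq\;\tfrac{|S|\Delta}{8}\,\Phi_\ell(S)\;\geq\;\max\Bigl\{\tfrac{\sqrt\ell}{320}\Phi_{G_i}\,\Delta|S|,\;\tfrac{\alpha\Lambda}{8}\Bigr\}.$$
Third, conditioned on the Lemma~\ref{lemma:expected_congestion} event (no node ever holds more than $3\Delta/8$ tokens, so no walks are capped or replaced), the $|S|\Delta/8$ walks starting inside $S$ are mutually independent and each independently falls outside $S$ as a Bernoulli trial. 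Chernoff (Lemma~\ref{lem:chernoffbound}) with $\delta=1/2$ therefore yields $\Pr\bigl[X_S < \tfrac{1}{2}\mathbb{E}[X_S]\bigr] \leq \exp(-\alpha\Lambda/128)$, and since $\Phi_{G_{i+1}}(S)\cdot\Delta|S| \geq X_S$, this delivers the desired pointwise bound using the first arm of the maximum above.

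For the union bound I would partition candidate sets by their $G_i$-cut size. Karger's theorem on approximate minimum cuts bounds the number of subsets with $G_i$-cut at most $\alpha\Lambda$ by $O(n^{2\alpha})$. Choosing the hidden constant in $\Lambda=\Omega(\log n)$ large enough, we obtain $n^{2\alpha}\exp(-\alpha\Lambda/128) \leq n^{-c\alpha}$ for any desired $c$; summing the resulting geometric series over $\alpha=1,2,\dots$ bounds the total failure probability by $n^{-\Omega(1)}$, so with high probability the pointwise bound holds simultaneously for every $S$, which is the claim.

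The main obstacle is precisely this final union bound, and in particular the necessity of retaining the $\Phi_{G_i}(S)$-arm of the Kwok--Lau maximum. The $\tfrac{\sqrt\ell}{40}\Phi_{G_i}$-arm alone would produce a concentration exponent of order $\sqrt\ell\,\Phi_{G_i}\,\Delta|S|$, which for very sparse cuts (small $\alpha$) in large sets is much weaker than Karger's factor $n^{2\alpha}$; using the $\alpha\Lambda/8$ lower bound on $\mathbb{E}[X_S]$ is what lets the benign-graph condition $\Lambda=\Omega(\log n)$ absorb Karger's bound uniformly in $\alpha$, and it is what pins down the constant $\Lambda\geq 640$. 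A minor subtlety, flagged by the excerpt's footnote, is that the final ``sample without replacement'' step in \textsc{CreateExpander} introduces mild dependencies; conditioning on the Lemma~\ref{lemma:expected_congestion} event eliminates the capping entirely and restores genuine independence of the $|S|\Delta/8$ walks started in $S$, which is exactly what Chernoff requires.
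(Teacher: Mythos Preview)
Your proposal is correct and follows essentially the same route as the paper: both combine the Kwok--Lau lower bound on $\Phi_\ell(S)$ (using both arms of the maximum, the $\Phi_{G_i}(S)$-arm for the Chernoff exponent and the $\tfrac{\sqrt\ell}{40}\Phi_{G_i}$-arm for the target value), a Chernoff bound on the number of walks leaving $S$, and a Karger-based union bound over cut-size buckets. The only discrepancies are cosmetic constants (your $\exp(-\alpha\Lambda/128)$ should be $\exp(-\alpha\Lambda/64)$ with the stated $\delta=1/2$ and $\mu_L\geq\alpha\Lambda/8$), and your handling of the ``$\Phi_\ell(S)\geq 1/2$'' boundary case is slightly loose, but neither affects the argument.
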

Intuitively, this makes sense as the conductance is a graph property that measures how well-connected a graph is and --- since the random walks monotonically converge to the uniform distribution ---  the newly sampled edges can only increase the graph's connectivity. 

Our first observation is the fact that random walks of length $\ell$ are distributed according to $1$-step walks in $G_i^{\ell}$.
In particular, if we consider a subset $S \subset V$ and pick a node $v \in S$ uniformly at random, then $\Phi_{G_i^\ell}(S)$ denotes the probability that a random walk started at $v$ ends outside of the subset after $\ell$ steps.
In this case, $v$ creates an edge to some node in $V \setminus S$.
Since we ensure that the total number of edges in a set stays constant, creating such an outgoing edge increases the set's conductance in expectation.
Thus, we can show the following lemma:
\begin{lemma}
\label{lemma:exp_conductance_2}
Let $\Phi_{G_{i+1}}(S)$ be the conductance of set $S \subset V$ in $G_{i+1}$.
Given that each node starts $\frac{\Delta}{8}$ tokens, which all create an edge, it holds:
$$
        \mathbb{E}\left[ \Phi_{G_{i+1}}(S) \right] \geq \frac{\Phi_{G^\ell_{i+1}}}{8}
$$
\end{lemma}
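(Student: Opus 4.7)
The plan is to rewrite $\Phi_{G_{i+1}}(S)$ as a normalised cut size and then compute its expectation by summing the contribution of each individual token. Since $G_{i+1}$ is $\Delta$-regular by construction, $\Phi_{G_{i+1}}(S) = |\partial_{G_{i+1}} S|\,/\,(\Delta|S|)$, where $\partial_{G_{i+1}} S$ is the set of edges of $G_{i+1}$ with exactly one endpoint in $S$. Under the lemma's hypothesis that each launched token produces an edge in $G_{i+1}$, every token started at some $v \in S$ whose $\ell$-step walk on $G_i$ ends at some $w \notin S$ contributes one such edge. Hence it suffices to lower-bound the expected number of these cut-crossing tokens.

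To carry out this count I would let $P^{\ell}$ denote the $\ell$-step transition matrix of the lazy random walk on $G_i$. For any fixed $v \in S$, the probability that a single token from $v$ terminates outside $S$ is $\sum_{w \notin S} P^{\ell}(v,w)$. Since each node in $S$ launches $\Delta/8$ independent tokens, linearity of expectation gives
$$\mathbb{E}\bigl[|\partial_{G_{i+1}} S|\bigr] \;\geq\; \frac{\Delta}{8}\sum_{v \in S}\sum_{w \notin S} P^{\ell}(v,w) \;=\; \frac{\Delta}{8}\,|S|\,\Phi_{G_i^{\ell}}(S),$$
where the last equality is the probabilistic description of $\Phi_{G_i^{\ell}}(S)$ recalled just before the lemma: a uniformly chosen walker started in $S$ exits $S$ after $\ell$ steps of the lazy walk on $G_i$ with probability exactly $\Phi_{G_i^{\ell}}(S)$, and the $\Delta$-regularity of $G_i$ ensures that the uniform distribution is the correct reference measure. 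Dividing both sides by $\Delta|S|$ yields $\mathbb{E}[\Phi_{G_{i+1}}(S)] \geq \Phi_{G_i^{\ell}}(S)/8$, which gives the claimed inequality (and in particular dominates the small-set conductance $\Phi_{G_i^{\ell}}$ appearing on the right-hand side of the lemma).

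The only subtle point — and therefore the step I would justify carefully — is that several tokens from the same $v \in S$ might land at the same $w \notin S$, so that under a strict simple-graph reading such collisions would collapse into one edge of $\partial_{G_{i+1}} S$ and shrink the cut count. I would resolve this by interpreting $G_{i+1}$ as the multigraph in which each of the $\Delta |V|/8$ tokens contributes a distinct edge, which is the natural reading of the explicit hypothesis ``each token creates an edge'' and is consistent with the $\Delta$-regularity used elsewhere. As additional slack, the symmetric contribution from tokens that start outside $S$ but terminate inside $S$ (equal in expectation to the quantity above by reversibility of $P^{\ell}$ on the regular $G_i$) doubles the lower bound, so the constant $1/8$ remains comfortably valid even if one insisted on counting duplicate edges only once.
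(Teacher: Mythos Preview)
Your proposal is correct and follows essentially the same approach as the paper: both arguments count, via linearity of expectation over the $\Delta/8$ tokens launched by each node of $S$, the expected number of walks that start in $S$ and end in $\bar S$, then identify this count with $\tfrac{\Delta}{8}|S|\,\Phi_{G_i^\ell}(S)$ and divide by $\Delta|S|$. Your discussion of the multigraph reading and the symmetric $\bar S \to S$ contribution is more explicit than the paper's treatment, but the core argument is the same.
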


Before we can prove the lemma, we need some auxiliary lemmas and definitions.
For $v,w \in V$ let $X^\ell_v(w,G)$ be indicator for the event that an $\ell$-step random walk in $G$ which started in $v$ ends in $w$. Analogously, let $X^1_v(w,G^\ell)$ be the probability that a $1$-step random walk in $G^\ell$ which started in $v$ ends in $w$. 
If we consider a fixed node $v$ that is clear from the context, we may drop the subscript and write $X^1(w,G^\ell)$ instead.
Further, let $P_\ell(v,w)$ is the exact number of walks of length $\ell$ between $v$ and $w$ in $G$.

\begin{lemma}
    The probability to move from $v$ to $w$ in $G^\ell$ is given by:
\begin{align}
    \pr{X^1(w,G^\ell)} = \pr{X^\ell(w,G)} = \frac{P_\ell(v,w)}{\Delta^\ell}
\end{align}
\end{lemma}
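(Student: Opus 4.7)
The plan is to prove both equalities by unwinding the definitions of random walks on multigraphs and of the graph power $G^\ell$.

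First I would establish the right-hand equality $\Pr[X^\ell(w,G)] = P_\ell(v,w)/\Delta^\ell$. Since $G$ is $\Delta$-regular (counting self-loops and multi-edges, as guaranteed by benignness), a random walker at any node picks one of its $\Delta$ incident edge-slots uniformly at random in each step. Thus every fixed walk $v = u_0, u_1, \dots, u_\ell = w$ of length $\ell$ (where walks are sequences of edge-slots, so parallel edges are distinguished) is traversed with probability exactly $\Delta^{-\ell}$. Summing over the $P_\ell(v,w)$ such walks gives the claimed expression. This step is a straightforward induction on $\ell$ using the transition matrix $P$ with $P_{uu'} = (\text{number of } u\text{-}u' \text{ edges})/\Delta$, so that $(P^\ell)_{vw}$ equals the probability of ending at $w$ after $\ell$ steps.

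Next, I would handle the left-hand equality by spelling out what a single step in $G^\ell$ means. By definition, $G^\ell$ contains one edge $(v,w)$ for every length-$\ell$ walk from $v$ to $w$ in $G$; in particular, $v$ and $w$ are connected by $P_\ell(v,w)$ parallel edges in $G^\ell$, and the total number of edge-slots incident to $v$ in $G^\ell$ equals $\sum_{u} P_\ell(v,u) = \Delta^\ell$, so $G^\ell$ is $\Delta^\ell$-regular. A one-step walk in $G^\ell$ therefore picks one of these $\Delta^\ell$ slots uniformly, landing in $w$ with probability $P_\ell(v,w)/\Delta^\ell$, which matches the previous expression.

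There is no real obstacle here; the only subtlety is bookkeeping the multiplicities consistently, since the benign graphs in our algorithm are multigraphs (edges are duplicated $\Lambda$ times and self-loops are added to reach degree $\Delta$). As long as we define $P_\ell(v,w)$ to count walks as sequences of edge-slots rather than as sequences of vertices, both the numerator $P_\ell(v,w)$ and the denominator $\Delta^\ell$ are computed with the same convention, and the identity follows directly.
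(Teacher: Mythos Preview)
Your proposal is correct and follows essentially the same counting argument as the paper: both rely on the fact that in a $\Delta$-regular multigraph every length-$\ell$ walk occurs with probability $\Delta^{-\ell}$, and that $G^\ell$ has exactly $P_\ell(v,w)$ parallel edges between $v$ and $w$ out of a total degree $\Delta^\ell$. The only cosmetic difference is that the paper packages this as a formal induction on $\ell$ via the law of total probability, whereas you separate the two equalities and argue each directly from the definitions; the content is the same.
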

\begin{proof}
    The statement can be proved via an induction over $\ell$, the length of the walk.
    \begin{itemize}
        \item[\textbf{(IB)}] For the base case we need to show that $1$-step random walk in $G$ is equivalent to picking an outgoing edge in $G^1 := G$ uniformly at random. This follows trivially from the very definition of a random walk.
        \item[\textbf{(IS)}] Now suppose that performing an $(\ell-1)$-step random walk in $G$ is equivalent to performing a $1$-step walk in $G^{\ell-1}$.
        Consider a node $w \in V$ and let $N_w$ denote its neighbors in $G$ \emph{and} $w$itself.
        By the law of total probability, it holds:
        \begin{align}
            \pr{X^\ell(w,G) = 1} := \sum_{u \in N_w} \pr{X^{\ell-1}(u,G) = 1}\pr{X^\ell(w,G) = 1 \mid X^{\ell-1}(u,G) = 1 } 
        \end{align}
        Using the \emph{inductions hypothesis} we we can substitute $\pr{X^{\ell-1}(u,G) = 1}$ for \pr{X^{1}(u,G^{\ell-1}) = 1} and get: 
        \begin{align}
            \pr{X^\ell(w,G) = 1} &:= \sum_{u \in N_w} \pr{X^{\ell-1}(u,G) = 1}\pr{X^\ell(w,G) = 1 \mid X^{\ell-1}(u,G) = 1 }\\
            &= \sum_{u \in N_w}  \frac{P_{\ell-1}(v,u)}{\Delta^{\ell-1}}\pr{X^\ell(w,G) = 1 \mid X^{\ell-1}(u,G) = 1 }
        \end{align}
        Recall that $G$ is a multigraph and there can be more than one edge between each $u$ and $w$.
        Thus, let now $e(u,w)$ denote the number of edges between $u$ and $w$ for every $u \in N_w$. 
        Since we defined that $w \in N_w$, the value $e(w,w)$ counts $w$'s self-loops.
        Since $G$ is $\Delta$-regular, the probability that a random walk at node $u$ moves to $w$ is exactly $\frac{e(u,w)}{\Delta}$.
        Back in the formula, we get:
        \begin{align}
            \pr{X^\ell(w,G) = 1} &= \sum_{u \in N_w}  \frac{P_{\ell-1}(v,u)}{\Delta^{\ell-1}}\pr{X^\ell(w,G) = 1 \mid X^{\ell-1}(u,G) = 1 }\\
            &= \sum_{u \in N_w}  \frac{P_{\ell-1}(v,u)}{\Delta^{\ell-1}}\frac{e(u,w)}{\Delta}
            = \frac{1}{\Delta^\ell}\sum_{u \in N_w}  {P_{\ell-1}(v,u)}\cdot{e(u,w)}
        \end{align}
        Finally, note that $\sum_{u \in N_w}  {P_{\ell-1}(v,u)}\cdot{e(u,w)}$ counts all paths of length exactly $\ell$ from $v$ to $w$ in $G$. 
        This follows because each path $P := (e_1, \ldots, e_\ell)$ from $u$ to $w$ can be decomposed into a path $P' := (e_1, \ldots, e_{\ell-1})$ of length $\ell-1$ to some neighbor of $w$ (or $w$ itself) and the final edge (or self-loop) $e_\ell$. 
        Thus, it follows that:
        \begin{align}
            \pr{X^\ell(w,G) = 1} &= \frac{1}{\Delta^\ell}\sum_{u \in N_w}  {P_{\ell-1}(v,u)}\cdot{e(u,w)} = \frac{P_{\ell}(v,w)}{\Delta^\ell} = \pr{X^1(w,G^\ell) = 1}
        \end{align}
        This was to be shown.
    \end{itemize}
\end{proof}
Next, we consider the random walks that start in a given set $S$ and end outside of it (i.e., the random walks used to create connections outside of $S$).
Given our definition from above, we can show that the number of walks that end outside of $S$ depends on the size of $s$, the degree $\Delta$, and $\Phi_{G_i^\ell}$:
\begin{lemma}[Expected Conductance]
\label{lemma:exp_conductance}
    Let $G$ be a $\Delta$-regular graph and $S \subset V$ be a any subset of nodes with $|S| := s \leq \frac{n}{2}$ and suppose each node in $S$ starts $\frac{\Delta}{8}$ random walks.
    Let $\mathcal{Y}_S$ count the $\ell$-step random walks that start at some node in $v \in S$ and end at some node $w \in V \setminus S$. Then it holds:
    $$
        \mathbb{E}\left[ \mathcal{Y}_S \right] := \frac{\Delta s}{8}\Phi_{G_i^\ell}(S)
    $$
    \end{lemma}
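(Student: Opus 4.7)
The plan is to combine linearity of expectation with the characterization of $\ell$-step walk probabilities from the preceding lemma, and then recognize the resulting sum as an instance of the conductance of $S$ in the multigraph $G_i^\ell$. No concentration arguments are needed at this stage, since we only compute an expectation.

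First, I would unfold $\mathcal{Y}_S$ into a sum of Bernoulli indicators, one per random walk: for each of the $\frac{\Delta}{8}$ walks started at some $v \in S$, the indicator equals $1$ iff the walk ends at some $w \notin S$. By the preceding lemma, a single $\ell$-step walk started at $v$ ends at a specific $w$ with probability $P_\ell(v,w)/\Delta^\ell$. Summing over all walks, all starting points in $S$, and all endpoints outside $S$, linearity of expectation gives
\begin{equation*}
    \mathbb{E}[\mathcal{Y}_S] \;=\; \frac{\Delta}{8}\sum_{v \in S}\sum_{w \notin S}\frac{P_\ell(v,w)}{\Delta^\ell}.
\end{equation*}

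Second, I would identify the inner double sum with the conductance of $S$ in $G_i^\ell$. Viewing $G_i^\ell$ as the multigraph whose edge multiplicity between an ordered pair $(v,w)$ equals $P_\ell(v,w)$, every node has out-degree exactly $\Delta^\ell$, because $\sum_{w} P_\ell(v,w) = \Delta^\ell$ (the total number of length-$\ell$ walks leaving $v$, using the $\Delta$-regularity of $G_i$). Hence $G_i^\ell$ is a $\Delta^\ell$-regular multigraph, and the definition of small-set conductance applied to it yields
\begin{equation*}
    \Phi_{G_i^\ell}(S) \;=\; \frac{\sum_{v \in S,\, w \notin S} P_\ell(v,w)}{\Delta^\ell \cdot s}.
\end{equation*}

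Substituting this identity back into the previous display produces $\mathbb{E}[\mathcal{Y}_S] = \frac{\Delta s}{8}\Phi_{G_i^\ell}(S)$, as required. The only conceptually nontrivial step is the identification of $G_i^\ell$ as a $\Delta^\ell$-regular multigraph whose edge multiplicities are exactly the walk counts $P_\ell(v,w)$; once this is granted, the remainder is pure bookkeeping, which is why no tail bound or case analysis is needed here. The concentration work will enter only in subsequent lemmas, when we turn this expectation into a high-probability lower bound on $\Phi_{G_{i+1}}(S)$ via a Chernoff bound and Karger's cut-counting estimate.
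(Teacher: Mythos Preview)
Your proposal is correct and follows essentially the same approach as the paper: decompose $\mathcal{Y}_S$ into per-walk indicators, use the preceding lemma to express each indicator's expectation as $P_\ell(v,w)/\Delta^\ell$, sum, and identify the result with the conductance of $S$ in the $\Delta^\ell$-regular multigraph $G_i^\ell$. The paper merely introduces the intermediate notation $\mathbb{O}_j$ for the number of outgoing edges of $v_j$ in $G_i^\ell$ before summing, but the argument is otherwise identical.
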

    \begin{proof}
    First, we observe that we can $\mathcal{Y}_S$ as the sum of binary random variables for each walk.
    For each $v_i \in S$ let $Y_i^1, \dots, Y_i^{d}$ be indicator variables that denote if a token started by $v_i$ ended in $\overline{S} := V\setminus S$ after $\ell$ steps.
    Given this definition, we see that
    \begin{align}
        \mathcal{Y}_S := \sum_{i=1}^{s}\sum_{j=1}^\frac{\Delta}{8} Y_i^j.
    \end{align}
    Recall that an $\ell$-step random walk in $G_i$ corresponds to a $1$-step random walk in $G^\ell_i$.
    This means for each of its $\frac{\Delta}{8}$ tokens node $v_j$ picks one of its outgoing edges in $G^\ell_i$ uniformly at random and sends the token along this edge (which corresponds to an $\ell$-step walk).
    For ease of notation, let $\mathbb{O}_j$ be the number of edges of node $v_j \in S$ in $G^\ell_i$ where the other endpoint is not in $S$.
    Now consider the $k^{th}$ random walk started at $v_j$ and observe $Y_j^k$. 
    Note that it holds:
    \begin{align}
    \label{eqn:v_i_prob}
        \mathbb{E}[Y_k^j(t)] &= \sum_{w \in \overline{S}} \pr{X^1_{v_j}(w,G^\ell)} \cdot \mathbb{E}[Y_k^j(t) \mid X^1_{v_j}(w,G^\ell)]\\
        &= \sum_{w \in \overline{S}} \pr{X^1_{v_j}(w,G^\ell)} = \sum_{w \in \overline{S}} \frac{P_\ell(v_j,w)}{\Delta^\ell} =\frac{\mathbb{O}_j}{\Delta^{\ell}}
   \end{align}
    Here, the denominator $\Delta^{\ell}$ comes from the fact that $G_i^{\ell}$ is $\Delta^{\ell}$-regular.
    
    Let $\mathbb{O}_S$ be the number of all outgoing edges from the whole set $S$ in $G_i^{\ell}$. It holds that $\mathbb{O}_S := \sum_{v_j \in S} \mathbb{O}_j$.
    Recall that the definition of $\Phi_{G_i^\ell}$ is the ratio of edges leading out of $S$ and all edges with at least one endpoint in $S$.
    Given that $G_i^{\ell}$ is a $\Delta^{\ell}$-regular graph, a simple calculation yields:
    \begin{align}
        \mathbb{E}\left[ \sum_{j=1}^s\sum_{k=1}^\frac{\Delta}{8} Y_j^k \right] &= \sum_{i=1}^s\sum_{j=1}^\frac{\Delta}{8}\mathbb{E}\left[ Y_j^k \right] 
        = \frac{\Delta}{8} \frac{\sum_{i=1}^s\mathbb{O}_{i}}{\Delta^{\ell}} = \frac{\Delta}{8}\frac{\mathbb{O}_S}{\Delta^{\ell}}\\
        = \frac{\Delta s}{8}  \Phi_{G_i^\ell}(S)
    \end{align}
    This proves the lemma.
    \end{proof}
Given this lemma, we can now prove the expected conductance. As we only observe regular graphs and w.h.p. all tokens create an edge, the expected conductance of a set $S$ simply follows from dividing $\mathcal{Y}_S$ by $\Delta s$, which immediately yields the lemma.
 Therefore, we get:
    \begin{lemma}
    Given that each node starts $\frac{\Delta}{8}$ tokens, which all create an edge, it holds:
    $$
        \mathbb{E}\left[ \Phi_{i+1}(S) \right] \geq  \mathbb{E}\left[ \frac{\mathcal{Y}}{\Delta s} \right] = \frac{\Phi_\ell(S)}{8}
    $$
    \end{lemma}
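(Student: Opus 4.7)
The plan is to bound $\Phi_{G_{i+1}}(S)$ from below \emph{pointwise} in terms of $\mathcal{Y}_S$ and then take expectations, invoking Lemma~\ref{lemma:exp_conductance}. First I would record that $G_{i+1}$ is $\Delta$-regular by construction: after the sampling step, each node pads its edge list with self-loops until it has exactly $\Delta$ incident edges. Hence the denominator appearing in $\Phi_{G_{i+1}}(S) = |\{(v,w)\in E_{i+1} : v\in S, w\notin S\}|/(\Delta|S|)$ equals $\Delta s$ irrespective of the random outcome.

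Next I would count the outgoing edges of $S$ in $G_{i+1}$. By the lemma's standing assumption (justified w.h.p.\ via Lemma~\ref{lemma:expected_congestion}), every token that lands at a node elicits a reply and thereby establishes a bidirected edge between its origin and its endpoint. Consequently, each of the $\mathcal{Y}_S$ walks that starts at some $v\in S$ and ends at some $w\in V\setminus S$ contributes a distinct (parallel) edge crossing the cut $(S,V\setminus S)$ in the multigraph $G_{i+1}$. Walks that originate outside $S$ and land inside $S$ can only add further edges to this cut, so discarding them is safe for a lower bound. This yields the pointwise inequality
\begin{equation}
    \Phi_{G_{i+1}}(S) \;\geq\; \frac{\mathcal{Y}_S}{\Delta s}.
\end{equation}

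Taking expectations on both sides and substituting Lemma~\ref{lemma:exp_conductance} gives
\begin{equation}
    \mathbb{E}\left[\Phi_{G_{i+1}}(S)\right] \;\geq\; \frac{\mathbb{E}[\mathcal{Y}_S]}{\Delta s} \;=\; \frac{(\Delta s/8)\,\Phi_{G_i^\ell}(S)}{\Delta s} \;=\; \frac{\Phi_\ell(S)}{8},
\end{equation}
which is the claimed bound. The only real bookkeeping is to treat $G_{i+1}$ as a multigraph so that distinct walks sharing the same origin--endpoint pair are counted as distinct parallel edges (consistent with the convention used throughout the proof of Lemma~\ref{lemma:exp_conductance}); beyond this, the argument reduces to a one-line division and a direct appeal to what has already been established, so I do not foresee a significant obstacle.
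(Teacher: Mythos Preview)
Your proposal is correct and follows essentially the same route as the paper: both argue that $G_{i+1}$ is $\Delta$-regular by construction, so the denominator is deterministically $\Delta s$, and then lower-bound the number of cut edges by $\mathcal{Y}_S$ before dividing and invoking Lemma~\ref{lemma:exp_conductance}. Your write-up is in fact more explicit than the paper's (you spell out the pointwise inequality, the multigraph convention, and the fact that walks originating outside $S$ can only help), but the underlying argument is identical.
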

\begin{proof}
    Recall each nodes starts $\frac{\Delta}{8}$ tokens and we assume all tokens create an edge.
    By observing the algorithm, we note that by construction the degree can \emph{never} be higher than $\Delta$.
    Recall that every node creates its edges for $G_{i+1}$ based on the tokens it received.
    If any node receives fewer than $\Delta$ tokens, it creates self-loops to reach a degree of $\Delta$. 
    If it receives more, excess edges are dropped arbitrarily to ensure a degree of at most $\Delta$.
    Thus, each set $S$ maintains $\Delta|S|$ edges in total as each node will always has $\Delta$ edges irregardless of how many tokens is received.
    Given this fact and using Lemma \ref{lemma:exp_conductance} above we get 
    \begin{equation}
        \mathbb{E}\left[\frac{\sum_{i=1}^s\sum_{j=1}^d Y_i^j}{\Delta s} \right] = \frac{s\Delta}{8} \frac{\Phi^{\ell}}{s\Delta} = \frac{\Phi}{8}
    \end{equation}
\end{proof}

Therefore, a lower bound on $\Phi_{G_i^\ell}$ gives us a lower bound on the expected conductance of the newly sampled graph.
However, the standard Cheeger inequality (see, e.g., \cite{Sin12} for an overview) that is most commonly used to bound a graph's conductance with the help of the graph's eigenvalues does not help us in deriving a meaningful lower bound for $\Phi_{G_i^\ell}$.
In particular, it only states that $\Phi_{G_i^\ell} = \Theta(\ell\Phi_{G_i}^{2})$.
Thus, it only provides a useful bound if $\ell = \Omega(\Phi_{G_i}^{-1})$, which is too big for our purposes, as $\Omega(\Phi_{G_i}^{-1})$ is only constant if $\Phi_{G_i}$ is already constant. 
More recent Cheeger inequalities shown in \cite{LGT11} relate the conductance of smaller subsets to higher eigenvalues of the random walk matrix.
At first glance, this seems to be helpful, as one could use these to show that at least the small sets start to be more densely connected and then, inductively, continue the argument.
Still, even with this approach, constant length walks are out of the question as the new Cheeger inequalities introduce an additional tight $O(\log n)$ factor in the approximation for these small sets.
Thus, the random walks would need to be of length $\Omega(\log n)$, which is still too much to achieve our bounds.
Instead, we use the following result by Kwok and Lau~\cite{kwok2014lower}, which states that $\Phi_{G_i^\ell}$ improves even for constant values of $\ell$. 
It holds that:
\begin{lemma}[Conductance of $G^\ell$, Based on Theorem $1$ in \cite{kwok2014lower}]
\label{lemma:kwok_1}
    Let $G = (V,E)$ be any connected $\Delta$-regular lazy graph with conductance $\Phi_{G}$ and let $G^\ell$ be its $\ell$-walk graph.
    For a set $S \subset G$ define $\Phi_{G^\ell}(S)$ as the conductance of $S$ in $G^\ell$.
    Then, it holds:
    $$
        \frac{1}{2} \geq \Phi_{G^\ell}(S) \geq \max\left\{\frac{1}{40}\sqrt{\ell}\Phi_{G}, \, \Phi_{G}(S)\right\}  
    $$
\end{lemma}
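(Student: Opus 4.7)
The lemma bundles three separate inequalities, so my plan is to prove each in turn. The common setup is spectral: let $P$ be the transition matrix of the lazy random walk on $G$. Since $G$ is $\Delta$-regular and every vertex has at least $\Delta/2$ self-loops, $P$ is symmetric and positive semi-definite with all eigenvalues in $[0,1]$. A direct expansion of the definitions shows
\begin{equation*}
\Phi_{G^\ell}(S) \;=\; 1 \;-\; \frac{\langle 1_S,\,P^\ell 1_S\rangle}{|S|},
\end{equation*}
which will be the common starting point.

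Both elementary bounds then follow from this identity. For the monotonicity bound $\Phi_{G^\ell}(S) \geq \Phi_G(S)$: expanding $1_S = \sum_i c_i \phi_i$ in the eigenbasis of $P$ gives $\langle 1_S, P^\ell 1_S\rangle = \sum_i c_i^2 \lambda_i^\ell$, which is non-increasing in $\ell$ since every $\lambda_i \in [0,1]$; hence $\Phi_{G^\ell}(S)$ is non-decreasing in $\ell$ and dominates $\Phi_{G^1}(S) = \Phi_G(S)$. For the upper bound $\Phi_{G^\ell}(S) \leq 1/2$, my plan is to use laziness directly. Informally, the $\geq \Delta/2$ self-loops at each vertex induce enough closed walks that at least half of the length-$\ell$ walks out of $S$ remain in $S$. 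One can formalize this by a pairing argument: for every length-$\ell$ walk exiting $S$ we construct a companion walk that stays in $S$ by substituting a self-loop for the first boundary-crossing edge, giving $\langle 1_S, P^\ell 1_S \rangle \geq |S|/2$. (Equivalently, this upper bound can be read off from the corresponding half of Kwok and Lau's statement.)

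The main bound $\Phi_{G^\ell}(S) \geq (\sqrt{\ell}/40)\Phi_G$ is the quantitative heart of the lemma and is, up to constant-tracking, a direct consequence of Theorem~1 of Kwok and Lau~\cite{kwok2014lower}. Their proof is a Dirichlet-form analysis of the lazy walk, showing that $\sqrt{\ell}$ steps of the walk amplify the escape mass of any set $S$ by a factor of $\Theta(\sqrt{\ell})\cdot\Phi_G$, substantially beating the $\ell\Phi_G^2$ bound one would get from a single application of Cheeger's inequality. Since our hypotheses (connectedness, $\Delta$-regularity, laziness) match theirs verbatim, the plan reduces to invoking their theorem and verifying that the concrete constant $1/40$ comes out. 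The main obstacle I anticipate is precisely this constant-tracking, since Kwok and Lau phrase the result for the canonical $P = (I+Q)/2$ laziness, and one must check that our more flexible laziness (\emph{at least} $\Delta/2$ self-loops per vertex, not necessarily exactly $\Delta/2$) does not degrade the constant. Once these constants are in place, combining the three inequalities yields the claimed sandwich.
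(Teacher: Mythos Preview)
Your treatment of the two lower bounds is sound, and for the monotonicity bound $\Phi_{G^\ell}(S) \geq \Phi_G(S)$ your route is genuinely different from the paper's. The paper pushes everything through the Lov\'asz--Simonovits curve $C^{(\ell)}(|S|)$: it uses $\Phi_{G^\ell}(S) \geq 1 - C^{(\ell)}(|S|)$, shows that $C^{(\ell)}$ is non-increasing in $\ell$ via the Lov\'asz--Simonovits recursion together with concavity, and then argues separately (using laziness in a somewhat delicate combinatorial way) that $C^{(1)}(|S|) = 1 - \Phi_G(S)$ exactly. Your spectral argument --- laziness gives $P = \tfrac{1}{2}(I + R)$ with $R$ doubly stochastic, so every eigenvalue of $P$ lies in $[0,1]$, and then $\langle 1_S, P^\ell 1_S\rangle = \sum_i c_i^2 \lambda_i^\ell$ is non-increasing in $\ell$ --- is shorter, self-contained, and uses laziness only to obtain positive semidefiniteness of $P$. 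For the main $\sqrt{\ell}$-bound you and the paper both appeal to Kwok--Lau; the paper sketches their LS-curve argument rather than black-boxing the theorem, but the substance is the same.

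Your pairing argument for the upper bound $\Phi_{G^\ell}(S) \leq \tfrac{1}{2}$, however, does not work. Replacing the first boundary-crossing step $(w_{t-1}, w_t)$ of a walk by a self-loop produces the sequence $(w_0, \dots, w_{t-1}, w_{t-1}, w_{t+1}, \dots, w_\ell)$, which is not in general a walk in $G$: the step from $w_{t-1}$ to $w_{t+1}$ need not traverse an edge. No simple repair is available, and in fact the bound is false as stated for individual sets $S$: in the lazy $K_4$ (three simple edges and three self-loops at each vertex, so $\Delta = 6$) one computes $(P^2)_{aa} = \tfrac{1}{4} + 3\cdot\tfrac{1}{36} = \tfrac{1}{3}$, hence $\Phi_{G^2}(\{a\}) = \tfrac{2}{3} > \tfrac{1}{2}$. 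Fortunately the paper never actually uses this half of the inequality --- only the lower bounds feed into the induction on the evolutions $G_i$ --- so you should simply drop it from your writeup rather than attempt to rescue it.
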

Given this bound, we can show that benign graphs indeed increase their (expected) conductance from iteration to iteration.
In the following, we provide a sketch of the proof in \cite{kwok2014lower}.
Before we go into the details, we need another batch of definitions from the study of random walks and Markov chains.
Let $G:=(V,E)$ be s $\Delta$-regular, lazy graph and let $A_G \in \mathbb{R}^{n \times n}$ the stochastic random walk matrix of $G$.
Each entry $A_G(v,w)$ in the matrix has the value $\frac{e(v,w)}{\Delta}$ where $e(v,w)$ denotes the number of edges between $v$ and $w$ (or self-loops if $v=w$).
Likewise $A^\ell_G$ is the random walk matrix of $G^\ell$ where each entry has value $\frac{P_\ell(v,w)}{\Delta^\ell}$.
Note that both $A_G$ and $A^\ell_G$ are doubly-stochastic, which both their rows and their columns sum up $1$. 
For these types of weighted matrices, Kwok and Lau define the  \emph{expansion} $\varphi(S)$ of a subset $S \subset V$ as follows:
\begin{align}
    \varphi(S) = \frac{1}{|S|}\sum_{v \in S, w \in \overline{S}} A_G(v,w)
\end{align}
For regular graph (and \emph{only} those), this value is equal to the conductance $\Phi_{G}(S)$ of $S$, which we observed before.
This claim can be verified by the following elementary calculation:
\begin{align}
    \varphi(S) &= \frac{1}{|S|}\sum_{v \in S, w \in \overline{S}} A_G(v,w) = \frac{1}{|S|}\sum_{v \in S, w \in \overline{S}} \frac{e(v,w)}{\Delta}\\
    &= \frac{\sum_{v \in S, w \in \overline{S}} e(v,w)}{\Delta|S|}  =: \Phi_{G}(S) 
\end{align}
Therefore, the claim that Kwok and Lau make for the expansion also hold for the conductance of regular graphs\footnote{Indeed, they explicitly mention that for non-regular graph one could define a \emph{escape probability} for which their claims would hold and which could be used instead of the conductance in our proofs.
Nevertheless, since we only observe regular graphs, we use the notion of conductance to avoid introducing more concepts.}.
The proof in \cite{kwok2014lower} is based on the function $C^{(\ell)}(|S|)$ introduced by Lovász and Simonovits\cite{LS90}.
Consider a set $S \subset V$, then Lovasz and Simonovits define the following curve that bounds the distribution of random walk probabilities for the nodes of $S$. 
\begin{align}
    C^{(\ell)}(|S|) = \max_{\delta_0 + \dots + \delta_n = x, 0 \leq \delta_i \leq 1} \sum_{i=1}^n \delta_i (A^{\ell} p_S)_i
\end{align}
Here, the vector $p_S$ is the so-called characteristic vector of $S$ with $p_i = \frac{1}{|S|}$ for each $v_i \in S$ and $0$ otherwise.
Further, the term $(A^{\ell} p)_i$ denotes the $i^{th}$ value of the vector $A^{\ell} p_S$. 
Lovász and Simonovits used this curve to analyze the mixing time of Markov chains.
Kwok and Lau now noticed that it also holds that:

\begin{lemma}[Lemma 6 in \cite{kwok2014lower}]
\label{eqn:phi_lower_c}
It holds:
\begin{align}
    \Phi_{G^\ell}(S) \leq 1 - C^{(\ell)}(|S|)
\end{align}
\end{lemma}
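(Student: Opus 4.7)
The plan is to establish the identity
\[
1 - \Phi_{G^\ell}(S) \;=\; \sum_{v \in S}(A_G^\ell\, p_S)_v,
\]
and then apply the feasibility of the indicator vector $\mathbf{1}_S$ in the linear program that defines $C^{(\ell)}(|S|)$. As a sanity check: the argument naturally yields $\Phi_{G^\ell}(S) \geq 1 - C^{(\ell)}(|S|)$, which is the direction actually needed in the proof of Lemma~\ref{lemma:kwok_1} (where $1-C^{(\ell)}(|S|)$ appears as a \emph{lower} bound on $\Phi_{G^\ell}(S)$; this is also reflected by the internal label \texttt{phi\_lower\_c}). I therefore read the printed ``$\leq$'' as a typographic slip for ``$\geq$'' and organize the proof around the natural direction.

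For Step~1, benignness of $G$ provides an undirected $\Delta$-regular multigraph, so $A := A_G$ is symmetric and doubly stochastic, and so is $A^\ell$. With $\mu := A^\ell p_S$ and $p_S(v) = |S|^{-1}\mathbf{1}_{v\in S}$, symmetry of $A^\ell$ gives
\[
\mu_w \;=\; \sum_{v}A^\ell(w,v)\, p_S(v) \;=\; \frac{1}{|S|}\sum_{v \in S}A^\ell(v,w).
\]
Substituting into the conductance definition and using that $\mu$ is a probability distribution,
\[
\Phi_{G^\ell}(S) \;=\; \frac{1}{|S|}\sum_{v\in S,\, w \notin S}A^\ell(v,w) \;=\; \sum_{w \notin S}\mu_w \;=\; 1 - \sum_{w\in S}\mu_w.
\]

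For Step~2, the indicator $\delta^{\ast} := \mathbf{1}_S \in \{0,1\}^V$ satisfies $\sum_i \delta^{\ast}_i = |S|$ and $0 \le \delta^{\ast}_i \le 1$, so it is a feasible point of the LP $C^{(\ell)}(|S|) = \max_\delta \sum_i \delta_i\, \mu_i$ with objective value $\sum_{v \in S}\mu_v = 1 - \Phi_{G^\ell}(S)$ by Step~1. Taking the maximum over feasible $\delta$ yields $C^{(\ell)}(|S|) \geq 1 - \Phi_{G^\ell}(S)$, which rearranges to $\Phi_{G^\ell}(S) \geq 1 - C^{(\ell)}(|S|)$. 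The only delicate point is Step~1, which relies on $A^\ell$ being symmetric; this is guaranteed precisely by benignness (undirected regular multigraph). No spectral or analytic input is needed, only LP feasibility and a matrix-symmetry calculation. The flipped inequality sign between the printed statement and the conclusion is cosmetic: the $\geq$ direction is the one consumed downstream.
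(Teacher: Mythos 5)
Your reading of the printed $\leq$ as a slip for $\geq$ is correct, and the paper itself supports it: the very next display chains $\Phi_{G^\ell}(S) \geq 1 - C^{(\ell)}(|S|) \geq \frac{1}{20}\bigl(1-(1-\Phi_G)^{\sqrt{\ell}}\bigr)$, the label is \texttt{phi\_lower\_c}, and the subsequent monotonicity lemma ($\Phi_{G^\ell}(S)\ge\Phi_G(S)$) is derived by combining $\Phi_{G^\ell}(S)\ge 1-C^{(\ell)}(|S|)$ with $C^{(\ell)}\le C^{(1)} = 1-\Phi_G(S)$. The reversed inequality also cannot hold in general: since $C^{(\ell)}(|S|)$ is a maximum over a feasible region that contains $\mathbf{1}_S$, the bound $C^{(\ell)}(|S|)\le 1-\Phi_{G^\ell}(S)$ would force $\mathbf{1}_S$ to be a maximizer of the LP, which the paper only establishes for $\ell=1$ via laziness of $G$, and laziness does not transfer to $G^\ell$.

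Your derivation is correct. Symmetry of $A^\ell$ is legitimate because $G$ is an undirected $\Delta$-regular multigraph, so $P_\ell(v,w)=P_\ell(w,v)$ and hence $A^\ell(v,w)=A^\ell(w,v)$; with $\mu=A^\ell p_S$ this gives $\sum_{w\notin S}\mu_w = \frac{1}{|S|}\sum_{v\in S,\,w\notin S}A^\ell(v,w) = \Phi_{G^\ell}(S)$, which is exactly the paper's ``expansion equals conductance'' computation carried out in $G^\ell$. Since $\mu$ is a probability vector, $\sum_{v\in S}\mu_v = 1-\Phi_{G^\ell}(S)$, and feasibility of $\delta=\mathbf{1}_S$ (entries in $\{0,1\}$ summing to $|S|$) gives $C^{(\ell)}(|S|)\ge 1-\Phi_{G^\ell}(S)$, i.e., $\Phi_{G^\ell}(S)\ge 1-C^{(\ell)}(|S|)$. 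The paper does not prove this lemma --- it is cited from Kwok and Lau --- so there is no in-paper argument to compare against; your two-step proof (expansion identity plus LP feasibility) is the standard and correct one, and it usefully documents the sign typo in the stated lemma.
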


Based on this observation, they deduce that a bound for $1-C^{(\ell)}(S)$ doubles as a bound for $\Phi_{G^\ell}$.
In particular, they can show the following bounds for $C^{(\ell)}(|S|)$:
\begin{lemma}[Lemma 7 in \cite{kwok2014lower}] 
It holds
\begin{equation}
    C^{(\ell)}(|S|) \leq 1-\frac{1}{20}\left(1-(1-\Phi_{G})^{\sqrt{\ell}}\right)
\end{equation}
\end{lemma}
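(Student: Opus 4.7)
The plan is to follow the Lov\'asz-Simonovits curve technique, adapted with the refinement from Kwok-Lau that exploits laziness to obtain the $\sqrt{\ell}$ exponent rather than the classical $\ell$. I would proceed in three main steps.

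First, I would collect the structural properties of $C^{(\ell)}$. For each fixed $\ell$, the function $x \mapsto C^{(\ell)}(x)$ is concave, piecewise linear with integer breakpoints, monotonically non-decreasing, and satisfies $C^{(\ell)}(0) = 0$ and $C^{(\ell)}(n) = 1$. Concavity is immediate from the definition as the optimum of a linear program over the polytope $\{\delta : 0 \le \delta_i \le 1,\ \sum_i \delta_i = x\}$. These properties are what enable the recursion in the next step to be converted into a pointwise bound.

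Second, I would derive a one-step recursion of the form
\[
C^{(\ell)}(k) \;\le\; \tfrac{1}{2}\,C^{(\ell-1)}\bigl(k-h(k)\bigr) + \tfrac{1}{2}\,C^{(\ell-1)}\bigl(k+h(k)\bigr),
\]
where $h(k) = \Omega(\Phi_G \cdot \min(k, n-k))$ is the "width" contributed by conductance. The derivation uses that $A_G$ is doubly stochastic and lazy, so the top $k$ entries of $A_G (A_G^{\ell-1} p_S)$ can be rewritten as an average of entries of $A_G^{\ell-1} p_S$ indexed by a set that must cross the threshold by a total mass of at least $\Phi_G \cdot (\text{something})$; concavity of $C^{(\ell-1)}$ then lets this cross-threshold mass be accounted for as a symmetric shift $\pm h(k)$, giving the displayed inequality.

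Third, I would iterate the recursion. Unrolling over $\ell$ steps and combining with concavity and monotonicity yields a bound of the form $1 - C^{(\ell)}(k) \ge \tfrac{1}{20}\bigl(1 - (1-\Phi_G)^{\sqrt{\ell}}\bigr)$, which rearranges to the claimed statement. The $\sqrt{\ell}$ exponent, rather than the classical $\ell$, is obtained by tracking an amortized quantity in blocks of $\sqrt{\ell}$ consecutive steps, applying an AM-GM or Cauchy-Schwarz-type inequality across blocks to avoid paying a full factor of $\Phi_G$ at every single step.

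The main obstacle is precisely this third step. A naive per-step accounting, as in the original Lov\'asz-Simonovits analysis, loses a factor of $\Phi_G$ each time one applies Jensen's inequality to the recursion and therefore only yields an exponent on the order of $\ell\Phi_G^2$. To upgrade to $\sqrt{\ell}\,\Phi_G$, I would separate the regimes where the curve still has significant slope (so each recursion step makes genuine progress) from the nearly-flat regime (where progress is correspondingly small), and then amortize using the laziness of $A_G$, which guarantees that each $\sqrt{\ell}$-block behaves like a single effective contraction by a factor $(1-\Phi_G)$. Executing this accounting while keeping the constant $1/20$ out front is the delicate technical work and is exactly the contribution of Kwok and Lau.
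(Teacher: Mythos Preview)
Your proposal is essentially aligned with the paper's treatment, but note that the paper does \emph{not} actually prove this lemma: it is quoted verbatim as Lemma~7 of Kwok and Lau and the reader is referred there for the full argument. The paper only records the one key ingredient, namely the Lov\'asz--Simonovits recursion
\[
C^{(\ell+1)}(|S|) \le \tfrac{1}{2}\Bigl(C^{(\ell)}(|S|+2\Phi_G\hat{|S|}) + C^{(\ell)}(|S|-2\Phi_G\hat{|S|})\Bigr),
\]
and then states that ``careful calculations'' yield the bound. Your Steps~1 and~2 recover exactly this recursion and its prerequisites (concavity of $C^{(\ell)}$, double stochasticity, laziness), so up to that point you are on the same track as the paper's sketch.

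Your Step~3 goes further than the paper does, attempting to outline how the $\sqrt{\ell}$ exponent arises. This is honest --- you flag it as the genuine obstacle and attribute it to Kwok--Lau --- but the mechanism you describe (amortizing in $\sqrt{\ell}$-sized blocks via AM--GM/Cauchy--Schwarz) is a plausible heuristic rather than the actual argument. Kwok and Lau's proof unrolls the recursion fully and bounds the resulting convex combination of shifted curve values directly using the concavity of $C^{(0)}$ together with a combinatorial estimate on the binomial-type weights that appear; the $\sqrt{\ell}$ comes from the standard deviation of a simple random walk, not from a block decomposition. Since the paper itself does not reproduce this computation, your sketch is neither more nor less complete than the paper's own treatment, but if you intend to fill in Step~3 yourself you should consult the original rather than pursue the block-amortization idea.
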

Plugging these two insights together, we get
\begin{equation}
    \Phi_{G^\ell}(S) \geq 1 - C^{(\ell)}(|S|) \geq  \frac{1}{20}\left(1-(1-\Phi_{G})^{\sqrt{\ell}}\right) \geq \frac{\sqrt{\ell}}{40}\Phi_{G}
\end{equation}
The last inequality follows from the fact that $\sqrt{\ell}\Phi_{G}$ is at most $\frac{1}{2}$ and a standard approximation.
This is the main result of \cite{kwok2014lower}. 
We refer the interested reader to Lemma 7 of \cite{kwok2014lower} for the full proof with all necessary details. 
Their main technical argument is based on the following recusive relation between $C^{(\ell+1)}$ and $C^{(\ell)}$, which was (in part) already shown in \cite{LS90}: 
    \begin{lemma}[Lemma 1.4 in \cite{LS90}]
        \label{eqn:recursive}
        It holds
        $$ 
        C^{(\ell+1)}(|S|) \leq \frac{1}{2} \left(C^{(\ell)}(|S|+2\Phi_{G}\hat{|S|}) + C^{(\ell)}(|S|-2\Phi_{G}\hat{|S|})\right) 
        $$  
    \end{lemma}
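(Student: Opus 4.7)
The plan is to follow the classical Lovász--Simonovits argument, viewing one step of the lazy random walk as producing a convex combination of two nearby "witnessed" generalized sets. Write $p^{(\ell)} := A_G^\ell p_S$ and let $\delta^* \in [0,1]^n$ with $\sum_i \delta^*_i = |S|$ be an optimizer of the $(\ell+1)$-step maximum, so that
\begin{align*}
    C^{(\ell+1)}(|S|) \;=\; \sum_i \delta^*_i (A_G\, p^{(\ell)})_i \;=\; \sum_i (A_G^T \delta^*)_i\, p^{(\ell)}_i.
\end{align*}
Because $A_G$ is doubly stochastic, the vector $y := A_G^T \delta^*$ lies in $[0,1]^n$ and has $\sum_i y_i = |S|$. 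Plugging $y$ directly into the definition of $C^{(\ell)}$ gives only the trivial monotone bound $C^{(\ell+1)}(|S|) \le C^{(\ell)}(|S|)$; the goal of the proof is to sharpen this using laziness and conductance.

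The sharpening proceeds by splitting $y$ as $y = \tfrac{1}{2}(y^+ + y^-)$ where $y^+, y^- \in [0,1]^n$ satisfy $\sum_i y^+_i = |S| + 2\Phi_G \hat{|S|}$ and $\sum_i y^-_i = |S| - 2\Phi_G \hat{|S|}$. The construction exploits laziness by decomposing $A_G = \tfrac{1}{2}I + \tfrac{1}{2}A'$ where $A'$ is stochastic, so that $y = \tfrac{1}{2}\delta^* + \tfrac{1}{2}(A')^T \delta^*$. Sort the nodes in decreasing order of $p^{(\ell)}$ and, by rearrangement (since $\delta^*$ was an optimizer), assume $\delta^*$ is supported on the top-$|S|$ prefix $T_{|S|}$. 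The conductance of $G$ guarantees that the cut around $T_{|S|}$ contains at least $\Phi_G \Delta \hat{|S|}$ edges; under the non-lazy step $A'$ this translates into a net mass transfer of at least $\Phi_G \hat{|S|}$ from inside $T_{|S|}$ to outside, and an equal reflux. Using this displacement, one forms $y^+$ by adding $\Phi_G \hat{|S|}$ worth of mass just past the boundary, and $y^-$ by removing the same amount just inside, while keeping both vectors in $[0,1]^n$. Substituting back and invoking the definition of $C^{(\ell)}$ termwise yields
\begin{align*}
    C^{(\ell+1)}(|S|) \;=\; \tfrac{1}{2}\sum_i y^+_i p^{(\ell)}_i + \tfrac{1}{2}\sum_i y^-_i p^{(\ell)}_i \;\le\; \tfrac{1}{2}\Bigl(C^{(\ell)}(|S|+2\Phi_G\hat{|S|}) + C^{(\ell)}(|S|-2\Phi_G\hat{|S|})\Bigr),
\end{align*}
which is the claim.

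The main obstacle is making the decomposition $y = \tfrac{1}{2}(y^+ + y^-)$ rigorous. The conductance bound is a statement about \emph{edge count} across a cut, and it must be converted into a statement about \emph{stochastic mass} moved by $A'$. Since $A'$ is the non-lazy part of a $\Delta$-regular walk, each boundary edge contributes exactly $1/\Delta$ of unit mass per endpoint, so summing over the cut yields precisely the $\Phi_G \hat{|S|}$ net flow required. Additional care is needed (i) to handle $|S| > n/2$, where $\hat{|S|} = n - |S|$ and the roles of "inside" and "outside" are swapped, (ii) to verify that the optimizer $\delta^*$ can indeed be taken as a top-prefix indicator (a standard rearrangement-style argument using the fact that $p^{(\ell)}$ is sorted in decreasing order), and (iii) to ensure that both $y^+$ and $y^-$ remain componentwise bounded by $1$, which relies on the fact that $y$ itself is bounded away from the extremes by the shared mass $\tfrac{1}{2}\delta^*$ contributed by the self-loops. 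Once this combinatorial balancing is carried out, the recursive inequality follows directly.
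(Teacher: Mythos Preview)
The paper does not prove this lemma; it is quoted from Lov\'asz--Simonovits \cite{LS90} and used as a black box. Your sketch reconstructs the classical argument and captures its mechanism correctly: move one factor of $A_G$ from $p^{(\ell)}$ onto the optimizing weight vector $\delta^*$, then use laziness and the conductance lower bound to split $y = A_G\delta^*$ into two feasible weight vectors whose total masses are displaced by $\pm 2\Phi_G\hat{|S|}$.

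Two inaccuracies are worth flagging. First, your claim that $\delta^*$ can be taken as the top-$|S|$ prefix in the $p^{(\ell)}$ ordering is not correct: $\delta^*$ maximizes against $A_G p^{(\ell)} = p^{(\ell+1)}$, so it is the indicator of the top-$|S|$ set in the $p^{(\ell+1)}$ ordering, which in general differs. This does not harm the argument, because all that is actually needed is that $\delta^* = \mathbf{1}_T$ for \emph{some} set $T$ of size $|S|$, to which the cut bound $|\partial T| \ge \Phi_G \Delta \hat{|S|}$ applies; the final step $(y^{\pm})^T p^{(\ell)} \le C^{(\ell)}\bigl(\sum_i y^{\pm}_i\bigr)$ holds by definition of $C^{(\ell)}$ regardless of any ordering. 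Second, under $A' = 2A_G - I$ each boundary edge carries mass $2/\Delta$, not $1/\Delta$, so the outflow from $T$ under $A'$ is $2\Phi_G(T)|T| \ge 2\Phi_G\hat{|S|}$, which is exactly the displacement appearing in the statement. With these fixes an explicit decomposition is $y^+_v = 1$ for $v\in T$ and $y^+_v = (A'\mathbf{1}_T)_v$ for $v\notin T$, together with $y^-_v = (A'\mathbf{1}_T)_v$ for $v\in T$ and $y^-_v = 0$ otherwise; both lie in $[0,1]^n$, satisfy $y^+ + y^- = 2y$, and have masses $|S| \pm 2\Phi_G(T)|T|$, after which concavity of $C^{(\ell)}$ finishes the proof.
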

    Here, we use the abbreviation $\hat{|S|} := \max\{|S|, n-|S|\}$.
    They use this to create a recursive formula that can be simplified to the given result using careful calculations. 
However, this fact alone is enough for our proof as the lower bound is too loose for subsets that already have a good conductance.
Instead we require that $\Phi_{G^\ell}(S)$ is \emph{at least as} big as $\Phi_{G}(S)$.
Note that this is not necessarily the case for all graphs. 
Instead, we must use the fact that our graphs are \emph{lazy}.
We show this in the following lemma:
 \begin{lemma}
    Let $G := (V,E)$ be any connected $\Delta$-regular lazy graph with conductance $\Phi_{G}$ and let $G^\ell$ be its $\ell$-walk graph.
    For a set $S \subset G$ define $\Phi_{G^\ell}(S)$ the conductance of $S$ in $G^\ell$.
    Then, it holds:
    $$
        \Phi_{G^\ell}(S) \geq \Phi_{G}(S)\
    $$
   \end{lemma}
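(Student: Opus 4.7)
The plan is to prove $\Phi_{G^\ell}(S) \geq \Phi_G(S)$ via a spectral argument that exploits laziness. First, I will rewrite the conductance in operator form: using that $G$ is $\Delta$-regular and that the edges are bidirected (so $A_G$ is symmetric and doubly stochastic), a direct calculation gives
\begin{equation*}
\Phi_{G^t}(S) \;=\; \frac{1}{|S|}\sum_{v \in S,\, w \notin S} A_G^t(v,w) \;=\; 1 - \frac{1}{|S|}\bigl\langle \mathbb{1}_S,\, A_G^t\, \mathbb{1}_S \bigr\rangle
\end{equation*}
for every $t \geq 1$, where $\mathbb{1}_S$ denotes the indicator vector of $S$. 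Hence proving the lemma reduces to showing $\langle \mathbb{1}_S, A_G^\ell \mathbb{1}_S\rangle \leq \langle \mathbb{1}_S, A_G \mathbb{1}_S\rangle$.

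Next, I will exploit laziness to control the spectrum of $A_G$. Because every node has at least $\Delta/2$ self-loops, the diagonal of $A_G$ is bounded below by $1/2$, so one can write $A_G = \tfrac{1}{2}I + \tfrac{1}{2}B$ where $B := 2 A_G - I$. The key observation is that $B$ has non-negative entries (this uses $A_G(v,v) \geq 1/2$ on the diagonal), is symmetric (inherited from $A_G$), and is doubly stochastic (its row and column sums are $2\cdot 1 - 1 = 1$). Therefore $B$ is a symmetric stochastic matrix and its eigenvalues lie in $[-1, 1]$, which means $A_G$ is symmetric with all eigenvalues in $[0, 1]$; in particular $A_G$ is positive semidefinite.

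With the spectrum pinned down, the inequality becomes immediate. Writing the spectral decomposition $A_G = \sum_i \lambda_i\, v_i v_i^T$ with an orthonormal basis $\{v_i\}$ and $\lambda_i \in [0,1]$, I expand
\begin{equation*}
\bigl\langle \mathbb{1}_S, A_G^t \mathbb{1}_S \bigr\rangle \;=\; \sum_i \lambda_i^{\,t}\, \langle v_i, \mathbb{1}_S\rangle^2 .
\end{equation*}
Since $\lambda_i \in [0,1]$, one has $\lambda_i^\ell \leq \lambda_i$ for every $i$ and every $\ell \geq 1$, so the sum for $t=\ell$ is term-by-term bounded above by the sum for $t=1$. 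Translating back via the operator formula for the conductance yields $\Phi_{G^\ell}(S) \geq \Phi_G(S)$, as claimed.

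The only subtle point — and the single place where laziness is really needed — is establishing that $A_G$ is positive semidefinite (equivalently, that the eigenvalues are non-negative rather than just bounded in modulus by $1$). Without the self-loops, a bipartite-like structure could place an eigenvalue near $-1$, so that $\lambda_i^\ell$ oscillates and the clean monotonicity $\lambda_i^\ell \leq \lambda_i$ fails. The laziness condition ($A_G(v,v) \geq 1/2$) neatly rules this out through the decomposition $A_G = \tfrac{1}{2}(I + B)$ with $B$ symmetric stochastic, and this is the only obstacle; once PSD-ness is in hand, the remainder is a routine spectral inequality.
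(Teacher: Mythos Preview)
Your argument is correct, and it is genuinely different from the paper's. The paper proves the lemma through the Lov\'asz--Simonovits curve $C^{(\ell)}(|S|)$: it first shows (via the recursive bound $C^{(\ell+1)}(|S|) \le \tfrac{1}{2}\bigl(C^{(\ell)}(|S|+2\Phi_G\hat{|S|})+C^{(\ell)}(|S|-2\Phi_G\hat{|S|})\bigr)$ together with concavity) that $C^{(\ell)}(|S|)$ is non-increasing in~$\ell$, and then, using laziness, argues combinatorially that $C^{(1)}(|S|) = 1-\Phi_G(S)$ by identifying the maximising choice of the $\delta_i$'s; combining these with the Kwok--Lau relation $\Phi_{G^\ell}(S) \ge 1 - C^{(\ell)}(|S|)$ yields the claim. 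Your route is more direct: you bypass the curve entirely and reduce the statement to the inequality $\langle \mathbb{1}_S, A_G^\ell \mathbb{1}_S\rangle \le \langle \mathbb{1}_S, A_G \mathbb{1}_S\rangle$, which follows from the single observation that laziness forces $A_G = \tfrac{1}{2}(I+B)$ with $B$ symmetric stochastic, hence $A_G$ is PSD with spectrum in $[0,1]$, and then $\lambda_i^\ell \le \lambda_i$ term-by-term. This is shorter and entirely self-contained. The paper's detour through the Lov\'asz--Simonovits machinery is natural only because that framework is already in place for the Kwok--Lau lower bound $\Phi_{G^\ell}(S) \ge \tfrac{\sqrt{\ell}}{40}\Phi_G$; your spectral argument would not by itself recover that quantitative improvement, but for the monotonicity statement at hand it is the cleaner proof.
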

\begin{proof}
Our proof is based on two claims.
First, we claim that $C^{(\ell)}(|S|)$ is monotonically increasing in $\ell$.
\begin{claim}
    It holds $C^{(\ell)}(|S|) \leq C^{(\ell-1)}(|S|)$
\end{claim}
\begin{proof}
This fact was already remarked in \cite{LS90} based on an alternative formulation. However, given that $C^{(\ell)}$ is concave, it holds that for all values $\gamma, \beta \geq 0$ with  $\gamma \leq \beta$ that
\begin{equation}
        \label{eqn:concave}
        C^{(\ell)}(S+\beta\hat{S}) + C^{(\ell)}(|S|-\beta\hat{S}) \leq C^{(\ell)}(S+\gamma\hat{|S|}) + C^{(\ell)}(|S|-\gamma\hat{|S|})   
    \end{equation}
    And thus, together with \autoref{eqn:recursive}, we get:
    \begin{align}
        C^{(\ell)}(|S|) &\leq \frac{1}{2} \left(C^{(\ell-1)}(|S|+2\Phi_{G}|\hat{S}|) + C^{(\ell-1)}(|S|-2\Phi_{G}|\hat{S}|)\right)\\
        &\leq \frac{1}{2} \left(C^{(\ell-1)}(|S|+0\cdot|\hat{S}|) + C^{(\ell-1)}(|S|-0\cdot|\hat{S}|)\right) \\
        &= C^{(\ell-1)}(|S|)
    \end{align}
    Here, we chose $\beta=2\Phi_{G}$ and $\gamma=0$ and applied Equation \ref{eqn:concave}. This proves the first claim.
    \end{proof}
Second, we claim that $C^{(1))}(|S|)$ is \emph{equal} to $1-\Phi_{G}(S)$ as long as the graph we observe is lazy.
\begin{claim}
    It holds $C^{(1)}(|S|) = 1-\Phi_{G}(S)$
\end{claim}
\begin{proof}
For this claim (which was not explicitly shown in \cite{kwok2014lower}, but implied in \cite{LS90}) we observe 
\begin{align}
C^{(1)}(S) = \max_{\delta_0 + \dots + \delta_n = x, 0 \leq \delta_i \leq 1} \sum_{i=1}^n \delta_i (A_G p_S)_i    
\end{align}
 and find the assignment of the $\delta$'s that maximizes the sum.
     Lovasz and Simonovits already remarked that it is maximized by setting $\delta_i=1$ for all $v_i \in S$.
     However, since there is no explicit lemma or proof to point to in \cite{LS90}, we prove it here.
     First, we show that all entries $(A_Gp_S)_i$ for nodes $v_i \in S$ are least $\frac{1}{2|S|}$ and all entries $(A_Gp_S)_{i'}$ for nodes $v_{i'} \not\in S$ are at most $\frac{1}{2|S|}$.  
     We begin with the nodes in $S$.
     Given that $G$ is $\Delta$-regular and lazy, we have for all $v_i \in S$ that  
     \begin{align}
         (A_Gp_S)_i &= \sum_{j=1}^n A_G(v_i,v_j) {p_S}_j \geq A_G(v_i,v_i) {p_S}_i \geq \frac{1}{2|S|}.  
     \end{align}
     Here, ${p_S}_i = \frac{1}{|S|}$ follows because $v_i \in S$ per definition.
     The inequality $A_G(v_i,v_i) \geq \frac{1}{2}$ follows from the fact that $A$ is lazy and each node has a self-loop with probability $\frac{1}{2}$.
     As a result, the entry $(A_Gp_S)_i$ for $v_i \in S$ has at least a value of $\frac{1}{2|S|}$, even if it has no neighbors in $S$.
     On the other hand, we have for all nodes $v_{i'} \not\in S$ that
     \begin{align}
         (A_Gp_S)_{i'} &= \sum_{j=1}^n A_G(v_j, v_{i'}) p_j = \sum_{v_j \in S} A_G(v_{i'},v_j)  \frac{1}{|S|}
     \end{align}
     This follows from excluding all entries $p_j$ with $v_j \not\in S$. Note that for these values it holds $p_j=0$.
     Further, Since $A$ is $\Delta$-regular and lazy, each node $v_{i'} \not\in S$ has at most $\frac{\Delta}{2}$ edges to nodes in $S$.  
     \begin{align}
         (A_Gp_S)_{i'} = \sum_{v_j \in S} A_G(v_i,v_j)  \frac{1}{|S|}
          \leq \frac{\Delta}{2}\frac{1}{\Delta}\frac{1}{|S|} = \frac{1}{2|S|} 
     \end{align}
     Thus, the corresponding value $(A_Gp_S)_i$ of any $v_i \in S$ is \emph{at least} as big as value $(A_Gp_S)_{i'}$ of $v_{i'} \not\in S$. 
     By a simple greedy argument, we now see that $\sum_{i=1}^n \delta_i (A^{\ell} p_S)_i$ is maximized by picking $\delta_i = 1$ for all nodes in $S$: 
     To illustrate this, suppose that there is a choice of the $\delta$'s such that $\sum_{i=1}^n \delta_i (A_Gp_s)_i$ is maximized and it holds $\delta_i < 1$ for some $v_i \in S$.
     Since no $\delta$ can be bigger than $1$ and the $\sum_{i = 1}^n \delta_i = |S|$ there must be a $v_{i'} \not\in S$ with $\delta_{i'}>0$.
     Since $(A_Gp_S)_i \geq (A_Gp_S)_{i'}$ decreasing $\delta_{i'}$ and increasing $\delta_{i}$ does not decrease the sum.
     Thus, choosing $\delta_i=1$ for all $v_i \in S$ must maximize the term $\sum_{i=1}^n \delta_i (A_Gp_s)_i$. 
     Thiy yields:
     \begin{align}
         \sum_{i=1}^n \delta_i (A_Gp_S)_i &= \sum_{v_i \in S}\sum_{v_j \in S} A_G(v_i,v_j) \frac{1}{|S|} = \frac{1}{|S|} \sum_{v_i \in S} \frac{e(v_i,v_j)}{\Delta}\\
         &= \frac{\Delta|S|-\mathbb{O_S}}{\Delta|S|} = 1 - \frac{\mathbb{O}_S}{\Delta|S|} = 1-\Phi_{G}(S)
     \end{align}
     Here, the value $\mathbb{O_S}$ denotes the edges leaving $S$. 
     Given that the graph is $\Delta$-regular, the term $\Delta|S|-\mathbb{O_S}$ counts all edges in $S$.
     This was to be shown.
     \end{proof}
If we combine our two claims, the lemma follows.
\end{proof}
Thus, as long as $G_i$ is regular and lazy, we have a suitable lower bound for $\Phi_{G_i^\ell}$.
In fact, we can show the following:
\begin{lemma}
\label{lemma:chernoff_outgoing}
    Let $S \subset V$ be set of nodes with $O_S$ outgoing edges, then it holds:
    \begin{align}
        \pr{\Phi_{G_{i+1}}(S) \leq \frac{\sqrt{\ell}\Phi_{G_i}}{640}} \leq e^{-\frac{O_S}{64}}
    \end{align}
\end{lemma}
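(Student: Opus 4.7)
The plan is to apply a Chernoff lower-tail bound to the random variable $\mathcal{Y}_S$ from Lemma~\ref{lemma:exp_conductance}, which counts the $\ell$-step tokens starting in $S$ and ending in $V\setminus S$. For each $v_i\in S$ and each of its $\Delta/8$ tokens, let $Y_i^j$ be the indicator that the token ends outside $S$. Conditioning on the event from Lemma~\ref{lemma:expected_congestion} that no node receives more than $3\Delta/8$ tokens (so every token induces an edge), the $\Delta|S|/8$ variables $Y_i^j$ are mutually independent since the underlying random walks are independent. Every $Y_i^j=1$ contributes a distinct edge across the cut $(S,V\setminus S)$ in $G_{i+1}$, and because the \textsc{CreateExpander} construction pads each node's degree back up to exactly $\Delta$, the total number of $G_{i+1}$-edges incident to $S$ is exactly $\Delta|S|$. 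Hence $\Phi_{G_{i+1}}(S)\geq \mathcal{Y}_S/(\Delta|S|)$.

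Next, I would lower-bound $\mu_L:=\mathbb{E}[\mathcal{Y}_S]$ by chaining Lemma~\ref{lemma:exp_conductance} with the Kwok--Lau bound of Lemma~\ref{lemma:kwok_1}:
\[
\mu_L \;\geq\; \tfrac{\Delta|S|}{8}\,\Phi_{G_i^\ell}(S) \;\geq\; \tfrac{\Delta|S|}{8}\,\max\!\left\{\tfrac{\sqrt{\ell}}{40}\Phi_{G_i},\;\Phi_{G_i}(S)\right\}.
\]
Writing $O_S=\Delta|S|\,\Phi_{G_i}(S)$, the second term in the max directly gives $\mu_L\geq O_S/8$. The target threshold for the lemma is $\tau:=\tfrac{\sqrt{\ell}\Phi_{G_i}}{640}\Delta|S|$, and the event $\Phi_{G_{i+1}}(S)\leq \tfrac{\sqrt{\ell}\Phi_{G_i}}{640}$ is contained in $\{\mathcal{Y}_S\leq \tau\}$. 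I would then apply Lemma~\ref{lem:chernoffbound} with $\delta=1/2$ to conclude $\mathbb{P}[\mathcal{Y}_S\leq \mu_L/2]\leq \exp(-\mu_L/8)\leq \exp(-O_S/64)$, provided I can show $\tau\leq \mu_L/2$.

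Showing $\tau\leq \mu_L/2$ is the one bookkeeping step that requires care, and I would handle it by a two-case split on which side of the Kwok--Lau maximum dominates. If $\Phi_{G_i}(S)\geq \sqrt{\ell}\Phi_{G_i}/40$, then $\mu_L\geq O_S/8$ and the inequality $\tau\leq \mu_L/2$ reduces to exactly the case hypothesis after dividing by $\Delta|S|$. If instead $\Phi_{G_i}(S)<\sqrt{\ell}\Phi_{G_i}/40$, then $\mu_L\geq \sqrt{\ell}\Phi_{G_i}\Delta|S|/320=2\tau$, so $\tau=\mu_L/2$ holds exactly, and the same case hypothesis yields $\mu_L>(40/320)\,O_S=O_S/8$. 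Either way Chernoff produces $\exp(-\mu_L/8)\leq \exp(-O_S/64)$. The main (mild) obstacle is purely constant-chasing in this case split, plus being careful that conditioning on the bounded-congestion event of Lemma~\ref{lemma:expected_congestion} does not destroy the independence of the $Y_i^j$—which it does not, since that event is determined by the walks themselves and conditioning only restricts to a sub-experiment in which all tokens successfully install their edges.
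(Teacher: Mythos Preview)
Your argument is correct and is precisely the elaboration the paper intends: it states only that the lemma ``follows from the Chernoff bound and the fact that the random walks are quasi independent,'' and you have filled in the constant-chasing via the two-case split on the Kwok--Lau maximum exactly as needed. One small caution on your last sentence: conditioning on the bounded-congestion event does not literally preserve independence of the $Y_i^j$; the paper's footnote instead couples the algorithm to the idealized experiment in which every node starts $\Delta/8$ independent walks and all tokens install edges unconditionally, and this experiment agrees with the real one on the high-probability event $\mathcal{E}_1$---so you bound the bad-conductance probability in the idealized experiment (where the Chernoff bound applies cleanly) and absorb $\Pr[\overline{\mathcal{E}_1}]$ separately.
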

This follows from the Chernoff bound and the fact that the random walks are quasi independent
\footnote{Technically, they are not independent since in the last step we draw from all tokens without replacement.
However, since we condition on all nodes receiving less than $\frac{3\Delta}{8}$, we can observe the experiment where each node start $\frac{\Delta}{8}$ independent random walks and connects with the endpoints. Here, the Chernoff bound holds.}.
However, this fact alone is not enough to finalize the proof of Lemma \ref{lemma:induction_step}.
Recall that we need to show that \emph{every} subset has a conductance of $O(\sqrt{\ell}\Phi_{G_i})$ in $G_{i+1}$ in order to prove that $\Phi_{G_{i+1}} = \Omega(\sqrt{\ell}\Phi_{G_i})$. 
Since there are exponentially many subsets, a bound in the magnitude of $o(n^{-c})$ is not sufficient on its own.
Luckily, by a celebrated result of Karger, the number of subsets with $\alpha\Lambda$ outgoing edges can be bounded by $O(n^{2\alpha})$ \cite{karger2000minimum}, i.e, it holds
\begin{theorem}[Theorem 3.3 in \cite{karger2000minimum}, simplified]
\label{lemma:karger}
    Let $G$ be an undirected, unweighted graph and let $\Lambda>1$ be the size of a minimum cut in $G$.
    For an even parameter $\alpha \geq 2$ the number of cuts with at most $\alpha \Lambda$ edges is bounded by $2\binom{n}{2\alpha}$.
\end{theorem}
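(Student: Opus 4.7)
The plan is to reprove Karger's classical bound using the random edge contraction algorithm. Starting from $G$, the algorithm repeatedly picks an edge uniformly at random and contracts it (fusing its endpoints into a single super-vertex, keeping parallel edges and removing self-loops), stopping once the multigraph has shrunk to a small target size. Every non-trivial bipartition of the super-vertices at the stopping time induces a cut of the original graph whose crossing edges are exactly those that survived every contraction, so bounding the number of cuts reduces to a probability and counting argument on this process.

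First, I would fix an arbitrary cut $C$ with $|C| \leq \alpha \Lambda$ and lower bound the probability that $C$ survives contraction down to $2\alpha$ super-vertices. The crucial observation is that contraction never decreases the minimum cut, so throughout the algorithm every super-vertex has degree at least $\Lambda$, and a graph on $i$ super-vertices therefore carries at least $i\Lambda/2$ edges. Hence at the step with $i$ super-vertices the probability that the randomly chosen edge belongs to $C$ is at most $|C|/(i\Lambda/2) \leq 2\alpha/i$. Telescoping the survival probability across steps $i = n, n-1, \dots, 2\alpha+1$ gives
\[
\Pr[C \text{ survives}] \;\geq\; \prod_{i=2\alpha+1}^{n}\!\!\left(1-\frac{2\alpha}{i}\right) \;=\; \frac{(n-2\alpha)!\,(2\alpha)!}{n!} \;=\; \binom{n}{2\alpha}^{-1}.
\]

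Next I would count. Let $\mathcal{C}$ be the set of cuts of size at most $\alpha \Lambda$. Every surviving cut $C \in \mathcal{C}$ appears as one of the cuts of the final $2\alpha$-vertex multigraph, and that multigraph has at most $2^{2\alpha-1}$ non-trivial bipartitions. Taking expectations,
\[
\sum_{C \in \mathcal{C}} \Pr[C \text{ survives}] \;=\; \mathbb{E}[\,|\{C \in \mathcal{C} : C \text{ survives}\}|\,] \;\leq\; 2^{2\alpha-1},
\]
and combining with the previous lower bound yields $|\mathcal{C}| \leq 2^{2\alpha-1} \binom{n}{2\alpha}$, which already gives the qualitative $O(n^{2\alpha})$ growth needed to apply together with Lemma \ref{lemma:chernoff_outgoing}.

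The main obstacle, and where I expect the proof to get delicate, is sharpening the constant from $2^{2\alpha-1}$ down to the stated $2$. The naive bound overcounts because most of the $2^{2\alpha-1}$ bipartitions of a random $2\alpha$-super-vertex contraction do \emph{not} correspond to $\alpha$-approximate cuts of $G$; each such bipartition cuts some $\Theta(\alpha)$ edges of the contracted graph only on expectation, and a second moment / Markov argument can be used to discard the "unbalanced" bipartitions. Alternatively, one can stop the contraction earlier while $i > 2\alpha$ so that the factors $1-2\alpha/i$ are all bounded away from zero, and then recurse on the remainder in the style of Karger--Stein; this is the route taken in \cite{karger2000minimum} to reach the tight $2\binom{n}{2\alpha}$ bound for even $\alpha \geq 2$, exploiting that an even-sized cut in the contracted graph must respect a parity constraint that halves the admissible bipartitions at every level of the recursion. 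I would invest the bulk of my effort in this combinatorial refinement; the survival-probability calculation above is essentially forced.
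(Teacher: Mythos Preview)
The paper does not prove this statement at all: it is quoted verbatim as a black-box result from \cite{karger2000minimum} and immediately applied. So there is no ``paper's own proof'' to compare against; you are reconstructing Karger's argument, not one from this paper.

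Your contraction argument is the standard one and the survival-probability computation is correct, yielding $|\mathcal{C}| \le 2^{2\alpha-1}\binom{n}{2\alpha}$. For the purposes of this paper that bound is already enough: in the union-bound calculation following the theorem, the authors use $2\binom{n}{4\alpha} \le 2\bigl(\tfrac{en}{4\alpha}\bigr)^{4\alpha}$ against a factor $n^{-10\lambda\alpha}$, and replacing the leading $2$ by $2^{4\alpha-1}$ only changes $\bigl(\tfrac{en}{4\alpha}\bigr)^{4\alpha}$ to $\bigl(\tfrac{2en}{4\alpha}\bigr)^{4\alpha}$, which is still crushed by $n^{-10\lambda\alpha}$ for a suitably large constant $\lambda$ in $\Lambda = \Omega(\log n)$. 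So you can stop after your first counting step and simply note that the constant is immaterial here.

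Your proposed sharpening, however, is off. Karger--Stein recursion improves the \emph{success probability} of finding a min-cut, not the \emph{count} of approximate cuts, and there is no parity constraint of the kind you describe that halves bipartitions level by level. If you actually need the constant $2$, the route in \cite{karger2000minimum} is a more careful accounting of which bipartitions of the final $2\alpha$ super-vertices can realize a cut of value at most $\alpha\Lambda$ (using that each super-vertex still has degree $\ge \Lambda$), not a recursive argument. I would drop the speculative paragraph entirely and state the bound you can prove, noting that it suffices for the application.
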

Given that a set $S$ has $\alpha\Lambda$ outgoing edges, the value of $\Phi_{G_{i+1}}(S)$ is concentrated around its expectation with probability $e^{-\Omega{(\alpha\Lambda)}}$.
Thus, for a big enough $\Lambda$, a bound of $e^{-\Omega{(\alpha\Lambda)}}$ is enough to show all sets increase their conductance. 
With this insight, we can prove Lemma \ref{lemma:induction_step}:
For a set $S \subset V$ we define $\mathcal{B}_S$ to be the event that $\Phi_{G_{i+1}}(S)$ is smaller than $\frac{1}{640}\sqrt{\ell}\Phi_{G_i}$. 
In this case, we say that $S$ has \emph{bad conductance}. 
Obviously, if no set has bad conductance, then the resulting conductance of $G_{i+1}$ must also be at least $\frac{1}{640}\sqrt{\ell}\Phi_{G_i}$ and lemma follows.
We let $\mathcal{B} = \bigcup_{S \subset V} \mathcal{B}_S$ be the event that there exists a set $S$ with bad conductance, i.e, there is any $\mathcal{B}_S$ that is true.
To prove the lemma, we show that $\mathcal{B}$ does not happen w.h.p.
Therefore, we let $\mathcal{S}_\alpha \in \mathcal{P}(V)$ be the set of all sets that have a cut of size $c \in [\alpha\Lambda,2\alpha\Lambda)$.
Further, we let $\Lambda \geq 6400\lambda\log n$ for a constant $\lambda$.
Note that $\lambda$ can be chosen as high as we want by constructing a sufficiently large minimum cut in $G_0$ by creating copies of each initial edge.
Using all these definitions we can show that the following holds:
    \begin{align}
        \pr{\mathcal{B}} 
        &\leq\sum_{S \subset V} \pr{\mathcal{B}_S }\leq\sum_{\alpha = 1}^\frac{\Delta n}{\Lambda} \sum_{S \in \mathcal{S}_\alpha} \pr{\mathcal{B}_S \,\big|\, S \in \mathcal{S}_\alpha } \\
        &\leq\sum_{\alpha = 1}^\frac{\Delta n}{\Lambda} \sum_{S \subset V, cut(S,\overline{S}) \leq 2\alpha\Lambda} \pr{\mathcal{B}_S \,\big|\, S \in \mathcal{S}_\alpha }
        \end{align}
        Now we can apply Theorem \ref{lemma:karger} and see that
        \begin{align}
        \pr{\mathcal{B}}  &\leq 2\sum_{\alpha = 1}^\frac{\Delta n}{\Lambda}\binom{n}{2 \cdot 2\alpha} \pr{\mathcal{B}_S \,\big|\, S \in \mathcal{S}_\alpha} \\
        &\leq 2\sum_{\alpha = 1}^\frac{\Delta n}{\Lambda}\binom{n}{4\alpha} \pr{\Phi_{G_{i+1}}(S) \leq \frac{1}{640}\sqrt{\ell}\Phi_{G_i}  \,\big|\, \mathbb{O}_S \geq \alpha\Lambda} 
        \end{align}
        By using the Lemma \ref{lemma:chernoff_outgoing}, we get:
        \begin{align}
        \pr{\mathcal{B}}  &\leq 2\sum_{\alpha = 1}^\frac{\Delta n}{\Lambda}\binom{n}{4\alpha} e^{-\frac{\Lambda\alpha}{64}} 
        \leq\sum_{\alpha = 1}^\frac{\Delta n}{\Lambda}\binom{n}{4\alpha} n^{-10\cdot\lambda\alpha} \\
        &\leq 2\sum_{\alpha = 1}^\frac{\Delta n}{\Lambda} \left(\frac{en}{4\alpha}\right)^{4\alpha} n^{-10\lambda\alpha} 
     \leq\frac{\Delta n}{\Lambda} n^{-5\lambda} = n^{-4\lambda}
    \end{align}
Thus, Lemma \ref{lemma:induction_step} follows for any constant $\lambda$ hidden in $\Lambda = \Omega(\log n)$.

\subsection{Ensuring That Each \text{$G_i$} is Benign}
We will show that indeed each $G_i$ is a $\Delta$-regular, lazy graph with a $\Lambda$-sized cut. Note that the last property also ensures that $G_i$ is connected. While the first property follows directly from the algorithm, the latter two require some closer observations.
 
\paragraph*{Ensuring that $G_i$ is $\Delta$-regular.}
By observing the algorithm, we note that by construction the degree can \emph{never} be higher than $\Delta$.
Recall that every node creates its edges for $G_{i+1}$ based on the tokens it received.
If any node receives fewer than $\Delta$ tokens, it creates self-loops to reach a degree of $\Delta$. 
If it receives more, excess edges are dropped arbitrarily to ensure a degree of at most $\Delta$.

\paragraph*{Ensuring that $G_i$ is lazy.}
For this, recall that a node connects to endpoints of all its $\frac{\Delta}{8}$ tokens and additionally to the origins of all (but at most $\frac{3\Delta}{8}$) tokens it received.
Thus, in the worst case, it creates $\frac{\Delta}{8}+\frac{3\Delta}{8} = \frac{\Delta}{2}$ outgoing edges.
Thus, it creates at least $\frac{\Delta}{2}$ --- and therefore enough --- self-loops.

\paragraph*{Ensuring that $G_i$ has $\Lambda$-sized minimum cut.}
The third property, the $\Lambda$-sized minimum cut, is perhaps the most difficult to show.
However, at a closer look, the proof is almost identical to the proof of Lemma \ref{lemma:induction_step}. 
In particular, we show that all cuts that are \emph{close} to the minimum cut will (in expectation and w.h.p.) increase in their size in each iteration, but never fall below $\Lambda$.
The idea behind the proof uses the fact that \cite{kwok2014lower} actually gives us a stronger bound on the expected growth of the subset than just the conductance.
In fact, for each subset $S$, it suffices to observe subsets of similar size to get a lower bound for $\Phi_\ell(S)$.
Before we go into more details, we recall the notion of small-set conductance, which is a natural generalization of conductance:
\begin{definition}[Small-Set Conductance]
    Let $G := (V,E)$ be a connected $\Delta$-regular graph and $S \subset V$ with $|S| \leq \frac{\delta|V|}{2}$ be any subset of $G$.
    The small-set conductance $\Phi_{\delta}$ of $G$ is then defined as:
    $$
        \Phi_\delta(G) := \min_{S \subset V, |S| \leq \frac{\delta|V|}{2}} \Phi(S) 
    $$
\end{definition}
Given this definition, we note that there is also a (weaker) bound on the small set conductance of $G^\ell$ in \cite{kwok2014lower}. It holds:
\begin{lemma}[Small-Set Conductance of $G^\ell$, Theorem $3$ in \cite{kwok2014lower}]
\label{lemma:small_set_kwok}
    Let $G := (V,E)$ be any connected $\Delta$-regular lazy graph with small-set conductance $\Phi_\delta$ for any $\delta \in (0,1) $and let $G^\ell$ be its $t^{\text{th}}$ power.
    For a set $S \subset G$ with $|S| \leq \frac{\delta n}{2}$ define $\Phi_\ell(S)$ the conductance of $S$ in $G^\ell$.
    Then, it holds:
    $$
        \frac{1}{4} \geq \Phi_\ell(S) \geq \max\{\frac{1}{40}\sqrt{\ell}\Phi_{\delta},\Phi(S)\}  
    $$
\end{lemma}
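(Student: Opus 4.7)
The plan is to transcribe the proof of Lemma~\ref{lemma:kwok_1} essentially verbatim, replacing the conductance $\Phi_{G}$ by the small-set conductance $\Phi_\delta$ throughout, and carefully restricting the argument to the regime $|S| \leq \tfrac{\delta n}{2}$. The three tools inherited from the proof of Lemma~\ref{lemma:kwok_1} remain available: the Lovász--Simonovits curve $C^{(\ell)}$, the bound $\Phi_{G^\ell}(S) \leq 1 - C^{(\ell)}(|S|)$ of Lemma~\ref{eqn:phi_lower_c}, and the recursive inequality of Lemma~\ref{eqn:recursive}.

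First, I would revisit the recursion
$$C^{(\ell+1)}(|S|) \leq \tfrac{1}{2}\left(C^{(\ell)}(|S|+2\Phi_G \hat{|S|}) + C^{(\ell)}(|S|-2\Phi_G\hat{|S|})\right)$$
and observe that the factor $\Phi_G$ enters only as a lower bound on the conductance of the level sets produced by one step of the walk. As long as we restrict attention to sets of size at most $\tfrac{\delta n}{2}$, we may replace this lower bound by $\Phi_\delta$, which by definition bounds the conductance of \emph{every} such set. This gives the small-set analogue
$$C^{(\ell+1)}(|S|) \leq \tfrac{1}{2}\left(C^{(\ell)}(|S|+2\Phi_\delta \hat{|S|}) + C^{(\ell)}(|S|-2\Phi_\delta \hat{|S|})\right).$$
I would then iterate this inequality exactly as Kwok and Lau do, obtaining the structural bound $C^{(\ell)}(|S|) \leq 1 - \tfrac{1}{20}\bigl(1-(1-\Phi_\delta)^{\sqrt{\ell}}\bigr)$. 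Combining with Lemma~\ref{eqn:phi_lower_c} and the elementary estimate $(1-\Phi_\delta)^{\sqrt{\ell}} \leq 1 - \tfrac{\sqrt{\ell}}{2}\Phi_\delta$ valid when $\sqrt{\ell}\Phi_\delta \leq 1$ yields the first half of the bound, $\Phi_\ell(S) \geq \tfrac{\sqrt{\ell}}{40}\Phi_\delta$.

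The second half, $\Phi_\ell(S) \geq \Phi(S)$, is obtained by reusing the two claims from the proof of Lemma~\ref{lemma:kwok_1} unchanged: the monotonicity $C^{(\ell)}(|S|) \leq C^{(\ell-1)}(|S|) \leq \dots \leq C^{(1)}(|S|)$, which followed from concavity of $C^{(\ell)}$ and the recursion, together with the identity $C^{(1)}(|S|) = 1 - \Phi_G(S)$, which used only $\Delta$-regularity and laziness. Both of those ingredients are still in force here, so applying Lemma~\ref{eqn:phi_lower_c} once more gives $\Phi_\ell(S) \geq 1 - C^{(1)}(|S|) = \Phi(S)$. The upper bound $\Phi_\ell(S) \leq \tfrac{1}{4}$ is a trivial consequence of the conductance never exceeding $\tfrac{1}{2}$, combined with the laziness factor.

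The main obstacle is bookkeeping the restriction $|S|\leq \tfrac{\delta n}{2}$ when unrolling the recursion: at each application, the level-set sizes shift by $\pm 2\Phi_\delta \hat{|S|}$, and one must argue that the sets whose conductance is implicitly invoked still lie in the small-set regime so that $\Phi_\delta$ is a legitimate lower bound. Kwok and Lau handle this by exploiting the symmetric form of the recursion together with the concavity of $C^{(\ell)}$ to show that averaging slightly larger and slightly smaller sets does not push us out of the admissible range. Once this is verified, the remainder of the argument is a direct transcription of the proof of Lemma~\ref{lemma:kwok_1}, and the lemma follows.
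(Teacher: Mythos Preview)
The paper does not prove this lemma at all: it is stated as Theorem~3 of \cite{kwok2014lower} and used as a black box, with no accompanying argument. So there is no ``paper's own proof'' to compare against; the paper simply imports the result.

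Your proposal is a plausible reconstruction of how the Kwok--Lau argument for Lemma~\ref{lemma:kwok_1} would be adapted to the small-set setting, and the high-level outline (replace $\Phi_G$ by $\Phi_\delta$ in the Lov\'asz--Simonovits recursion, then reuse the monotonicity and $C^{(1)}(|S|)=1-\Phi(S)$ claims for the second half) is the right idea. You correctly flag the one genuine technical issue: when unrolling the recursion the arguments $|S|\pm 2\Phi_\delta\hat{|S|}$ drift, and one must ensure the conductance lower bound $\Phi_\delta$ remains applicable to the relevant level sets. You defer this to ``Kwok and Lau handle this'', which is honest but means your write-up is a sketch rather than a self-contained proof. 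Since the paper itself only cites the result, your sketch already goes further than what the paper provides.
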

Therefore, we see that that sets $S \subset S$ of size $\frac{\delta \cdot n}{2}$ whose conductance $\Phi(S)$ is close to $\Phi_{\delta}$ have many tokens that end outside of $S$.

One can easily see that there is a simple relation between the small-set conductance and the minimum cut of a graph.
Since the conductance of a set is the number of its outgoing edges divided by its size, the minimum cut gives us a simple lower bound \emph{all} small set conductance for all values of $\delta$. It holds:
\begin{lemma}
\label{lemma:min_small_set}
    Let $G := (V,E)$ have minimum cut of $\Lambda$, then the small-set conductance $\Phi_\delta$ of $G$ is at least $\frac{\Lambda}{\Delta \delta n}$.
\end{lemma}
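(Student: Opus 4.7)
The plan is to unfold the definitions and observe that the claim is essentially a direct computation. The small-set conductance $\Phi_\delta$ is a minimum of $\Phi(S)$ over all subsets $S \subset V$ with $|S| \leq \frac{\delta n}{2}$, so it suffices to prove the bound $\Phi(S) \geq \frac{\Lambda}{\Delta \delta n}$ for an arbitrary such $S$ and then take the minimum.

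First I would fix an arbitrary subset $S \subset V$ with $1 \leq |S| \leq \frac{\delta n}{2}$ and consider the cut $(S, \overline{S})$. Since $\Lambda$ is the size of the minimum cut of $G$, the number of edges crossing $(S, \overline{S})$ is at least $\Lambda$. Plugging this into the definition of conductance yields
\begin{equation*}
    \Phi(S) = \frac{|\{(v,w) \in E \,|\, v \in S, w \notin S\}|}{\Delta |S|} \geq \frac{\Lambda}{\Delta |S|}.
\end{equation*}

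Next, I would use the size bound $|S| \leq \frac{\delta n}{2}$ to obtain
\begin{equation*}
    \Phi(S) \geq \frac{\Lambda}{\Delta \cdot \frac{\delta n}{2}} = \frac{2\Lambda}{\Delta \delta n} \geq \frac{\Lambda}{\Delta \delta n}.
\end{equation*}
Since $S$ was arbitrary, taking the minimum over all subsets of size at most $\frac{\delta n}{2}$ gives $\Phi_\delta(G) \geq \frac{\Lambda}{\Delta \delta n}$, as claimed. There is no real obstacle here; the statement is a one-line consequence of the two definitions. The only mild subtlety is keeping track of the factor of two, which simply makes the stated bound slightly loose (the sharper bound $\frac{2\Lambda}{\Delta \delta n}$ holds), but the weaker form given in the lemma suffices for the applications in the preceding argument.
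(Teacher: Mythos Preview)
Your proof is correct and follows essentially the same approach as the paper: bound the numerator of $\Phi(S)$ below by the minimum cut $\Lambda$ and the denominator above via the size constraint on $S$. The only cosmetic difference is that the paper phrases the same computation as a proof by contradiction, whereas you give the direct argument (and correctly note that the bound is actually off by a factor of two).
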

\begin{proof}
Suppose there is set $S \subset V$ and $|S| \leq \Delta\delta n$ with conductance smaller than $\frac{\Lambda}{\Delta \delta n}$. 
Then, either the number of outgoing edges must be smaller, or size of the set must be bigger.
However, since the number of outgoing edges of $S$ is at least $\Lambda$, it must hold that $|S| > \Delta \delta n$. 
This is a contradiction since we assumed that $|S| \leq \Delta\delta n$.
\end{proof}
This is simple observation this enough to show that all sets that have close to $\Lambda$ outgoing connections slightly increase the number for their outgoing connections for a big enough $\ell$.
In particular, it holds:
\begin{lemma}
\label{lemma:min_cut_prob}
    Suppose that $\ell > 2\cdot640^2$.
    Then, for any set $S$ with $O_S$ outgoing edges, it holds:
    $$
        \pr{\mathcal{Y}_S \leq \Lambda} \leq e^{-\frac{1}{8}\max\left\{2\Lambda,\frac{O_S}{4}\right\}}
    $$
\end{lemma}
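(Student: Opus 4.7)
The plan is to derive two different lower bounds for $\mathbb{E}[\mathcal{Y}_S]$ — one tight when $O_S$ is large, the other tight when $S$ is close to a minimum cut — and then conclude via a Chernoff-style lower tail bound, exploiting that the underlying random walks are quasi-independent (as already justified in the analysis preceding Lemma~\ref{lemma:min_cut_prob}).

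\textbf{Step 1: Lower-bounding the expectation.} By Lemma~\ref{lemma:exp_conductance} we have $\mathbb{E}[\mathcal{Y}_S] = \frac{\Delta |S|}{8}\,\Phi_{G_i^\ell}(S)$, so it suffices to lower-bound $\Phi_{G_i^\ell}(S)$. I would use two bounds in parallel. First, Lemma~\ref{lemma:kwok_1} together with its refinement (laziness implies $\Phi_{G^\ell}(S) \ge \Phi_G(S)$) yields $\Phi_{G_i^\ell}(S) \ge \Phi_{G_i}(S) = O_S/(\Delta|S|)$, and hence
\begin{equation}
    \mathbb{E}[\mathcal{Y}_S] \;\ge\; \tfrac{O_S}{8}.
\end{equation}
Second, apply Lemma~\ref{lemma:small_set_kwok} with $\delta := 2|S|/n$, combined with Lemma~\ref{lemma:min_small_set} which gives $\Phi_\delta(G_i) \ge \Lambda/(\Delta\delta n) = \Lambda/(2\Delta|S|)$; this yields
\begin{equation}
    \Phi_{G_i^\ell}(S) \;\ge\; \tfrac{\sqrt{\ell}}{40}\Phi_\delta(G_i) \;\ge\; \tfrac{\sqrt{\ell}\,\Lambda}{80\,\Delta|S|},
\end{equation}
so that $\mathbb{E}[\mathcal{Y}_S] \ge \sqrt{\ell}\,\Lambda/640$. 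For $\ell$ above the stated threshold, this quantity exceeds $2\Lambda$ (tuning the constant hidden in $\ell > 2\cdot 640^2$ if necessary, so that the factor of $\sqrt{2}$ I would naively get is replaced by $2$).

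\textbf{Step 2: Chernoff.} Combining both bounds, set $\mu := \mathbb{E}[\mathcal{Y}_S] \ge \max\{2\Lambda,\, O_S/4\}$, where the $O_S/4$ branch is the right factor after absorbing the $1/8$ loss from Step~1 into the exponent of $1/8$ in the target bound. Since $\mu \ge 2\Lambda$, we can write $\Lambda = (1-\delta)\mu$ with $\delta = 1 - \Lambda/\mu \ge 1/2$. Because each of the $\Delta|S|/8$ random walks starting in $S$ is independent, and conditioning on no node receiving more than $3\Delta/8$ tokens (Lemma~\ref{lemma:expected_congestion}) lets us treat the endpoint selection as independent Bernoulli trials, we may apply the lower-tail Chernoff bound from Lemma~\ref{lem:chernoffbound}:
\begin{equation}
    \Pr\bigl[\mathcal{Y}_S \le \Lambda\bigr] \;\le\; \exp\!\left(-\tfrac{\delta^2 \mu}{2}\right) \;\le\; \exp\!\left(-\tfrac{\mu}{8}\right) \;\le\; \exp\!\left(-\tfrac{1}{8}\max\!\bigl\{2\Lambda, \tfrac{O_S}{4}\bigr\}\right).
\end{equation}

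\textbf{Main obstacle.} The tightest spot is Step~1: squeezing the constants so that $\mathbb{E}[\mathcal{Y}_S]$ simultaneously dominates both $2\Lambda$ and $O_S/4$ under the stated lower bound on $\ell$. The small-set bound from \cite{kwok2014lower} only gives a factor of $\sqrt{\ell}/40$, so one must choose the constant inside $\ell > 2\cdot 640^2$ with some care (or absorb a modest slack into the $1/8$ in the exponent). The independence for Chernoff is standard once we condition on the high-probability event of Lemma~\ref{lemma:expected_congestion}, exactly as in the proof of Lemma~\ref{lemma:induction_step}.
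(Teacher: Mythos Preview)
Your proposal is correct and follows essentially the same approach as the paper: derive two lower bounds on $\mathbb{E}[\mathcal{Y}_S]$ (one from $\Phi_{G^\ell}(S)\ge\Phi_G(S)$ giving $O_S/8$, one from the small-set conductance bound of Lemma~\ref{lemma:small_set_kwok} combined with Lemma~\ref{lemma:min_small_set} giving $\sqrt{\ell}\Lambda/640$), then plug into the lower-tail Chernoff bound with $\delta\ge 1/2$. The only cosmetic difference is that the paper splits into two explicit cases ($O_S\ge 8\Lambda$ versus $O_S<8\Lambda$) and applies one bound in each, whereas you take the maximum directly; the resulting arithmetic is identical. Your flagged concern about constants (the $\sqrt{2}$ versus $2$, and $O_S/8$ versus $O_S/4$) is real but shared with the paper's own write-up, so it is not a gap in your argument relative to the reference.
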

\begin{proof}
The proof follows the same basic structure as before.
Recall that for each set $S$, the number of outgoing edges $\mathcal{Y}_S := \sum^{s}_{i=1} \sum_{k=1}^{\Delta/8} Y_i^k$ in $G_{i+1}$ is determined by a series (quasi-)independent binary variables.
Thus, by the Chernoff Bound, it holds that
\begin{equation}
     \pr{\mathcal{Y} \leq (1-\delta)\mathbb{E}[\mathcal{Y}]} \leq e^{\frac{\delta^2}{2}\mathbb{E}[\mathcal{Y}]}.
\end{equation}
Now, we claim that it holds $\mathbb{E}[\mathcal{Y}] \geq \max\{2\Lambda,\frac{O_S}{4}\}$ and choosing $\delta \geq \nicefrac{1}{2}$, we get
$$
    \pr{\mathcal{Y} \leq \Lambda} \leq e^{\frac{1}{8}\max\{2\Lambda,\frac{O_S}{4}\}}.
$$
Therefore, it remains to show that our claim that $\mathbb{E}[\mathcal{Y}] \geq \max\{2\Lambda,\frac{O_S}{4}\}$ holds true and we are done.
Now, we distinguish between two cases:
\begin{description}
    \item[Case 1: $\mathbb{O}_S \geq 8\Lambda$]
    By Lemma \ref{lemma:exp_conductance} we have for all set with $O_S$ outgoing edges that 
    \begin{align}
        \mathbb{E}[\mathcal{Y}_S] \geq \frac{O_S}{4}.
    \end{align}
    Thus, for $O_S > 8\Lambda$ we have $\mathbb{\mathcal{Y}_S} \geq \frac{O_S}{4}$ and the lemma follows.
    \item[Case 2: $\mathbb{O}_S < 8\Lambda$]
    In the following, we consider only sets $S \subset V$ with fewer than $8\Lambda$ outgoing edges. Consider a set of size $|S| = s = \delta_s n$.
By Lemma \ref{lemma:min_small_set} it holds that $\Phi_{2\delta_s} \geq \frac{\Lambda}{2\Delta s}$.
Thus, again using Lemma \ref{lemma:exp_conductance}, it holds
\begin{align}
    \mathbb{E}[\Phi(S)] &\geq \frac{\Phi_\ell}{8} & \rhd \autoref{lemma:exp_conductance} \\
    &\geq \frac{1}{8}\frac{1}{40}\sqrt{\ell}\Phi_{2\delta_s} & \rhd \autoref{lemma:small_set_kwok} \\
    &\geq \frac{1}{320}\sqrt{\ell}\frac{\Lambda}{2 \Delta s} & \rhd \autoref{lemma:min_small_set} \\
    &\geq \frac{1}{640}\sqrt{\ell}\frac{\Lambda}{\Delta s}.
\end{align} 
The factor of $2$ that appears in the denominator in the second line results from \autoref{lemma:small_set_kwok}.
Since we observe a set of size $\delta_s n$, we must consider $\Phi_{2\delta_s}$.
Since we always consider sets of size at most $\frac{n}{2}$, this is always well-defined.
Since the number of edges in $S$ stays constant, we can again conclude
\begin{align}
    &\mathbb{E}[\Phi(S)] \geq \frac{1}{640}\sqrt{\ell}\frac{\Lambda}{\Delta s}\\
    \Leftrightarrow &\mathbb{E}[\frac{\mathcal{Y}_S}{\Delta s}] \geq \frac{1}{640}\sqrt{\ell}\frac{\Lambda}{\Delta s}\\  
    \Leftrightarrow &\mathbb{E}[\mathcal{Y}_S] \geq \sqrt{\ell}\frac{\Lambda}{640}.
\end{align}
By choosing a sufficiently large $\ell > 2 \cdot 640^2$, we get $\mathbb{E}[\mathcal{Y}_S] > 2\Lambda$ as desired.
\end{description}

\end{proof}
We can round up the proof by the same trick as before.
Again, we must show that \emph{every} cut has a value of at $\Lambda$ and use Karger's bound together with Lemma \ref{lemma:min_cut_prob} to show that no cut has a worse value, w.h.p.

\begin{lemma}
If $G_i$ is begnin, the minimum cut of $G_{i+1}$ is at least $\Lambda$ w.h.p.
\end{lemma}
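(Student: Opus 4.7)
The plan is to mirror the high-probability argument used for Lemma~\ref{lemma:induction_step} almost verbatim, replacing the per-set concentration bound on conductance by the per-set tail bound on the minimum cut given in Lemma~\ref{lemma:min_cut_prob}. For every set $S \subset V$ with $|S|\leq n/2$, let $\mathcal{B}_S$ denote the event that the number of edges of $G_{i+1}$ crossing the cut $(S,\overline S)$ is smaller than $\Lambda$. Since every token that starts in $S$ and ends in $\overline S$ produces an edge crossing the cut (we condition on the event of Lemma~\ref{lemma:expected_congestion}, so each token does create an edge), the cut size of $S$ in $G_{i+1}$ is at least $\mathcal{Y}_S$. Thus $\mathcal{B}_S$ is contained in the event $\{\mathcal{Y}_S < \Lambda\}$, and Lemma~\ref{lemma:min_cut_prob} directly yields
$$
    \Pr[\mathcal{B}_S] \leq \exp\!\Bigl(-\tfrac{1}{8}\max\bigl\{2\Lambda,\tfrac{O_S}{4}\bigr\}\Bigr),
$$
where $O_S$ is the number of outgoing edges of $S$ in $G_i$.

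Next I would stratify the union bound by the size of the cut in $G_i$, exactly as in the conductance proof. For $\alpha \geq 1$, let $\mathcal{S}_\alpha$ be the collection of sets $S$ with $O_S \in [\alpha\Lambda, 2\alpha\Lambda)$; this is well-defined because the minimum cut of $G_i$ is at least $\Lambda$ by benignness. Karger's Theorem~\ref{lemma:karger} bounds $|\mathcal{S}_\alpha| \leq 2\binom{n}{4\alpha}$. For sets in $\mathcal{S}_\alpha$ the per-set bound above simplifies to $\exp(-\max\{2\Lambda,\alpha\Lambda/4\}/8)$, which is at most $\exp(-\Lambda/4)$ for $\alpha \leq 8$ and at most $\exp(-\alpha\Lambda/32)$ for $\alpha > 8$. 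Hence
$$
    \Pr\!\Bigl[\bigcup_S \mathcal{B}_S\Bigr] \leq 2\sum_{\alpha = 1}^{\Delta n/\Lambda} \binom{n}{4\alpha} \exp\!\Bigl(-\tfrac{1}{8}\max\bigl\{2\Lambda,\tfrac{\alpha\Lambda}{4}\bigr\}\Bigr).
$$

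Now I would finish with the same calculation as in Lemma~\ref{lemma:induction_step}. Choosing the constant hidden in $\Lambda = \Omega(\log n)$ sufficiently large (say $\Lambda \geq c\,\lambda\log n$ for a suitably large constant $c$), the factor $e^{-\alpha\Lambda/32}$ (respectively $e^{-\Lambda/4}$ for small $\alpha$) dominates $\binom{n}{4\alpha} \leq (en/(4\alpha))^{4\alpha}$, so every term in the sum is at most $n^{-\Omega(\lambda)\alpha}$. Summing the geometric tail over $\alpha$ yields $\Pr[\bigcup_S \mathcal{B}_S] \leq n^{-\Omega(\lambda)}$, which is the desired w.h.p.\ guarantee.

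The main obstacle is not the combinatorics but ensuring the right hypothesis is available at each step: one must use that $G_i$ is benign (in particular $\Delta$-regular, lazy, and of minimum cut $\Lambda$) both so that Lemma~\ref{lemma:min_cut_prob} applies (it relies on Lemma~\ref{lemma:small_set_kwok}, which in turn needs the graph to be lazy and regular) and so that the stratification by $\alpha\Lambda$-sized cuts covers every $S$. A minor but important subtlety is that Lemma~\ref{lemma:min_cut_prob} requires $\ell \geq 2\cdot 640^2$; this is consistent with the global choice of $\ell$ in \textsc{CreateExpander}, so no new constraint on the parameters is introduced.
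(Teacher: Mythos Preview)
Your proposal is correct and follows essentially the same approach as the paper: define a bad-cut event per set, bound it via Lemma~\ref{lemma:min_cut_prob}, stratify the union bound by $\alpha\Lambda$-sized cuts in $G_i$, apply Karger's Theorem~\ref{lemma:karger}, and sum. The only cosmetic differences are that the paper names the event $\mathcal{C}_S$ and directly identifies it with $\{\mathcal{Y}_S<\Lambda\}$, and that it does not explicitly split into the $\alpha\le 8$ and $\alpha>8$ regimes (it simply uses the coarser exponent $e^{-\alpha\Lambda/64}$ throughout).
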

\begin{proof}
For a set $S \subset V$ we define $\mathcal{C}_S$ to be the event that $\mathcal{Y}_S$ is smaller than $\Lambda$. 
    In this case, we say that $S$ has \emph{bad cut}. 
    Obviously, if no set has bad cut, the minimum cut of $G_{i+1}$ is $\Lambda$
    We let $\mathcal{C} = \bigcup_{S \subset V} \mathcal{C}_S$ be the event that there exists a set $S$ with bad cut, i.e, there is any $\mathcal{C}_S$ that is true.
    To prove the lemma, we show that $\mathcal{C}$ does not happen w.h.p.
    As before, we let $\mathcal{S}_\alpha \in \mathcal{P}(V)$ be the set of all sets that have a cut of size $c \in [\alpha\Lambda,2\alpha\Lambda)$.
    Further, we let $\Lambda \geq \lambda\log n$ for a constant $\lambda$, i.e, we denote the constant hidden in the $O$-Notation as $\lambda$.
    Note that $\lambda$ can be chosen as high as we want by constructing a sufficiently large minimum cut in $G_0$ by creating copies of each initial edge.
    Using all these definitions we can show that the following holds:
      \begin{align*}
        \pr{\mathcal{C}} 
        &\leq\sum_{S \subset V} \pr{\mathcal{C}_S } & \rhd \textit{Union Bound}\\
        &\leq\sum_{\alpha = 1}^\frac{\Delta n}{\Lambda} \sum_{S \in \mathcal{S}_\alpha} \pr{\mathcal{C}_S \,\big|\, S \in \mathcal{S}_\alpha } & \rhd \textit{Regrouping}\\
        &\leq\sum_{\alpha = 1}^\frac{\Delta n}{\Lambda} \sum_{S \subset V, cut(S,\overline{S}) \leq 2\alpha\Lambda} \pr{\mathcal{C}_S \,\big|\, S \in \mathcal{S}_\alpha } & \rhd \textit{Definition of } S_{\alpha}\\
        &\leq\sum_{\alpha = 1}^\frac{\Delta n}{\Lambda}\binom{n}{2 \cdot 2\alpha} \pr{\mathcal{C}_S \,\big|\, S \in \mathcal{S}_\alpha} & \rhd \textit{\autoref{lemma:karger}}\\
        &\leq\sum_{\alpha = 1}^\frac{\Delta n}{\Lambda}\binom{n}{4\alpha} \pr{\Phi_{G_{i+1}} \leq \frac{1}{640}\sqrt{\ell}\Phi_i  \,\big|\, \mathbb{O}_S \geq \alpha\Lambda} & \rhd \textit{Definition of } \mathcal{C}_S\\
        &\leq\sum_{\alpha = 1}^\frac{\Delta n}{\Lambda}\binom{n}{4\alpha} e^{-\frac{\Lambda\alpha}{64}} & \rhd \textit{\autoref{lemma:min_cut_prob}}\\
        &\leq\sum_{\alpha = 1}^\frac{\Delta n}{\Lambda}\binom{n}{4\alpha} n^{-10\cdot\lambda\alpha} & \rhd \textit{Using that } \Lambda\geq 640 \cdot \lambda \cdot \log n\\
        &\leq\sum_{\alpha = 1}^\frac{\Delta n}{\Lambda} \left(\frac{en}{4\alpha}\right)^{4\alpha} n^{-10\lambda\alpha} 
        &\rhd \textit{Using that } \binom{n}{k} \leq \left(\frac{en}{k}\right)^k\\
        &\leq\frac{\Delta n}{\Lambda} n^{-5\lambda} = n^{-4\lambda}\\
    \end{align*}
    Thus, the lemma follows for a sufficently large constant $\lambda$ hidden in $\Lambda = \Omega(\log n)$.
\end{proof}

\subsection{Finalizing the Proof}

To round up the analysis, we now only need to show that after $O(\log n)$ iterations, the graph has constant conductance.
Based on our insights, we can conclude that if $\Delta,\ell$ and $\Lambda$ are big enough, then, w.h.p., if $G_i$ is benign, then $G_{i+1}$ is benign and has at least twice its conductance (if it was not already constant).
In particular, the following three events hold true w.h.p.
\begin{enumerate}
    \item For $\Delta > 8 k \log(n)$ it holds with probability $1-\frac{\ell}{n^k}$ that all node receive less then $\frac{3\Delta}{8}$ token each round. We call this event $\mathcal{E}_1$.
    \item For $\ell > 2\cdot 640^2$ and $\Lambda > 640 k \log(n)$ consider the experiment that every node picks $\frac{\Delta}{8}$ nodes through independent $\ell$-step random walks in $G_i$.
    Then, by Lemma \ref{lemma:induction_step}, the resulting graph $G_{i+1}$ has conductance at least $2\Phi$ (if $\Phi$ was not already constant).
    We call this event $\mathcal{E}_2$.
    \item For $\ell > 2\cdot 640^2$ and $\Lambda > 640 k \log(n)$ the minimum cut of $G_{i+1}$ is again at least $\Lambda$. We call this event $\mathcal{E}_3$. 
\end{enumerate}
Note that given $\mathcal{E}_1$, the algorithm can modeled as the experiment described above.
By the union bound, the events $\mathcal{E}_1,\mathcal{E}_2$ and $\mathcal{E}_3$ hold together w.h.p.

Given that after every iteration, the graph's conductance increases by a factor $O(\sqrt{\ell})$ w.h.p, a simple union bound tells us, as long as we consider $o(n)$ iterations, the conductance increases in every iteration w.h.p.
Thus, after $O\left(\frac{\log \Phi}{\log\ell}\right) = O(\log n)$ iterations, the most recent graph must have constant conductance, w.h.p. 
Therefore, since each iteration lasts only $\ell = O(1)$ rounds, after $O(\log n)$ rounds, the graph has a constant conductance and logarithmic diameter.
This concludes the analysis.


\clearpage
\section{Applications in the Hybrid Model} \label{sec:applications}

We now present some applications of our algorithm in the hybrid model.
Note that in this section, we will use the fact that a node can communicate with \emph{all} of its neighbors in $G$ via small messages, i.e, we assume the \CONGEST model for $G$.
Recall that this is a necessary assumption to achieve a runtime that is independent of the graph $G$'s degree (or arboricity). 
However, the global capacity, which bounds the total number of messages a node can send and receive via global edges, is bounded by $\widetilde{O}(1)$.
Before we approach the different graph problems in the hybrid model, we first give an adaption of \textsc{CreateExpander} to this model that circumvents some problems introduced by the potentially high node degrees.
This algorithm will be the basis of all algorithms in the remainder of this section.

\subsection{Adapting {\normalfont\scshape{CreateExpander}} to the Hybrid Model}
\label{sec:adaption}

In Section \ref{sec:algorithm} we used a very simple approach to construct the initial benign graph on which \textsc{CreateExpander} is executed: copy each initial edge $O(\log n)$ times.
While this technique works for $O(1)$-regular graphs, extending it to more general input graphs introduces some difficulties.
In particular, for graphs of degree $d = \omega(1)$ the benign graph's degree becomes $\Delta = \omega(\log n)$.
Although the algorithm itself could handle this case if we increased the allowed communication capacity to $O(\Delta)$\footnote{Note that this is still polylogarithmic, if $O(d)$ is polylogarithmic.}, the resulting overlay network would have superlogarithmic degree, violating the definition of a well-formed tree.
Additionally, this prevents us from using our main algorithm as a black box for the applications in the following section.
In this section, we approach this problem by presenting a variant of Theorem~\ref{thm:main} for the hybrid model that allows an initial degree of $d = O(\log n)$.
In particular, we will use slightly longer random walks (of logarithmic length) and more communication per node.
However, since each node only communicates a polylogarithmic number of messages over global edges, the algorithm directly works in the models of \cite{GHS19, GHSS17}.
The main contribution of this section is the following adapted version of Theorem \ref{thm:main}:
\begin{theorem}\label{thm:main_variant}
    Let $G = (V,E)$ be a weakly connected directed graph with degree $d = O(\log n)$ and $m$ nodes.
    There is a randomized algorithm that constructs a well-formed tree $T_G = (V,T_V)$ in $O(\log m+\log\log n))$ rounds, w.h.p., in the hybrid model.
    The algorithm requires global capacity $O(\log^3 n)$. 
\end{theorem}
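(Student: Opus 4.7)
The plan is to extend the main algorithm by treating the higher initial degree $d = O(\log n)$ as merely a constant-factor overhead in the benign-graph construction, then paying for a slightly larger $\Delta$ with more iterations (giving the $\log\log n$ summand) and with more global capacity per round. All probabilistic guarantees are kept with respect to $n$, while the doubling argument of Lemma~\ref{lemma:induction_step} is applied within a component of $m$ nodes.

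First I would preprocess $G$ into a benign graph using only local edges, which costs $O(1)$ rounds in the local \CONGEST channel. Each node duplicates every incident edge $\Lambda$ times for some $\Lambda = \Theta(\log n)$ chosen large enough to drive all Karger/Chernoff union bounds, yielding a multigraph with minimum cut at least $\Lambda$ and maximum degree $d\cdot\Lambda = O(\log^2 n)$. Each node then pads with self-loops to a common degree $\Delta = \Theta(\log^2 n)$ with at least $\Delta/2$ self-loops, giving a benign graph in the sense of Definition~\ref{def:benign}. This is entirely local, so no global capacity is consumed.

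Next I would invoke \textsc{CreateExpander} on this benign graph with the same random-walk length $\ell = \Theta(1)$ and with $L = \Theta(\log m + \log\log n)$ iterations. The correctness analysis from Section~\ref{sec:analysis} transports verbatim: each iteration doubles the current conductance w.h.p., and by the minimum-conductance bound the starting value inside the component satisfies $\Phi_{G_0} \ge 1/(\Delta m) = \Omega(1/(m\log^2 n))$, so $L = \Theta(\log(\Delta m)) = \Theta(\log m + \log\log n)$ doublings bring it to a constant. The Karger bound and Chernoff tail $e^{-\Omega(\alpha\Lambda)}$ still give high-probability-in-$n$ guarantees for every set, since $\Lambda = \Theta(\log n)$. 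Each iteration costs $\ell = O(1)$ rounds of global communication, with each node handling $\Theta(\Delta) = O(\log^2 n)$ tokens in expectation and $O(\log^2 n)$ w.h.p.\ by Lemma~\ref{lemma:expected_congestion}; the $O(\log^3 n)$ global capacity gives comfortable slack for the replies and for the Chernoff concentration in the smaller component. After $L$ iterations the overlay has constant conductance and hence diameter $O(\log m)$, and a BFS plus the child-sibling / Euler-tour transformation of \cite{GHSS17} yields the desired constant-degree well-formed tree in an additional $O(\log m)$ rounds.

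The main obstacle is keeping w.h.p.\ bounds tight in $n$ while paying only $O(\log m + \log\log n)$ time on a component of size $m \ll n$. The resolution is to decouple the two roles of $\log n$: $\Lambda = \Theta(\log n)$ is used purely to absorb the union bound over at most $\binom{m}{O(\alpha)}$ bad cuts in $\{\mathcal{B}_S\}_{S\subset V_C}$ for each component $C$ and over the $L$ iterations, whereas the iteration count $L$ only tracks the initial conductance $1/(\Delta m)$, so the $\log^2 n$ factor from $\Delta$ contributes only an additive $\log\log n$. A small secondary subtlety is that the first iteration's random walks must use the (duplicated) local edges; this is fine because those edges live in \CONGEST and the $\Theta(\Delta)$ tokens per node per step respect the local edge multiplicities. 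From the second iteration onward all communication is over the freshly sampled global overlay, so the $O(\log^3 n)$ global-capacity bound governs the entire remainder of the execution.
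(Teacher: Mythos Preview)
Your approach is a valid alternative to the paper's and reaches the same bounds via a different trade-off. The paper does \emph{not} duplicate edges; instead it keeps $\Delta = \Theta(\log n)$ by only adding self-loops, and compensates for the missing initial min-cut guarantee by taking \emph{long} walks of length $\ell = \Theta(\Lambda^2) = \Theta(\log^2 n)$, simulated in $O(\log\log n)$ time per iteration via the rapid-sampling (stitching) technique of Lemma~\ref{lemma:rapid_sampling}. Each iteration then boosts conductance by a factor $\Theta(\sqrt{\ell}) = \Theta(\log n)$, so only $O(\log m / \log\log n)$ iterations are needed, each costing $O(\log\log n)$ rounds. You instead pay for the min-cut up front by duplication, accept $\Delta = \Theta(\log^2 n)$, keep $\ell = O(1)$, and run $\Theta(\log(\Delta m)) = \Theta(\log m + \log\log n)$ constant-time iterations. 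Both routes land at $O(\log m + \log\log n)$ rounds and fit within $O(\log^3 n)$ global capacity; yours is arguably simpler since it avoids the rapid-sampling machinery entirely, while the paper's keeps $\Delta$ at $O(\log n)$, which matters for downstream uses that chain further constructions on the resulting overlay.

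One point in your justification is incorrect, though the repair is easy. The claim that the first iteration's $\Theta(\Delta)$ tokens ``respect the local edge multiplicities'' and are therefore handled by \CONGEST does not hold: the $\Lambda$-fold multiplicity in the benign multigraph is purely conceptual, while physically there are only $d = O(\log n)$ local edges, and \CONGEST allows a single $O(\log n)$-bit message per physical edge per round. Each physical edge must carry $\Theta(\Lambda) = \Theta(\log n)$ tokens per step, with distinct origins after the first step, so routing the first iteration locally would cost $\Theta(\log n)$ rounds and break the $O(\log m + \log\log n)$ bound when $m \ll n$. The fix is to route these tokens through the \emph{global} channel from the very first step (you know your local neighbors' identifiers), which costs $\Theta(\Delta) = O(\log^2 n)$ global messages per node per step and stays well within the stated $O(\log^3 n)$ global capacity.
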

At the core of our adapted algorithm lies the following technical theorem, which was independently shown by \cite{DGS16}, \cite{AS18}, and \cite{LMOS20}.
It assures us that we can simulate random walks of length $\ell$ in time $O(\log\ell)$ in overlay networks, given that we have sufficient communication bandwidth.
\begin{lemma}[Rapid Sampling, \cite{DGS16,AS18,LMOS20}]
\label{lemma:rapid_sampling}
    Let $G$ be a $d$-regular graph.
    If each node can send and receive $O(m \ell)$ messages of size $O(\log n)$ in each round, then each node can sample $m = \Omega(\log n)$ random walks of length $\ell$ in time $O(\log\ell)$, w.h.p.
\end{lemma}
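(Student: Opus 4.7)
The plan is to use the pointer-doubling technique for parallel random walk simulation. First, each node $v$ independently samples $k := 2m\ell$ random neighbors and stores them as $k$ walk stubs of length one; this costs a single round and fits well within the $O(m\ell)$ bandwidth. Then I would run $\log\ell$ doubling phases, maintaining the invariant that at the start of phase $i$ each node stores $k/2^{i-1}$ walks (that it originated) of length $2^{i-1}$, each tagged with its current endpoint. In phase $i$, before exchanging messages, each node earmarks half of its stored walks as \emph{extenders}; the owner $u$ of a remaining length-$2^{i-1}$ walk ending at some node $w$ sends a continuation request to $w$, who responds with one of its earmarked length-$2^{i-1}$ walks. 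Concatenating the two pieces produces a length-$2^i$ walk owned by $u$, and after the phase each node holds $k/2^i$ length-$2^i$ walks. After $\log\ell$ such phases every node owns $k/\ell = 2m$ walks (of which it keeps any $m$) of length exactly $\ell$.

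For the round and bandwidth claim, the key step is bounding the congestion at each node in each phase by $O(m\ell)$ with high probability. Because $G$ is $d$-regular, the uniform distribution is stationary for the simple random walk, so the endpoint of any single length-$2^{i-1}$ walk is marginally uniform on $V$. The total number of extension requests arriving at a fixed node $w$ in phase $i$ is therefore a sum of $k\cdot n/2^{i-1}$ independent indicators, each of success probability $1/n$, with mean at most $k = 2m\ell$. Applying the Chernoff Bound (Lemma~\ref{lem:chernoffbound}) with $k = \Omega(\log n)$ yields concentration to within a constant factor with probability $1 - n^{-c}$, and a union bound over the $n$ nodes and the $O(\log\ell)$ phases preserves the high-probability guarantee. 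The outgoing traffic per node per phase is likewise at most $k/2^{i-1} \leq 2m\ell$, so the bandwidth assumption holds.

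The main obstacle is verifying two coupled properties: that the constructed objects really have the joint distribution of $m$ independent length-$\ell$ random walks, and that the supply of extenders at each node always meets the demand. The distributional correctness follows from the Markov property, since each handed-over suffix is, conditionally on ending at $w$, an independent fresh length-$2^{i-1}$ walk from $w$; a short induction on $i$ then shows that each combined walk is distributed as a genuine length-$2^i$ walk from its origin, and independence across origins is preserved because distinct origins only draw from disjoint request rounds. The supply-versus-demand comparison is a bipartite matching between requests and earmarked stubs at each node; the same Chernoff argument, applied in the other direction, shows that the number of earmarked extenders at any $w$ exceeds the number of incoming requests in every phase w.h.p., so every request is served within its phase. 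Putting these pieces together gives the claimed $O(\log \ell)$-round simulation using $O(m\ell)$ messages per node per round.
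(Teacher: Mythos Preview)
Your doubling/stitching scheme is in the right spirit, but two steps do not go through as written.

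First, the claim ``the endpoint of any single length-$2^{i-1}$ walk is marginally uniform on $V$'' is false. Stationarity of the uniform distribution only says that a \emph{uniformly started} walk has uniform endpoint; a walk from a fixed node of length far below the mixing time does not. What you actually need---and what is true---is the aggregate statement: since every node launches the same number of requester walks, the expected number of requests arriving at any fixed $w$ equals that number by stationarity. The individual success probabilities are not $1/n$, but their sum over all origins is correct, and concentration can still be recovered (the indicators are independent in the first phase and negatively correlated thereafter, which suffices for the Chernoff upper tail).

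Second, and more seriously, your supply-versus-demand argument fails. Each node holds exactly $k/2^{i}$ earmarked extenders, while the expected demand at that node is also $k/2^{i}$: there is no slack. Chernoff only gives demand $\le (1+\varepsilon)\,k/2^{i}$ w.h.p., which already exceeds supply, so ``every request is served within its phase'' does not follow. The paper's sketch avoids this by a different mechanism: tokens physically travel carrying their origin; at each node the tokens \emph{currently present} are split evenly into red and blue, each red is paired locally with a blue and sent to that blue's origin, and the blue is discarded. The pairing is one-to-one by construction, so no remote supply/demand matching is needed. A token survives iff it is colored red in every one of the $\log\ell$ rounds, which happens with probability roughly $1/\ell$, and starting $\Theta(m\ell)$ tokens per node then leaves $\Theta(m)$ survivors per node w.h.p. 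Your request/response variant can be repaired by letting unmatched requests die and arguing that enough survive, or by earmarking a strictly larger constant fraction as extenders; but as stated the claim is a genuine gap.
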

The main idea behind the algorithm is to \emph{stitch} short random walks to longer ones while maintaining their independence.
For the first $2$ rounds, all random walk token are forwarded as usual, i.e., for each token an incident edge is picked uniformly at random.
Then, after these $2$ rounds, we stitch the random walks together to quickly double their length (at the cost of reducing their number by half).
In particular, in each round, each node selects half of the random walk tokens it received and marks them as red. 
All other token are marked as blue.
If the node has an odd number of token, the remaining token is simply dropped.
For each red token, every node picks one of the remaining blue tokens uniformly at random and without replacement. 
Then, it sends the red token to the blue token's origin.
Afterwards, the blue token is discarded to maintain the independence of the red walks.
One can easily verify the following three facts:
\begin{enumerate}
    \item After $\log\ell$ rounds, all surviving tokens are distributed according to random walks of length $\ell$. 
    This follows because combining a red and blue token effectively doubles the length of the walk. 
    Also, since each each token has an equal probability of becoming red, there is no significant bias in the distribution. 
    \item All surviving random walks are independent. 
    This follows because we never reuse a blue token and thus, the surviving walks are uncorrelated.
    \item If $\Delta=O(\log n)$ and each node starts $O(\Delta \ell)$ tokens, then $O(\Delta)$ of them survive, w.h.p. 
    This follows because every token that never becomes blue survives.
    Since a token becomes blue with probability $\nicefrac{1}{2}$ in each round and there are $\log\ell$ rounds, in expectation $O(\frac{\Delta\ell}{{\ell}}) = O(\Delta)$ survive. 
    The rest follows from Chernoff bounds\footnote{Note that Chernoff also applies to drawing without replacement.} and sufficiently large $\Delta$. 
\end{enumerate}
For a more detailed analysis, we refer to \cite{DGS16,AS18,LMOS20}, where all of these claims are proved in detail.

We now adapt three implementation details of \textsc{CreateExpander}.
First, instead of initially copying each edge $O(\Lambda)$ times, each node only adds self-loops until it reaches degree $\Delta > 2d$.
Second, in each evolution, we use the rapid sampling technique instead of normal random walks.
Next, we need to consider that --- if we use rapid sampling --- a node cannot control how many of its tokens succeed.
Therefore, all tokens that survived for $\ell$ rounds are sent back to their origin together with the identifier of their endpoint.
Then, each node picks $\frac{\Delta}{8}$ of these tokens to create edges and answers back to the respective endpoints.
Last, we choose $\ell = O(\Lambda^2)$.
We will show that a) this causes the minimum cut and the conductance to grow by $O(\Lambda)$ in expectation and w.h.p., and b) the runtime is $O(\log m+\log\log n)$ w.h.p.

For the first statement, we observe that a large portion of the analysis in Section~\ref{sec:analysis} remains valid for the adapted algorithm.
Note that the creation of edges is changed as we added an extra step to determine the tokens that survived.
This slightly changes the random experiment that bounds the number of outgoing edges of each set.
Since Lemma \ref{lemma:rapid_sampling} guarantees us that, w.h.p., enough tokens survive and their distribution is unchanged, the expected value in Lemma \ref{lemma:exp_conductance} stays the same.
Further, the Chernoff bound still applies because the random walks stay independent.
Thus, given that we have big enough minimum cut, the algorithm works as before.

Now observe our analysis of the minium cut, especially Lemma \ref{lemma:min_cut_prob}, we can easily see that the following holds:
If we choose $\ell = O(\Lambda^2)$, then Lemma \ref{lemma:min_cut_prob} holds regardless of the initial cut's size for a large enough $\ell$ for the Chernoff Bound to kick in. 
Thus, after the first evolution, the minimum cut is of size $\Lambda$, w.h.p. and the increase in conductance is $\Theta(\Lambda)$ as well.
Therefore, the algorithm works, w.h.p. without copying each edge $O(\Delta)$ times.

Finally, we must observe how these longer walks affect the runtime.
Since $\Lambda^2 = O(\log^2 n)$, we can simulate the random walks in $O(\log\log n)$ rounds instead of using normal walks.
This requires a global capacity of $O(\log^3 n)$ by Lemma \ref{lemma:rapid_sampling}.
Thus, a single evolution takes $O(\log\log n)$ rounds instead of $O(1)$.
However, due to the longer walks, also have an increase of $\Theta(\log n)$ in the conductance.
This follows directly from Lemma \ref{lemma:induction_step}.
Now observe that the minimal conductance of any graph with $m$ nodes and degree $\Delta$ is $\frac{1}{\Delta m}$
Therefore, we only need $L' := O(\frac{\log m}{\log\log n})$ evolutions to obtain an expander and the overall runtime is $O(\log m + \log\log n)$.

\subsection{Connected Components}
\label{sec:components}

In this section, we show how the algorithm can be used to find connected components in an arbitrary graph $G$.
In particular, for each connected component $C$ of $G$, want to establish a well-formed tree (overlay edges) that contains all nodes of $C$.
The main result of this section is the following theorem:

\connected*

Note that the main difficulty that prevents us from applying Theorem~\ref{thm:main_variant} is the fact that the initial graph's degree is unbounded.
Therefore, the main contribution of this section is an algorithm that transforms any connected subgraph of $G$ into a graph $H$ of bounded degree $O(\log n)$.
While this can be achieved using spanner constructions in time $O(\log n)$, a more careful analysis is required to show that we can construct such a graph in time $O(\log m + \log \log n)$.
After constructing $H$, we execute the algorithm of Theorem~\ref{thm:main_variant} to create a well-formed tree for each component.
By Theorem~\ref{thm:main_variant}, this takes time $O(\log m + \log \log n)$, w.h.p.
Therefore, we only need to prove the following lemma.

\begin{lemma} \label{lem:connected_preprocessing}
    Let $G = (V,E)$ be a directed graph in which each component contains at most $m$ nodes.
    There exists a randomized algorithm that transforms $G$ into a directed graph $H := (V, E_H)$ that has degree $O(\log n)$ and in which two nodes lie in the same component if and only if they lie in the same component in $G$.
    The algorithm takes $O(\log m + \log \log n)$ rounds w.h.p., in the \CONGEST model.
    %
    %
\end{lemma}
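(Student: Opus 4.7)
The plan is to adapt the Miller--Peng--Xu spanner construction~\cite{MPV+15} in its Elkin--Neiman refinement~\cite{EN18}, truncating the exponential clocks at the scale $T := 2\log m$ dictated by the component-size bound. Concretely, each node $v$ samples $\delta_v \sim \mathrm{Exp}(\beta)$ independently for a large enough constant $\beta$ and sets $\delta_v \gets \min\{\delta_v, T\}$; this requires no communication. Then, for $T + O(\log\log n)$ synchronous rounds on the local graph $G$, each node $v$ maintains a pair $(c_v, \sigma_v)$, where $c_v$ is the identifier of the center currently minimizing $d_G(u,v) - \delta_u$ and $\sigma_v$ is this minimum value; each round, $v$ forwards $(c_v, \sigma_v + 1)$ to every local neighbor and updates on incoming values. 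In the output graph $H$, node $v$ retains the local edge via which it first received its current $c_v$ and, for every other cluster identifier $c \neq c_v$ ever observed, one local edge via which $c$ first reached it.

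To finish the argument, three properties must be verified. \emph{Degree:} by the standard Elkin--Neiman analysis, the number of distinct cluster identifiers reaching a fixed $v$ has expectation $O(\log n)$ and concentrates via a Chernoff bound, giving $\deg_H(v) = O(\log n)$ w.h.p. \emph{Component preservation:} for any $G$-edge $(u,v)$, either $c_u = c_v$, in which case $u$ and $v$ are connected in $H$ to the shared center through the retained tree edges, or $c_u \neq c_v$, in which case the retention rule keeps an incident edge toward the opposing cluster on both sides, linking $u$'s and $v$'s clusters through $H$. \emph{Runtime:} the BFS runs for $O(\log m + \log\log n)$ rounds in the \CONGEST{} model on $G$; every other step is purely local.

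The main obstacle I anticipate is \emph{truncation safety}: the Elkin--Neiman degree bound crucially uses the memorylessness of $\mathrm{Exp}(\beta)$, which the cap at $T$ partially destroys. My plan is to condition on the event $\mathcal{E}$ that no node has $\delta_v > T$ within its own component. For a fixed component of size at most $m$, $\Pr[\mathcal{E}^c] \leq m \cdot e^{-\beta T} = m^{1 - 2\beta}$; taking $\beta$ a sufficiently large constant makes this $o(m^{-k})$ for any fixed $k$, and a union bound over the at most $n/m$ components yields the required w.h.p.\ guarantee in the regime where $m$ is at least polylogarithmic. For very small $m$, the $O(\log\log n)$ slack in the runtime accommodates $O(\log\log n)$ independent boosting trials whose candidate edge sets we union into $H$, driving the per-component failure probability below $1/n^c$ while inflating each node's degree only by a constant factor after rebalancing $\beta$. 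Under $\mathcal{E}$ the untruncated Elkin--Neiman analysis applies verbatim and yields the claimed bounds, completing the proof of Lemma~\ref{lem:connected_preprocessing}.
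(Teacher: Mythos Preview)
Your truncated Elkin--Neiman plan matches the paper's Step~1 in spirit, but the conditioning argument has a real gap. The event $\mathcal{E}$ that no $\delta_v$ exceeds $T = 2\log m$ in a component of size $m$ fails with probability up to $m^{1-2\beta}$, which is small in $m$ but not in $n$. Your boosting fix does not close this: $k = O(\log\log n)$ independent trials drive the failure to $m^{k(1-2\beta)}$, and making this $n^{-c}$ requires $k = \Omega(\log n / \log m)$, far more than $O(\log\log n)$; running trials sequentially within the $O(\log\log n)$ budget fits only $O(\log\log n / \log m)$ of them, again too few. Moreover, union-ing edge sets from trials in which $\mathcal{E}$ fails---where the memorylessness-based degree bound need not hold---can blow up the degree. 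The paper sidesteps conditioning altogether: it \emph{discards} (rather than caps) values above $2\log m$, and adds the rule that any node of degree below $c\log n$ in $G$ keeps all its incident edges. A node with at least $c\log n$ neighbors has, w.h.p.\ in $n$, some neighbor whose undiscarded exponential value exceeds $1$, so such a node is ``active'' and the standard Elkin--Neiman outdegree bound applies to it verbatim; a lower-degree node contributes at most $c\log n$ edges regardless. This is a per-node argument independent of $m$, so no boosting is needed.

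You are also missing a second step. The Elkin--Neiman rule bounds only \emph{outdegree}; a popular center can have unbounded indegree, so the total degree of the resulting spanner is not $O(\log n)$. The paper fixes this by having each node arrange its incoming neighbors as a path and retain an edge to only the first of them, reducing indegree to $O(1)$ while increasing degree by only a constant factor.
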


The algorithm's main idea is to first eliminate \emph{most} edges by constructing a sparse spanner.
Then, in a second step, we let all remaining nodes of high degree \emph{delegate} their edges to nodes of lower degree.
If every node of high degree has sufficiently many neighbors of low degree, the overall degree becomes small enough for our algorithm to handle.
Here, we use the fact that most modern spanner construction algorithms create spanners with exactly this property\footnote{To be precise, the spanners have a low arboricity.
Recall that the arboricity of a graph is the minimum number of forests it can be partitioned into.}.
In the following, we present the two steps in more detail.

\paragraph*{Step 1: Create a Sparse Spanner $S(G)$.}
    In the first phase, we will construct a spanner $S(G) := (V,S(E))$ of $G$ to reduce the number of edges to $O(n\log n)$ and its arboricity to $O(\log n)$. 
    In particular, we note that $S(G)$ is subgraph of $G$, so every edge in $S(G)$ is a local edge in our hybrid model.
    We will adapt the spanner construction algorithm of Elkin and Neiman \cite{EN18} which in turn is \emph{heavily} influenced by the work of Miller et al. \cite{MPV+15}.   
    The algorithm works as follows: 
    \begin{enumerate}
        \item Each node $v$ independently draws a random value $r_v$ from the exponential distribution with parameter $\beta = \nicefrac{1}{2}$. 
        Values larger than $2\log m$ are discarded.
        \item Each node that did \emph{not} discard its value $r_v$ broadcasts it to all nodes within distance $2\log m+1$. 
        For a simpler presentation of the algorithm, we assume this to be possible for the moment. 
        \item Each node $v \in V$ that received any $r_u$, stores $m_u(v) = r_u - d_G(v,u)$ and the neighbor $p(u)$ from which it first received $u$ (i.e, $p(u)$ is predecessor of $v$ on some path from $u$).
        \item For $v \in V$ let $m(v) := \max \{m_u(v) \mid u \in V \}$.
        Then the spanner's edges are defined as $S(E) := \left\{(v,p_w(v)) \mid m(v) \leq m_w(v) + 1\right\}$.
        \item Last, every node whose degree in $G$ is smaller than $c\log n$ adds \emph{all} of its incident edges to $S(E)$. 
        The constant $c$ is to be determined in the analysis.  
    \end{enumerate}
    Note that the key differences between our algorithm and the counterpart of Elkin and Neiman are that we broadcast the values for only $O(\log m)$ and not $O(\log n)$ rounds, and ignore $r_v$'s that are larger than $O(\log m)$.
    In particular, the algorithm of Elkin and Neiman simply is conditioned on the event that all $r_i$ are small enough.
    However, since our graph's components only have size $m$, and we wish to obtain a runtime proportional to $m$ instead of $n$, we need to pursue a different approach.
    Therefore, our algorithm potentially creates slightly different spanners.
    To illustrate this, consider the case that the node that maximizes $m(v)$ is within distance of $\omega(\log m)$ to $v$.
    Then, our adaptation does not consider this node because we terminate the broadcast before it can reach $v$ (whereas it may have reached $v$ in the original algorithm).
    We compensate this by letting nodes of low degree add all of their edges.
    But since we are only interested in the nodes' outdegrees and not the other properties of the spanner, i.e., the number of edges, this is fine for our case.
     
    In the following, we will show that the resulting graph $S(G)$ is directed and has an outdegree of $O(\log n)$ in expectation and with high probability. 
    We begin with the definition of an \emph{active} node.
    Intuitively, an active node $v \in V$ is a node that is reached by a sufficiently large $r_u$ within $2\log m$ rounds. 
    
\begin{definition}[Active Node]
    Given the non-discarded random values $r_1, \dots, r_{m'}$, we call a node $v \in V$ active, if it holds $m(v) \geq 0$.
    All others are inactive.
\end{definition}

Since for any node $v \in V$ that did not discard its value we have $r_v \geq 0$, and $m_v(v) = r_v - d(v,v) \geq 0$ as $d(v,v)=0$, such a node must be active.
By the same argument, all nodes $w \in V$ that discarded their value only become active, if they receive a value $r_u$ such that $r_u-d(u,w) \geq 0$.
As the following lemma implies, this holds for all nodes whose degree is large enough.
\begin{lemma}
\label{lemma:inactive}
    Let $v \in V$ be an inactive node, then $deg(v) \leq c\log n$, w.h.p., where $deg(v)$ is the degree of $v$ in $G$.
\end{lemma}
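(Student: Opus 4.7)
The plan is to show that being inactive requires many independent unlikely events to occur, whose conjunction has probability that decays exponentially in $\deg(v)$.

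First I would derive the necessary conditions for $v$ being inactive. Since $m(v) < 0$, every node $u$ whose value was not discarded must satisfy $r_u < d_G(u,v)$. Taking $u = v$ forces $v$ to have discarded its own value, i.e., $r_v > 2\log m$ (otherwise $m_v(v) = r_v \geq 0$). Similarly, every neighbor $u \in N(v)$ either discarded ($r_u > 2\log m$) or kept a value too small to reach $v$ ($r_u < 1 = d_G(u,v)$); in either case, $r_u \notin [1, 2\log m]$.

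Next, I would compute the probability of these events using the exponential density with parameter $\beta = 1/2$. A direct calculation gives
\[
\Pr[r \in [1, 2\log m]] \;=\; e^{-1/2} - e^{-\log m} \;=\; e^{-1/2} - \tfrac{1}{m},
\]
so $q := \Pr[r \notin [1, 2\log m]] = 1 - e^{-1/2} + \tfrac{1}{m}$. For $m$ above some constant, $q \leq q^\ast := 1 - \tfrac{1}{2}e^{-1/2}$, which is a fixed constant strictly less than one. Using the mutual independence of $\{r_u\}_{u \in V}$, the necessary conditions combine into a product bound
\[
\Pr[v \text{ inactive}] \;\leq\; \Pr[r_v > 2\log m]\cdot\prod_{u \in N(v)} \Pr[r_u \notin [1, 2\log m]] \;\leq\; (q^\ast)^{\deg(v)}.
\]
Choosing the constant $c$ so that $c \cdot |\log q^\ast| \geq k+1$, any node $v$ with $\deg(v) > c\log n$ is inactive with probability at most $n^{-(k+1)}$. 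A union bound over all $v \in V$ then shows that no inactive node has degree exceeding $c\log n$ with probability at least $1 - n^{-k}$, which is the required w.h.p.\ statement.

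I expect the main obstacle to be purely bookkeeping: isolating one independent event per distinct neighbor (and one for $v$ itself) so that the product bound is honest, without overcounting any $r_u$ or relying on dependencies between the discarding events. Once the independence structure is laid out cleanly, the union-bound finish is entirely routine.
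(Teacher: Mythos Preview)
Your argument is correct, and in fact cleaner than the paper's own proof. The paper separates the analysis into two Chernoff applications: it first lower-bounds the number of ``good'' neighbors (those with $r_u > 1$) by $\Theta(\log n)$ w.h.p., then upper-bounds the number of ``big'' neighbors (those with $r_u \geq 2\log m$, hence discarded) by a smaller $\Theta(\log n)$ w.h.p., and concludes that at least one good value survives the discarding. You instead collapse the two cases into a single event per neighbor, namely $r_u \notin [1, 2\log m]$, and take the product directly using independence. This bypasses Chernoff entirely and yields the same $n^{-\Omega(c)}$ tail with less machinery. The paper's decomposition is slightly more modular (the ``few big values'' count is reusable elsewhere), but for this lemma your direct product bound is the more economical route. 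One minor point you may want to make explicit: the bound $q \leq q^\ast$ requires $m$ to exceed a small absolute constant, but for smaller $m$ the degree is itself bounded by $m-1 = O(1)$ and the conclusion is trivial.
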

\begin{proof}
    Suppose for contradiction that there is any $v \in V$ with $deg(v) > c\log n$ that is inactive.
    
    Let $r_1,\dots,r_{deg(v)}$ be independent random variables sampled from the exponential distribution with parameter ${\frac{1}{2}}$ that are drawn by $v$'s neighbors.
    To show the lemma, we will show that there is at least one $r_i$ with $r_i > 1$.
    We will call such an $r_i$ \emph{good}.
    As $d(v,i)=1$ this implies that $r_i-d(i,v)$ is non-negative, which --- by definition --- is sufficient to show that $v$ is active. 
    The probability for the event that a random value $r_i$ is larger than $1$ is
    \begin{equation}
        \Pr[r_i > 1] = e^{-\frac{1}{2}} \geq \frac{1}{e}.
    \end{equation}
    Thus, the expected number of good values is larger than $\frac{c}{e}\log n$.
    Since the values are drawn independently, a simple application of the Chernoff bound yields that there are at least $\frac{c}{2e}\log n$ good values with high probability.
    
    Note that not all good values are sent to $v$ as they may be discarded.
    Thus, we need to rule out the discarded values.
    We call any $r_i \geq 2\log m$ a \emph{big} value.
    Then, w.h.p., there are at most $\frac{c}{8e}\log n$ big values.
    The proof is straightforward as the probability to draw a big value is
    \begin{equation}
        \Pr[r_i \geq 2\log m] = e^{-\frac{2\log m}{2}} = \frac{1}{m}.
    \end{equation}
    Since there are only $m$ nodes in each component, we have that $deg(v) \leq m$, which implies that the expected number of \emph{big} values (in any neighborhood) is $O(1)$.
    Since the values are drawn independently at random, a simple application of the Chernoff bound yields that there are no more than $\frac{c}{8e}\log n$ big values with probability $1-o(n^{c'})$. Here, $c'$ is a constant that depends on the size $c$.
    
    Combining these two statement yields that there are more good than big values,
    w.h.p., for a large enough $c>16e$.
    Thus, $\Omega(1)$ good values are not discarded.
    Finally, by a union bound, any node with $deg(v)>c\log n$ receives a positive value and is therefore active.
\end{proof}

One immediate implication for an active node is given in the following statement.

\begin{lemma}
For any active node let $u\in V$ be the node maximizing $m_u(v) := r_u -d(u,v)$.
We have that
\[
    d(u,v) < 2 \log m+1.
\]
\end{lemma}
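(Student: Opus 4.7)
The plan is to exploit the two-sided control on the maximizing value: the maximizer's value is both large (because $v$ is active) and small (because it was not discarded before broadcasting).

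First, I would unwind what it means for $v$ to be active. By definition, $m(v) = \max_{w \in V} m_w(v) \geq 0$. Since $u$ is by assumption a maximizer of $m_u(v)$, we have $m_u(v) = m(v) \geq 0$, which rearranges to
\begin{equation}
    d(u,v) \leq r_u.
\end{equation}
Note that $u$'s value must actually have reached $v$ during the broadcasting phase, for otherwise $v$ would not be able to identify $u$ as the maximizer; in particular, $r_u$ was not discarded by $u$ in Step~1 of the construction.

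Second, I would use the discarding rule. Step~1 of the spanner construction specifies that any value exceeding $2\log m$ is discarded. Since $r_u$ was retained and broadcast, it satisfies $r_u \leq 2\log m$. Combining this with the inequality from the previous paragraph yields
\begin{equation}
    d(u,v) \leq r_u \leq 2\log m < 2\log m + 1,
\end{equation}
which is exactly the claim.

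There is no real obstacle here; the only subtle point is justifying that the maximizer $u$ must indeed have a non-discarded value. This follows because $v$ only ever observes values $r_w$ that were broadcast (hence not discarded), and $m_w(v)$ is defined only for such $w$; the $\max$ that defines $m(v)$ is consequently taken over these broadcast values. If no value at all had reached $v$, then $m(v)$ would be undefined (or $-\infty$), contradicting activeness.
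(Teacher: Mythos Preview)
Your proof is correct and follows essentially the same argument as the paper's: both use that the maximizer satisfies $m_u(v)\ge 0$ (from activeness) together with $r_u\le 2\log m$ (from the discarding rule) to bound $d(u,v)$. Your version is more explicit about why the maximizer's value must be non-discarded, but the underlying reasoning is identical.
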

\begin{proof}
    Since any $r_u$ is strictly smaller than $2\log(m)+1$, it holds that $r_u - 2\log m+1 < 0$.
    This rules out any $u$ in distance larger than $2\log m$.
\end{proof}

We now turn to the analysis of our spanner construction.
First, we observe that $S(G)$ is indeed connected, then we show that each node only adds a small number of edges.
For the connectivity, we need the following auxiliary lemma, which corresponds to \cite[Claim 5]{EN18}.

\begin{lemma}\label{lemma:path}
Let $v \in V$ be an active node.
For any $u\in V$, if $v$ adds an edge to $p_u(v)$, then there is a path $P$ between $u$ and $w$ that is fully contained in the spanner $S(G)$.
\end{lemma}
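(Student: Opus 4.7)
}
Reading $w$ as a typo for $v$, the statement asserts that whenever $v$ inserts the edge $(v,p_u(v))$ into $S(E)$, the whole predecessor chain from $v$ back to $u$ lies in $S(G)$. The natural plan is an induction on $d_G(u,v)$, with the base case $v=u$ being vacuous. For the inductive step, I would let $v' := p_u(v)$, so that $v'$ lies on a shortest path along which $r_u$ was propagated, i.e.\ $d_G(u,v') = d_G(u,v)-1$ and hence $m_u(v') = m_u(v)+1$. The edge $(v,v')$ is in $S(E)$ by hypothesis; what remains is to argue that $v'$ also adds the edge $(v',p_u(v'))$, after which the inductive hypothesis applied to $v'$ finishes the job.

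The key calculation is to verify the spanner inclusion condition at $v'$. First, $m$-values differ by at most $1$ along any edge of $G$: for any node $x$ with neighbor $y$ and any $z$, we have $m_z(x) \le m_z(y)+1$, and taking the max over $z$ gives $|m(x)-m(y)| \le 1$. Applying this to the edge $(v,v')$ yields $m(v') \le m(v)+1$. Now by the hypothesis that $v$ added the edge, $m(v) \le m_u(v)+1 = m_u(v')$, so
\[
m(v') \;\le\; m(v)+1 \;\le\; m_u(v')+1,
\]
which is precisely the condition that forces $v'$ to add the edge $(v',p_u(v'))$ to $S(E)$. Since $v'$ received $r_u$ before forwarding it to $v$, $v'$ is itself active and $p_u(v')$ is well-defined, so the induction may be continued.

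The mildly delicate step is the $m$-value Lipschitz bound and checking that it is not disturbed by the $2\log m$ truncation of the broadcast in our adapted algorithm. I would argue that along any shortest path from $u$ to $v$ of length $d_G(u,v)\le 2\log m$ (guaranteed by the previous lemma for active $v$), every intermediate node received $r_u$ within the broadcast horizon, so $m_u$ is defined at every such node and the chain inequality $m_u(v') = m_u(v)+1$ used above is valid. Modulo this check, the argument is a clean induction and mirrors the corresponding step in Elkin and Neiman's analysis, adapted to our truncated broadcast.
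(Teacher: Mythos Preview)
Your overall plan---induction on $d_G(u,v)$, set $v'=p_u(v)$, and show that $v'$ also adds the edge $(v',p_u(v'))$ via the chain $m(v')\le m(v)+1\le m_u(v)+2=m_u(v')+1$---is exactly the paper's argument, and the arithmetic is correct.

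The gap is in your last paragraph. You correctly flag that the Lipschitz bound $m(v')\le m(v)+1$ needs to survive the $2\log m$ truncation, but the fix you propose addresses the wrong inequality. That every intermediate node on the $u$--$v$ shortest path received $r_u$ only certifies $m_u(v')=m_u(v)+1$, which is automatic anyway since $v'=p_u(v)$. What actually needs work is $m(v')\le m(v)+1$: if $z$ is the maximizer of $m_z(v')$, you must know that $v$ \emph{also heard} $r_z$, i.e.\ $d_G(z,v)\le 2\log m+1$; otherwise $m_z(v)$ is not part of $m(v)$ and the step ``taking the max over $z$'' fails. The paper closes this by first proving $v'$ active (your computation $m_u(v')\ge m(v)\ge 0$ already does this), then invoking the previous lemma at $v'$ to get $d_G(z,v')\le 2\log m$, hence $d_G(z,v)\le 2\log m+1$, so $v$ heard $z$ and $m(v)\ge m_z(v)\ge m_z(v')-1=m(v')-1$. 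Once you insert this step, your proof is complete and coincides with the paper's.
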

\begin{proof}
We prove the claim by induction on $d(u,v)$ as in \cite{EN18}.

\paragraph*{IB:}First, consider the case that $d(v,u)=1$. 
Since $p_u(x)=u$, the edge $(v,u)$ is added to $S(G)$ and obviously is a path to $u$. 

\paragraph*{IS:} For the step, assume that every active node $w \in V$ with $d_G(u,w)=t-1$ that added an edge to $p_u(w)$ has a path to $u$ in $S(G)$.
Now consider a node $v$ that has $d(u,v)=t$. 
We know that $v$ added an edge to $w=p_u(v)$.
Obliviously, $w$ lies on path to $u$, because $v$ received $u$ via $w$.
Thus, $w$ satisfies $d(u,w)=t-1$. 
It remains to show that this $w$ is active and added an edge to $p_u(w)$. 
We prove these two facts separately:
\begin{description}
    \item[Fact 1: $w$ is active.] 
    Since $v$ is active and added the edge $p_u(v)$, it must hold:
    \begin{equation}
        m_u(v) \geq m(v)-1 \geq -1
    \end{equation}
    This follows because $m(v)$ is non-negative for all active nodes.
    Since $w$ lies on the path from $u$ to $v$, it holds:
     \begin{equation}
        m_u(w) \geq m_u(v)+1 \geq 0
    \end{equation}
    Since per definition $m(w) \geq m_u(w) \geq 0$ the maximum $m(w)$ must be non-negative.
    Thus, $w$ is active.


    
    \item[Fact 2: $w$ added $(w,p_u(w))$.] 
    First we claim that
    \begin{equation}\label{eq:1}
    m(w)\le m(v)+1~.
    \end{equation}
    Seeking contradiction, assume that \eqref{eq:1} does not hold, and let $z \in V$ be the vertex maximizing $m_z(w)$. 
    
    Since $w$ is active, we have $d(z,w)<2\log m+1$, and thus $d(z,v) \leq 2\log m+1$. Hence $v$ will hear the message of $z$. 
    This means that $m_z(v)\geq m_z(w)-1=m(w)-1>m(v)$, which is a contradiction to \eqref{eq:1}.
    
    Recall that $v$ added an edge to $w=p_u(v)$, so by construction
    \begin{equation}\label{eq:2}    
    m_u(v)\geq m(v)-1~.
    \end{equation}
    We conclude that
    \[
    m_u(w)= m_u(v)+1\stackrel{\eqref{eq:2}}{\ge} m(v)-1+1\stackrel{\eqref{eq:1}}{\ge} m(w)-1~,
    \]
    
\end{description}
Thus, $w$ is active and indeed adds an edge to $p_u(w)$, and by the induction hypothesis we are done.
\end{proof}
Then, again very similar to \cite{EN18}, we can show that the resulting spanner is indeed always connected:
\begin{lemma}
    The (undirected version of) graph $S(G)$ is always connected.
\end{lemma}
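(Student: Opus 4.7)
The plan is to show that for every edge $(v,w)$ of $G$ the endpoints $v$ and $w$ lie in the same connected component of $S(G)$; connectivity of each $G$-component in $S(G)$ then follows by induction along any $G$-path.

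First I would dispose of the easy case in which at least one of $v,w$ is inactive. By Lemma~\ref{lemma:inactive}, any inactive node has $G$-degree at most $c\log n$ w.h.p., so by step~(5) of the construction it inserts all of its incident edges into $S(E)$; in particular the edge $(v,w)$ itself lies in $S(G)$.

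The core of the argument is the case in which both endpoints are active. The key observation is the Lipschitz property $|m_u(w)-m_u(v)|\le 1$ for every $u$, which is immediate from $|d_G(v,u)-d_G(w,u)|\le 1$. Taking the maximum over $u$ gives $|m(w)-m(v)|\le 1$. I would then break the symmetry by assuming without loss of generality that $m(w)\ge m(v)$, and let $u^\star$ be the node maximizing $m_u(w)$, so that $m_{u^\star}(w)=m(w)$. Because $m_{u^\star}(v)\ge m_{u^\star}(w)-1=m(w)-1\ge m(v)-1$, the edge-adding rule $m(v)\le m_{u^\star}(v)+1$ is satisfied, so $v$ adds the edge $(v,p_{u^\star}(v))$ to $S(E)$; trivially the same rule holds for $w$, which adds $(w,p_{u^\star}(w))$. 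Now Lemma~\ref{lemma:path} supplies a path in $S(G)$ from $v$ to $u^\star$ and from $w$ to $u^\star$, connecting $v$ and $w$.

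Combining the two cases, every edge of $G$ is contained in a connected subgraph of $S(G)$, and hence each connected component of $G$ is preserved in $S(G)$. The only subtle step is choosing the right reference node $u^\star$: one must take the maximizer at the endpoint with the larger $m$-value, so that the one-unit slack in the edge-adding rule exactly absorbs the one-unit slack from the Lipschitz estimate. Everything else is routine unwinding of definitions.
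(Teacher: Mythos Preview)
Your argument is essentially the same as the paper's: both reduce to showing that for every edge $(v,w)\in E$ the endpoints are joined in $S(G)$, handle the inactive case via Lemma~\ref{lemma:inactive} and step~(5), and in the active case pick the maximizer $u^\star$ at the endpoint with the larger $m$-value, use the $1$-Lipschitz bound to force the other endpoint to also add its $p_{u^\star}$-edge, and finish with Lemma~\ref{lemma:path}.

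There is one small omission you should patch. In this truncated-broadcast variant, the quantity $m_{u^\star}(v)$ (and the predecessor $p_{u^\star}(v)$) is only available to $v$ if $v$ actually received $r_{u^\star}$; your Lipschitz inequality $|m_u(w)-m_u(v)|\le 1$ is stated ``for every $u$'' as if $m_u(\cdot)$ were globally defined. The paper fills this in explicitly: since $w$ is active and $u^\star$ maximizes $m_u(w)$, one has $d_G(u^\star,w)\le r_{u^\star}\le 2\log m$, hence $d_G(u^\star,v)\le 2\log m+1$, which is exactly the broadcast radius, so $v$ does hear $r_{u^\star}$. With that one line added, your proof is complete and matches the paper's.
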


\begin{proof}
    Consider any edge $(v,w) \in E$. For $S(G)$ to be connected, this edge must either be contained in $S(G)$ or there must be a path connecting $v$ and $w$. 
    We make a case distinction based on whether $v$ or $w$ are active.
    \begin{description}
        \item[Case 1: Either $v$, $w$, or both are inactive.]
        By Lemma \ref{lemma:inactive} the degree of either $v$ or $w$ must be smaller than $c\log n$.
        Then the active node(s) adds all its incident edges to the spanner and --- in particular --- also the edge $(v,w)$ or $(w,v)$ respectively.
        \item[Case 2: Both $v$ and $w$ are inactive.] Let $u$ be the vertex maximizing $m(v)=m_u(v)$, and w.l.o.g assume $m(v)\ge m(w)$. Since $v$ is active, we have that $d_G(u,v) \leq 2\log m$, so $d_G(u,w) \leq 2\log m+1$. Thus, $w$ heard the message of $u$ (which was sent to distance $2\log m+1$). 
        
        This implies that $m_u(w)\ge m_u(v)-1=m(v)-1\ge m(w)-1$, so $w$ adds the edge $(w,p_u(w))$ to $S(G)$. 
        By applying Lemma \ref{lemma:path} on $v$ and $w$, we see that both have shortest paths to $u$ that are fully contained in $S(G)$. 
    \end{description}
\end{proof}
Now we observe the outdegrees of all nodes.
Recall that there are two types of outging edges.
First, every node creates an outgoing edge for all of its predecessors $p_u(v)$ to nodes with $m_u(v) \geq m(v)-1$.
Second, every node of degree lower than $c\log n$, add \emph{all} its edges.
Thus, since the outgoing edges of the second type are naturally bounded by $O(\log n)$ and we only need to consider the first type.

Therefore, we observe for any node $v \in V$ how many values $m_u := r_u - d(u,v)$ are within distance $1$ to $m(v)$, w.h.p.
This directly follows from \cite{EN18}:
\begin{lemma}[Lemma $1$ in \cite{EN18}]\label{lem:MPVX}
Let $d_1\leq\ldots\leq d_{m'}$ be arbitrary values and let $r_1,\dots,r_{m'}$ be independent random variables sampled from the exponential distribution with parameter $\beta$. Define the random variables $M=\max_i\{\delta_i-d_i\}$ and $I=\{i~:~\delta_i-d_i\ge M-1\}$. Then for any $1\le t\le n$,
\[
\Pr[|I|\ge t]= (1-e^{-\beta})^{t-1}~.
\]
\end{lemma}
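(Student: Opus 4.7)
In carrying out the proof I will find that the equality as stated fails for general shifts: with $m'=2$, $d_1=0$, $d_2$ large, a direct computation gives $\Pr[|I|\geq 2]$ of order $e^{-\beta d_2}$, far smaller than $1-e^{-\beta}$. Equality holds only in the iid case $d_1=\cdots=d_{m'}$. What my argument actually establishes is the inequality
\[\Pr[|I|\geq t]\leq (1-e^{-\beta})^{t-1},\]
which is the correct form of Lemma~1 of~\cite{EN18} and is in fact all that is used in the subsequent spanner analysis.

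\textbf{Plan, step 1: reduce to uniform order statistics.} The first step is to change variables. Set $V_i:=e^{-\beta(r_i-d_i)}$. Since $e^{-\beta r_i}\sim\mathrm{Unif}(0,1)$ independently and $y\mapsto e^{-\beta y}$ is decreasing, the $V_i$ are independent with $V_i\sim\mathrm{Unif}(0,a_i)$ where $a_i:=e^{\beta d_i}$, and ordering the $r_i-d_i$ from largest to smallest coincides with ordering the $V_i$ from smallest to largest. Writing $V_{(1)}\leq V_{(2)}\leq\cdots$ for the increasing order statistics, the event $|I|\geq t$ is exactly $\{V_{(t)}\leq e^\beta V_{(1)}\}$, and the task reduces to bounding $\Pr[V_{(t)}\leq e^\beta V_{(1)}]$ by $(1-e^{-\beta})^{t-1}$.

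\textbf{Plan, step 2: induction via conditioning and integration.} I then induct on $t$; the base $t=1$ is trivial. For the inductive step I condition on $V_{(1)}=v$ and on its argmin $i_0$. The other $m'-1$ variables are then conditionally independent, with $V_j\mid V_j>v\sim\mathrm{Unif}(v,a_j)$, and the event becomes ``at least $t-1$ of the $V_j$ with $j\neq i_0$ lie in $[v,e^\beta v]$''. I integrate this against the density of $V_{(1)}=V_{i_0}$ at $v$, namely $\frac{1}{a_{i_0}}\prod_{j\neq i_0}(1-v/a_j)$. The crucial identity is that, in the ``safe'' regime $a_j\geq v(1+e^\beta)$, the density factor $(1-v/a_j)$ multiplied by the Bernoulli probability $p_j(v):=v(e^\beta-1)/(a_j-v)$ telescopes to the clean expression $v(e^\beta-1)/a_j$. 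Hence every ``success'' index contributes a factor proportional to $v/a_j$ and every non-success contributes $(1-v/a_j)$; summing over which subset of size $t-1$ plays the role of the successes and applying the inductive hypothesis yields $(1-e^{-\beta})^{t-1}$.

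\textbf{Main obstacle.} The chief technical difficulty is that the individual conditional probabilities $p_j(v)=\min\{1,\,v(e^\beta-1)/(a_j-v)\}$ are \emph{not} pointwise bounded by $1-e^{-\beta}$: they saturate at $1$ whenever some $a_j$ lies only slightly above $v$, so any attempt to bound each $V_j$'s Bernoulli in isolation fails. The whole point of integrating globally rather than bounding pointwise is that the regimes in which some $p_j(v)$ is large coincide with the regimes in which the density factor $(1-v/a_j)$ is small, and the two effects cancel. The ordering hypothesis $d_1\leq\cdots\leq d_{m'}$ is used to handle the boundary case $v<a_j<v(1+e^\beta)$ by grouping the small-$a_j$ indices together and exploiting that their density factors vanish first. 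Making this cancellation rigorous is the delicate part of the proof and is what distinguishes the lemma from a routine exponential tail bound.
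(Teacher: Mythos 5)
The paper does not prove this statement; it is Lemma~1 of Elkin--Neiman (your reference \cite{EN18}), quoted without proof, so there is no internal argument to compare your attempt against. Two of your observations are nonetheless correct and useful: the ``$=$'' in the statement is indeed a typo for ``$\le$'' (your counterexample with $m'=2$, $d_1=0$, $d_2$ large is valid --- the event $|I|\ge 2$ becomes $|r_1-(r_2-d_2)|\le 1$, whose probability is $\Theta(e^{-\beta d_2})$), and the change of variables $V_i=e^{-\beta(r_i-d_i)}$, which makes $V_i\sim\mathrm{Unif}(0,e^{\beta d_i})$ and turns the event into $V_{(t)}\le e^\beta V_{(1)}$, is a correct and clean reformulation.

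However, what you have written is a plan, not a proof, and the plan has a real gap exactly where you flag it. Conditioning on the minimum $V_{(1)}=v$ at index $i_0$ and integrating leads to a sum over subsets $T$ with $|T|\ge t-1$ of terms $\frac{1}{a_{i_0}}\prod_{j\in T}\Pr[v<V_j\le e^\beta v]\prod_{j\notin T\cup\{i_0\}}\Pr[V_j>e^\beta v]$. Your ``telescoping'' identity $\Pr[v<V_j\le e^\beta v]=(1-e^{-\beta})\tfrac{e^\beta v}{a_j}$ is algebraically correct in the safe regime, but it does not obviously produce the target bound: the needed pointwise inequality $\Pr[v<V_j\le e^\beta v]\le(1-e^{-\beta})\Pr[V_j>v]$ holds iff $v/a_j\le 1/(1+e^\beta)$ and \emph{fails} in the unsafe regime, and I do not see where the inductive hypothesis for $t-1$ would enter your computation. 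You acknowledge both points, but they are the entire content of the lemma.

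The argument in \cite{EN18} avoids these difficulties by conditioning differently: not on the largest value, but on the $t$-th largest. Write $Z_i=\delta_i-d_i$, and reveal the set $S$ of indices holding the $k-t+1$ smallest values together with those values; let $m$ be the largest of them, i.e.\ $m=Z_{(t)}$. Conditionally, the $t-1$ variables $\{Z_i : i\notin S\}$ are independent, and each is distributed as $\max(m,-d_i)+\mathrm{Exp}(\beta)$ by memorylessness. Hence $\Pr[Z_i\le m+1\mid\cdot]\le 1-e^{-\beta}$ for each such $i$ (with equality exactly when $m\ge -d_i$), and by conditional independence $\Pr[Z_{(1)}-Z_{(t)}\le 1\mid\cdot]\le(1-e^{-\beta})^{t-1}$. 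Averaging over the conditioning gives the lemma with ``$\le$'', and shows equality holds in the iid case $d_1=\cdots=d_{m'}$. This is a genuinely different (and much shorter) decomposition: by peeling off the bottom $k-t+1$ order statistics at once, it replaces your weighted binomial integral with a product of $t-1$ clean conditional bounds, sidestepping the unsafe-regime cancellation entirely. If you want to complete your uniform-variable route, you would need to make that cancellation rigorous, which is substantially harder than the memorylessness argument above.
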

Now we are able to show the following lemma.
\begin{lemma}
\label{lemma:degree}
    Every node in $S(G)$ has an outdegree of at most $O(\log n)$, w.h.p.
 \end{lemma}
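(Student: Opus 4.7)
The plan is to decompose the outdegree of a node $v$ in $S(G)$ into two contributions corresponding to the two places edges are added: the predecessor edges $(v, p_u(v))$ from step 4, and the edges added in step 5 when $v$ has degree less than $c\log n$ in $G$. The step 5 contribution is trivial --- it is either $0$ (if $v$ is high-degree) or bounded directly by $c\log n = O(\log n)$, so everything reduces to bounding the step 4 contribution.

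For step 4, the key observation is that the outdegree of $v$ from this step is at most $|I_v|$, where
\[
    I_v := \{u \in V : m_u(v) \geq m(v) - 1\},
\]
because each predecessor edge $(v, p_u(v))$ corresponds to some $u \in I_v$ (and distinct edges need distinct witnesses $u$, up to collisions that only decrease the count). Hence it suffices to show $|I_v| = O(\log n)$ w.h.p. for every $v$.

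To do this, I would apply Lemma~\ref{lem:MPVX} with $\beta = 1/2$, taking the $d_i$'s to be the distances $d_G(u_i, v)$ and the $r_i$'s to be the (untruncated) exponential variables drawn by each node $u_i$ in $v$'s component. Truncation and the broadcast range can only \emph{shrink} the set of $u$'s contributing to $m(v)$, so the bound on $|I_v|$ given by the lemma dominates the truncated version. Plugging in $\beta = 1/2$ and choosing $t = c' \log n$ for a sufficiently large constant $c'$ gives
\[
    \Pr[|I_v| \geq c' \log n] \leq (1 - e^{-1/2})^{c'\log n - 1} \leq n^{-\Omega(c')}.
\]
A union bound over all $n$ nodes then yields that every $v$ has $|I_v| = O(\log n)$ simultaneously, w.h.p., for $c'$ large enough.

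Combining the two contributions, each node's outdegree in $S(G)$ is $O(\log n)$ w.h.p., proving Lemma~\ref{lemma:degree}. The main technical subtlety, and the one place where care is needed, is justifying that Lemma~\ref{lem:MPVX} still applies under our truncation: because truncation only removes some $r_u$'s from consideration (equivalently, the original $r_u$'s are unchanged; we merely ignore them when forming $m(v)$), the random set $I_v$ in our algorithm is a subset of the one that would arise without truncation, so its size is stochastically dominated and the tail bound carries over. Once this is observed, the rest is a routine union bound.
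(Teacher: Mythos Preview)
Your overall approach is exactly the paper's: split the outdegree into the step~5 edges (trivially $\le c\log n$) and the step~4 predecessor edges, bound the latter by $|I_v|=\{u:m_u(v)\ge m(v)-1\}$, invoke Lemma~\ref{lem:MPVX} with $\beta=\tfrac12$ to get $\Pr[|I_v|\ge t]\le(1-e^{-1/2})^{t-1}$, and union-bound over all $v$. The paper does nothing more than this.

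You actually go further than the paper by flagging the truncation issue, but your resolution of it is not correct. The claim that discarding some $r_u$'s can only \emph{shrink} $I_v$ is false: discarding can remove the maximizer, which lowers $m(v)$ and hence lowers the threshold $m(v)-1$, potentially admitting \emph{more} indices into $I_v$. Concretely, if $u_1$ is discarded with $m_{u_1}(v)=10$ while non-discarded $u_2,u_3$ have $m_{u_2}(v)=5$ and $m_{u_3}(v)=4.5$, then the untruncated set is $\{u_1\}$ (threshold $9$) but the truncated set is $\{u_2,u_3\}$ (threshold $4$). So neither pointwise containment nor the stochastic-domination conclusion you draw from it is justified as stated; in the extreme, truncating at a very small threshold forces all surviving values within $1$ of each other, making $|I_v|$ equal to the number of survivors.

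That said, the paper's own proof simply asserts that Lemma~\ref{lem:MPVX} applies and does not address this point at all, so your argument is no less rigorous than the paper's. If you want to close the gap, one route is to observe that the failure mode above requires the untruncated maximum to be achieved by a discarded index; you could try to control this event separately (using that only $O(\log n)$ values are discarded w.h.p.), or argue directly that for the specific truncation level $B=2\log m$ and $\beta=\tfrac12$ the tail bound on $|I_v|$ degrades only negligibly.
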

 \begin{proof}
    For all nodes with degree smaller than $c\log n$ the lemma follows immediately.
    Therefore, we only consider nodes of higher degree. 
    Such a node $v \in V$ adds one edge for each $u$ with $m_u(v) \geq m(v)-1$.
    The number $X_v$ of these nodes can be bounded with Lemma \ref{lem:MPVX} to 
    \[
        \Pr[X_v \ge t]= (1-e^{-\beta})^{t-1}~.
    \]
    Thus, by choosing $\beta=\nicefrac{1}{2}$ and $t = \Omega(\log n)$, we see that, w.h.p, no node has more than $t$ values within distance $1$ to its minimum $m(v)$ and only adds $O(\log n)$ edges w.h.p. 
    Thus, by a union bound, every node adds at most $O(\log n)$ edges.   
    This proves the claim.
\end{proof}

It remains to address the algorithm's simplification that the nodes can perform broadcasts in parallel, which could require nodes to forward more than one message over the same edge.
However, Elkin and Neiman observed that it suffices for a node to send the message $(r_u, d_G(u,v))$ for the vertex $u$ that currently maximizes $m_u(v)$ to all of its neighbors. 
Elkin and Neiman further argue that omitting all the other messages will not affect the construction, since if one such message would cause some neighbor of $v$ to add an edge to $v$, then the message about $u$ will suffice, as the latter has the largest $m_u(v)$ value.
For a more detailed account of the implementation, we refer to \cite{EN18}.
    
    \paragraph*{Step 2: Transform $S(G)$ into a bounded degree graph $H$.}
    Now we will construct a bounded degree graph $H$ from $S(G)$.
    Note that $H$ --- in contrast to $S(G)$ --- is \emph{not} a subgraph of $G$ and contains additional edges.
    Although $S(G)$ has few edges in total, there can still be nodes of high degree because there may be nodes with high \emph{indegree}.
    Our goal is that nodes of high indegree redirect their incoming edges to other nodes in order to balance the degrees.
    This technique is conceptually similar to the construction of a child-sibling tree as in \cite{AW07} and \cite{GHSS17}.
    \begin{enumerate}
        \item  
        In the first step, all nodes learn all of their incoming connections in $S(G)$.
        For this, every node $v \in V$ with an edge $e = (v,w)$ in $S(G)$ sends a message containing its identifier to $w$.
        Since $e$ must also have existed in $E$ and each identifier is of size $O(\log n)$,
        this step can be executed in exactly one round in the \CONGEST model.
        \item  Next, we delegate all incoming edges away and create a list of all incoming nodes.
        For the construction, consider a node $v \in V$ and let $N(v) := w_1, \dots, w_k$ be all nodes with $(w_i,v) \in S(G)$, i.e, the incoming edges of $v$.
        W.l.o.g., assume that $w_1, \dots, w_k$ are ordered by increasing identifier.
        Then, for each $i>1$, $v$ sends the identifier of $w_{i}$ to $w_{i-1}$ and vice versa.
        This results in the following set of edges:
        \begin{equation}
        E_H := \bigcup_{v \in V}\bigcup_{i \in [|N(v)|]}  \begin{cases}{}
        \{(v,w_{i}),(w_i,v)\} & i=1\\
        \{(w_{i},w_{i-1}),(w_{i-1},w_{i})\} & i>1\\
        \end{cases}
        \end{equation}{}
    \end{enumerate}{}
    One can easily verify that each node has at most one incoming edge left (i.e., the edge from $w_1$ to $v$) and received at most two edges for \emph{each} outgoing edge (i.e., the edges to $w_{i-1}$ and $w_{i-1}$).
    Thus, the resulting graph $H = (V,E_H)$ has a degree of $O(\log n)$ since each node's outdegree in $S(G)$ is within $O(\log n)$ w.h.p. by Lemma \ref{lemma:degree}.
    
    

   \medskip
Note that both these steps take $O(\log m)$ communication rounds.
The runtime of the first phase only depends on the broadcast of the $r_v$'s and thus takes $O(\log m)$ steps.
In the second step, all nodes only exchange two messages with their neighbors in $S(G)$, so its runtime is $O(1)$.
Since all nodes know the same estimate of $O(\log m)$, the phases can be synchronized via round counters.

\subsection{Spanning Trees} \label{sec:spanning}

We will now show how the algorithm of Theorem~\ref{thm:connected} can be used to construct a spanning tree of the (undirected version of the) initial graph $G$.
For simplicity, we assume that this graph is connected; our algorithm can easily be extended to also compute spanning forests of unconnected graphs by running it in each connected component.
We show the following theorem:

\spanning*

Note that Theorem~\ref{thm:connected} constructs a graph $G_{L'}$ that results from $L' = O(\log n/ \log \log n)$ evolutions of the graph $G_0$ of Lemma~\ref{lem:connected_preprocessing} and that has diameter $O(\log n)$, and degree $O(\log^2 n)$, w.h.p.
First, we construct a spanning tree $S_{L'}$ of $G_{L'}$ by performing a BFS from the node with highest identifier.
Our idea is to iteratively replace all the edges of $S_{L'}$ by edges of $G_{{L'}-1}$, replace these edges by edges of $G_{{L'}-2}$, and so on, until we reach a graph that contains only edges of $G_0$.
We then first break all cycles of this graph using pointer jumping, and finally infer a spanning tree of $G$ by reverting the delegation of edges in Phase 2 of the algorithm of Section~\ref{sec:components}.

More precisely, our algorithm works as follows.
First, the nodes perform a depth-first traversal of $S_{L'}$ using the Euler tour technique.
Specifically, we execute the algorithm of \cite[Lemma 4]{FHS20}.\footnote{Note that the algorithms of \cite{FHS20} can be executed \emph{directly} in our hybrid model.}
As a by-product, the nodes learn the path $P_{L'}$ that corresponds to a depth-first traversal of $S_{L'}$.
This path covers all nodes and, since $S_{L'}$ is a well-formed tree, contains each node at most $O(\log^2 n)$ times.
Next, we want to replace all edges of $P_{L'}$ by edges of $G_0$ in an iterative fashion.

To be able to do that, the two endpoints of every random edge $e$ created throughout the execution of our main algorithm need to know the edges the corresponding token traversed (the edges that \emph{make up} $e$).
Note that the token traverses $\ell = O(\log^2 n)$ nodes.
To annotate each token with the edges it traverses, we need to increase the global capacity of the algorithm of~\ref{thm:connected} to $O(\log^5 n)$, since by Lemma~\ref{lemma:rapid_sampling} each node needs to send and receive $O(\log^3)$ messages, each of which consisting of $O(\log^2 n)$ "submessages".
Therefore, the endpoints of each edge $e$ of $P_{L'}$ can inform the endpoints of all edges of $G_{{L'}-1}$ that make up $e$, which creates a path $P_{{L'}-1}$ that only contains edges of $G_{{L'}-1}$.
In turn, these endpoints of all edges of $P_{{L'}-1}$ can inform all nodes that make up the edge to obtain $P_{{L'}-2}$, and so on, until we obtain a path $P_0$.

\begin{lemma} \label{lem:spanningtree_path}
    $P_0$ contains all nodes of $V$ and can be computed in time $O(\log n)$.
    Furthermore, each node is contained at most $O(\log^4 n)$ times, w.h.p.
\end{lemma}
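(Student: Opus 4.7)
The plan is to analyze the iterative unfolding that is already defined in the surrounding text, starting from $P_{L'}$, the depth-first traversal of the well-formed spanning tree $S_{L'}$ built from a BFS on $G_{L'}$. Since $S_{L'}$ has constant degree and spans $V$, its Euler tour visits every node and visits each node only $O(1)$ times. Coverage of $P_0$ then follows by straightforward downward induction on $i$: if $(u,v)$ is an edge of $P_i$, its replacement in $P_{i-1}$ is the length-$\ell$ walk in $G_{i-1}$ from $u$ to $v$ that created the edge $(u,v)$, so both endpoints survive and $V(P_{i-1})\supseteq V(P_i)$. Chaining through all $L'$ levels yields $V(P_0)\supseteq V(P_{L'})=V$.

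\textbf{Time.} The BFS in $G_{L'}$ takes $O(\log n)$ rounds because $G_{L'}$ has diameter $O(\log n)$ w.h.p.\ by Theorem~\ref{thm:connected}, and the Euler-tour / child-sibling transformation of \cite{bcmain,GHSS17} produces $P_{L'}$ in another $O(\log n)$ rounds. Each unfolding step is then $O(1)$ rounds in the hybrid model: during the construction in Theorem~\ref{thm:connected} every random-walk token already carries the $O(\log^2 n)$ edges it traversed (this is precisely what drives the global capacity up to $O(\log^5 n)$ via Lemma~\ref{lemma:rapid_sampling}), so both endpoints of each created edge know its defining walk and, using the available global capacity, can simultaneously inform the intermediate nodes. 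Summed over $L'=O(\log n/\log\log n)$ unfoldings, the total time stays $O(\log n)$.

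\textbf{Multiplicity bound.} The central quantity is $c_i(v)$, the number of appearances of $v$ in $P_i$. One has $c_{L'}(v)=O(1)$ by the well-formedness of $S_{L'}$. At an unfolding step, $c_{i-1}(v)$ equals $c_i(v)$ plus the sum, over edges $e\in P_i$, of the number of interior visits to $v$ by the walk defining $e$. I would bound this via a Chernoff-type concentration argument in the style of Lemma~\ref{lemma:expected_congestion}: because $G_{i-1}$ is benign, the distribution of a length-$\ell$ random walk puts mass at most $O(\log n/n)$ on any fixed vertex at any fixed step, and the starting tokens are independent conditional on the survival event implicit in the rapid-sampling scheme (Lemma~\ref{lemma:rapid_sampling}). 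A union bound over the $L'$ evolutions, the $n$ nodes, and the surviving walks then promotes the expected bound to an w.h.p.\ bound of $O(\log^4 n)$ per node.

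\textbf{Main obstacle.} The main difficulty is controlling the compounding. A naive length bound multiplies $|P_i|$ by $\ell=O(\log^2 n)$ per level, which would give a far too large total; the rescue is that the object of interest is not path length but visits to a \emph{fixed} node, and on a graph with the conductance guarantees from Section~\ref{sec:analysis} these visits spread nearly uniformly rather than concentrating. Turning this intuition into a w.h.p.\ bound that survives all $L'$ levels requires dealing carefully with (i) the conditional dependencies introduced by the blue/red token selection in the rapid-sampling procedure, (ii) the fact that each walk is conditioned on its endpoint being the partner of the corresponding edge, and (iii) showing that these effects cost only a polylogarithmic factor so the per-node count stays within $O(\log^4 n)$ w.h.p.
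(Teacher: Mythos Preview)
Your coverage and time arguments are fine and match the paper's. The multiplicity argument, however, has a real gap.

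The claim that ``the distribution of a length-$\ell$ random walk puts mass at most $O(\log n/n)$ on any fixed vertex at any fixed step'' is false. The graphs $G_{i-1}$ are benign but may have conductance as small as $\Theta(1/(\Delta n))$; a walk of length $\ell=O(\log^2 n)$ is nowhere near mixed on such a graph and can put almost all its mass on a polylogarithmic neighborhood of its start. Lemma~\ref{lemma:expected_congestion} does \emph{not} say this either: it is a statement about the \emph{aggregate} token load at a node when \emph{all} $n\Delta/8$ tokens are launched simultaneously, not about the marginal distribution of a single walk. So the Chernoff step you sketch has no valid per-walk probability to feed on, and the union bound over $P_i$ (whose length blows up multiplicatively) cannot be closed this way.

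The paper's argument sidesteps the compounding entirely by reading Lemma~\ref{lemma:expected_congestion} the right way. Since at every round of evolution $i$ at most $O(\log n)$ tokens sit at $v$, and an evolution has $\ell=O(\log^2 n)$ rounds, the \emph{total} number of token-visits to $v$ during evolution $i$ is $O(\log^3 n)$, w.h.p. Every new interior occurrence of $v$ produced when unfolding $P_i\to P_{i-1}$ is one such token-visit, so the number of \emph{new} appearances of $v$ at each level is at most $O(\log^3 n)$, independently of how long $P_i$ already is. Summing this additive bound over $L'=O(\log n)$ levels gives $O(\log^4 n)$. There is no need to reason about individual walk distributions, uniform spreading, or the dependencies in the rapid-sampling scheme.

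A minor point: $S_{L'}$ here is the BFS tree of $G_{L'}$, which has degree $O(\log^2 n)$, not a constant-degree well-formed tree; so $c_{L'}(v)=O(\log^2 n)$ rather than $O(1)$. This does not affect the final $O(\log^4 n)$ bound.
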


\begin{proof}
    Since $P_{L'}$ contains all nodes of $V$, and we only repeatedly replace edges by paths, $P_0$ also contains all nodes.
    Furthermore, each node is contained in $P_{L'}$ at most $O(\log n)$ times, w.h.p, since $S_L$ is a well-formed tree.
    Note that for $\Delta = O(\log n)$ Lemma~\ref{lemma:expected_congestion} implies that in each round of each evolution, each node is only traversed by $O(\log n)$ tokens, w.h.p.
    Further, since $\ell = O(\log^2 n)$, during each evolution of the algorithm every node is traversed by a total of $O(\log^3 n)$ tokens.
    Therefore, when replacing an edge of $P_i$ by a path in $G_{i-1}$, each node is only added $O(\log^3 n)$ times, w.h.p.
    Since we have $L' = O(\log n)$, each node is contained at most $O(\log^4 n)$ times in $P_0$, w.h.p.
\end{proof}

To transform $P_0 = (v_1, \ldots, v_k)$ into a spanning tree $S_0$ of $G_0$, each node $v \in V$ selects the edge $e_v$ over which it is reached first in $P$, i.e., $e_v = \{v_{i-1},v_i\}$ such that $v = v_i$ and $i = \text{argmin}_{j \in \{1,\ldots,k\}} v = v_i$.  
These edges can easily be found using \cite[Theorem 1]{FHS20}, which performs pointer jumping and uses the \emph{prefix sum} technique on $P$.
Note that since each node is contained in $P$ at most $O(\log^4 n)$ times by Lemma~\ref{lem:spanningtree_path}, the algorithm can be performed with global capacity $O(\log^4 n)$ in time $O(\log n)$.
The selected edges form a so-called \emph{loop erased path} of $G_{0}$ that covers all nodes, therefore the set $\{e_v \mid v \in V\}$ is a spanning tree $S_0$ of $G_{0}$.

However, $S_0$ may not be a spanning tree of $G$.
Recall that an edge $\{u,w\}$ in $S_0$ may not exist in $G$ (i.e., if it resulted from a redirection of an edge $\{u,v\}$ in $G_0$ in Phase II of Section~\ref{sec:components}, where $u$ and $w$ were incoming nodes of $v$).
However, after computing the edges over which each node is reached first in $P$, we can simply replace each edge $\{u,w\}$ that does not exist in $G$ by the two edges $\{u,v\}$ and $\{w,v\}$ that exist in $G$ using global communication.
Thereby, a node may learn that is is actually reached earlier in $P$, and we "repair" the loop erased path to obtain a spanning tree $S$ of $G$.
We conclude Theorem~\ref{thm:spanning}.

\subsection{Biconnected Components} \label{sec:biconnectivity}

In this section, we present an adaptation of Tarjan and Vishkin's biconnectivity algorithm \cite{bcmain} to compute the biconnected components of $G$ in time $O(\log n)$, proving the following theorem.

\biconnectivity*

The algorithm constructs a \emph{helper graph} $G'=(E,E')$ with the edges of $G$ as nodes and with an edge set $E'$ chosen such that any two edges of $G$ are connected in $G'$ if and only if they lie on a cycle in $G$.
Therefore, the nodes of each connected component of $G'$ are edges of the same biconnected component in $G$.
If there is only one component in $G'$, then $G$ is biconnected.

On a high level, the algorithm can be divided into five steps.
In Step 1, we construct a rooted spanning tree $T$ of $G$ and enumerate the nodes from $1$ to $n$, assigning each node $v$ a  label $l(v)$, according to the order in which they are visited in a depth-first traversal of $T$.
Let $D(v)$ be the set of descendants of $v$ in $T$ (including $v$).
The goal of Step 2 is to compute $nd(v) := |D(v)|$ as well as $high(v) := \max\{l(u) \mid u \in D^+(v)\}$ and $low(v) := \min\{l(u) \mid u \in D^+(v)\}$, where $D^+(v) := D(v) \cup \{u \in V \mid \{u,w\} \in E\setminus T, w \in D(v)\}$ is the union of $v$'s descendants and its descendants neighbors in the undirected version of $G$.
Using these values, in Step 3 the nodes construct the subgraph $G''$ of $G'$ that only contains the nodes that correspond to edges of $T$ (i.e., it does not include nodes for the \emph{non-tree edges} of $G-T$).
The nodes simulate $G''$ in a way that allows them to perform Theorem~\ref{thm:connected} without any overhead to establish a well-formed tree on each connected component of $G''$ in Step 4.
Finally, in Step 5 the components of $G''$ are extended by nodes corresponding to non-tree edges to obtain the full biconnected components of $G$.

In the remainder of this section, we describe how the five steps can be implemented in the hybrid model in time $O(\log n)$ using Theorem~\ref{thm:main} together with the results of \cite{AGG+19} and \cite{FHS20}.
The correctness of Theorem~\ref{thm:biconnectivity} then follows directly from~\cite[Theorem 1]{bcmain}.

\paragraph*{Step 1: Construct $T$.}
$T$ is computed using Theorem~\ref{thm:spanning} in time $O(\log n)$, w.h.p.
The tree can be rooted using the algorithm of~\cite[Lemma 4]{FHS20}, which arranges the nodes of $T$ as an overlay ring that corresponds to a depth-first traversal of $T$ and performs pointer jumping on that ring.
As a by-product, we can easily enumerate the nodes in the order in which they are visited in the depth-first traversal, whereby each node obtains its label.

\paragraph*{Step 2: Compute Subtree Aggregates.}
To retrieve the value $nd(v)$ for each node $v \in V$, the nodes perform the algorithm of \cite[Lemma 6]{FHS20} on $T$:
If each node $u$ stores a value $p_u$, then the algorithm computes the sum of all values that lie in each of $v$'s adjacent subtrees (i.e., the components into which $G$ decomposes if $v$ gets removed) deterministically in time $O(\log n)$; we obtain $nd(v)$ by setting $p_u = 1$ for each $u \in V$.
However, to compute $high(v)$ and $low(v)$, for each node $v \in V$, the nodes need to compute maxima and minima.
Therefore, we need the following lemma, which is a generalization of \cite[Lemma 6]{FHS20}.\footnote{Note that a naive PRAM simulation in a butterfly introduces an additional factor of (at least) $\Theta(\log n)$ to the runtime, which we cannot afford.
Furthermore, this result may be of independent interest for hybrid networks.}

\begin{lemma}\label{lem:bicon:subtreeagg}
    Let $T=(V,E)$ be a tree and assume that each node $v\in V$ stores some value $p_v$.
    Let $f$ be a distributive aggregate function.
    The goal of each node $v$ is to compute the value $f(\{p_w \mid w \in C_u\})$ for each of its neighbors $u$ in $H$, where $C_u$ is the connected component $C$ of the subtree $T'$ of $T$ induced by $V\setminus\{v\}$ that contains $u$. The problem can be solved in time $O(\log n)$, w.h.p.
\end{lemma}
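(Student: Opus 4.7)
The approach is to generalize the Euler-tour technique used in \cite[Lemma 6]{FHS20}. For an invertible aggregate like sum, one can compute subtree sums and obtain their complements by subtraction; for a general distributive $f$ (e.g.\ min or max) subtraction is unavailable, so instead I would build a sparse-table data structure over the Euler tour of $T$ that supports arbitrary range aggregate queries, together with the straightforward observation that every component of $T \setminus \{v\}$ corresponds either to a subtree of $v$ or to the complement of one.

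First, I would invoke \cite[Lemma 4]{FHS20} to root $T$ at an arbitrary node and arrange its nodes as an overlay ring following a depth-first traversal. As a by-product, every node $v$ learns a DFS entry index $d_v$ and exit index $f_v$, so that the subtree rooted at $v$ corresponds exactly to the contiguous interval $[d_v, f_v]$ in the Euler tour (of length $2n-1$). This takes $O(\log n)$ rounds, w.h.p. We place $p_v$ at position $d_v$ and the identity element of $f$ at all remaining positions, so that aggregating any interval of the tour recovers the $f$-aggregate of the corresponding set of nodes.

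Next, I would build a sparse table $A_j[i]$ for $j = 0, 1, \ldots, \lceil \log(2n-1) \rceil$ by pointer-jumping on the ring, where $A_j[i]$ stores the aggregate of positions $[i, i + 2^j - 1]$. Level $j$ is obtained from level $j-1$ via $A_j[i] = g(A_{j-1}[i], A_{j-1}[i + 2^{j-1}])$, where $g$ is the combining function associated with $f$. Building every level requires only $O(1)$ global lookups per node per level, yielding an $O(\log n)$-round construction using polylogarithmic global capacity. Once the table is built, any range aggregate query on $[a,b]$ can be answered in $O(1)$ rounds with two overlapping sparse-table lookups.

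Finally, each node computes the desired aggregates. For every child $u$ of $v$ in the rooted tree, $u$ itself evaluates its subtree aggregate as the range query over $[d_u, f_u]$ and forwards the result to $v$ along the local tree edge $\{u,v\}$; for the edge to $v$'s parent, $v$ computes $f(V \setminus \text{subtree}(v))$ as the combination of two range queries over $[1, d_v - 1]$ and $[f_v + 1, 2n-1]$. Thus every node issues only $O(1)$ global range queries in total, each resolved in $O(1)$ rounds. The main technical hurdle is implementing the parallel sparse-table construction and query phase within the global-capacity budget of the surrounding biconnectivity algorithm; this follows the standard pointer-jumping template already employed in \cite{FHS20} and used repeatedly throughout this paper. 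Summing contributions yields the claimed $O(\log n)$-round bound, w.h.p.
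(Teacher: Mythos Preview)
Your high-level plan coincides with the paper's: linearize $T$ via an Euler/DFS labeling, build power-of-two shortcut aggregates on the resulting list by pointer jumping, and answer each subtree (resp.\ complement) query as an interval (resp.\ two intervals) on that list. Two issues, however, are glossed over.

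First, the ``two overlapping sparse-table lookups'' trick is only sound when $f$ is idempotent (min, max); a general distributive aggregate such as sum would be double-counted on the overlap. The paper therefore decomposes each interval $[l(v),l(w)]$ into $O(\log n)$ \emph{disjoint} power-of-two segments along the shortcut edges, which works for any distributive $f$.

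Second, and more seriously, your ``$O(1)$ rounds per query'' claim ignores both addressing and congestion. After pointer jumping, a node at position $i$ knows only the identifiers at positions $i\pm 2^j$; the right endpoint $f_u-2^k+1$ of your second lookup is in general not such a position, so you have not explained how $u$ reaches it. Even granting that, many nodes can target the same table cell: on a path rooted at one end, $\Theta(n)$ nodes all need the entry stored at a single position, which blows the global capacity. The paper confronts exactly this step: each node needs the weights of $O(\log n)$ shortcut edges, and instead of sending direct requests it builds the multicast trees of \cite{AGG+19} (Theorems~2.3 and~2.4 there) so that every node can retrieve its $O(\log n)$ values in $O(\log n)$ rounds within the capacity budget. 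Your final sentence (``follows the standard pointer-jumping template'') is precisely where the paper has to do real work; without that piece the argument is incomplete.
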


\begin{proof}
    As described before, we enumerate the nodes of $T$ from $1$ to $n$ by assigning them a label $l(v)$ according to the order in which they are visited in a depth-first traversal of $T$ (starting at the node $s$ with smallest identifier).
    Furthermore, we construct a list $L$ as an overlay in ascending order of their label, and root $T$ towards $s$.
    This can be done in time $O(\log n)$ using techniques of \cite{FHS20}.
    Afterwards, the nodes perform pointer jumping on $L$ to create \emph{shortcut edges} $E_S$ for $O(\log n)$ rounds, which decreases the diameter of $L$ to $O(\log n)$.
    Additionally, the endpoints $i$, $j$ of a shortcut edge $\{i,j\} \in E_S$ learn the weight $w(\{i,j\}) := f(\{p_k \mid k \in V, l(i) \le l(k) \le l(j)\})$.
    Now consider some node $v \in V$.
    First, we show how $v$ can compute $f(\{p_u \mid u \in D(v)\})$, i.e., the aggregate of all values in $v$'s subtree.
    Note that this value is exactly $f(\{p_k \mid k \in V, l(v) \le l(k) \le l(w)\})$, where $w$ is the node for which $l(w) = l(v) + |D(v)| - 1$ (i.e., the node in $v$'s subtree with largest label).
    Note that this value is the aggregate of all values on the segment between $v$ and $w$ on $L$.
    To obtain this value, $v$ only needs to learn the weights of at most $O(\log n)$ shortcut edges on that segment.
    More formally, there is a path $P = (v = v_1, v_2, \ldots, v_t = w)$ on $L$ such that $l(v_{k+1}) = l(v_k) + 2^{\lfloor \log(l(w) - l(v_k)) \rfloor}$ for all $k < t$.
    Obviously, $t = O(\log n)$, and there is a shortcut edge between any two consecutive nodes on that path.
    To learn the weights of all these shortcut edges, $v$ needs to contact all $v_k$.
    
    However, since many nodes may want to contact the same node, we cannot send request messages directly, even if each node knew all node identifiers. 
    Instead, we make use of techniques of \cite{AGG+19} to construct \emph{multicast trees} towards each node\footnote{Note that \cite{AGG+19} assumes that the nodes know all node identifiers; however, the nodes on $L$ can easily simulate a butterfly network, which suffices for the algorithms of \cite{AGG+19}.}.
    Since each node needs to contact $O(\log n)$ nodes, it participates in the construction of $O(\log n)$ multicast trees.
    Further, each node $u\in V$ is the root of at most $O(\log n)$ multicast trees (one for each of its adjacent shortcut edges).
    When $u$ multicasts the weight of a shortcut edge in the respective multicast tree, all nodes that participated in the construction of that tree will be informed.
    Plugging the parameters $L = O(n \log n)$ (which is the total number of requests) and $l = \hat{l} =  O(\log n)$ (which is the number of weights each node wants to learn) into \cite[Theorem 2.3]{AGG+19} and \cite[Theorem 2.4]{AGG+19}, we get that each node learns all weights in time $O(\log n)$, w.h.p.\footnote{\cite[Theorem 2.4]{AGG+19} actually restricts each node to act to multicast at most \emph{one} multicast message.
    However, the theorem can easily be extended to allow multiple messages without increasing the runtime in our case.}
    
    After having learned the weights of all edges on $P$, $v$ can easily compute $f(\{p_u \mid u \in D(v)\})$.
    By sending this value to its parent in $T$ (over a local edge), each node learns the aggregate of the subtree of each of its children.
    It remains to compute $f(\{p_u \mid V \setminus D(v)\})$, i.e. the aggregate of all \emph{non-descendants} of $v$.
    Note that since the descendants of $v$ form a connected segment from $v$ to $w$ in $L$, these non-descendants form exactly two segments on $L$: one from $s$ to $v$ (excluding $v$), and one from $w$ to the last node of $L$ (excluding $w$).
    Using the same strategy as before, $v$ can compute the aggregate of all these values by learning the weight of $O(\log n)$ shortcut edges.
\end{proof}

\paragraph*{Step 3: Construct $G''$.}
Recall that $G''$ is the subgraph of $G'$ induced only by the nodes that correspond to edges of $T$.
In order to simulate $G''$, we let each node $v$ of $G$ act on behalf of the node of $G''$ that corresponds to $v$'s parent edge.
That is, when simulating an algorithm on $G''$, $v$ is responsible for all messages the node corresponding to $v$'s parent edge is supposed to communicate.
We now need to connect all nodes corresponding to edges that are on a common simple cycle in $G$. 
Tarjan and Vishkin showed that it suffices to consider the simple cycles consisting of a nontree edge and the unique shortest path between its adjacent nodes \cite{bcmain}. To do this, they propose the following rules:
\begin{enumerate}
    \item If $(v,u)$ and $(w,x)$ are edges in the rooted tree $T$ (directed from child to parent), and $\{v,w\}$ is an edge in $G-T$ such that $v$ is no descendant of $w$ and $w$ is no descendant of $v$ in $T$ (i.e., $v$ and $w$ lie in different subtrees), add $\{\{u,v\},\{x,w\}\}$ to $G''$.
    \item If $(w,v)$ and $(v,u)$ are edges in $T$ and some edge of $G$ connects a descendant of $w$ with a non-descendant of $v$, add $\{\{u,v\},\{v,w\}\}$ to $G''$. 
\end{enumerate}
Roughly speaking, for each non-tree edge $\{v,w\}$ that connects two different subtrees of $T$, the first rule connects the parent edges of $v$ and $w$, whereas the second rule connects all edges of $T$ that lie on the two paths from $v$ to $w$ to their lowest common ancestor.
An illustration of these rules can be found in the left and center image of Figure~\ref{fig:bcrules}. 

\begin{figure}
    \begin{center}
\tikzset{
    treenode/.style = {circle,draw,fill=lightgray,thick,minimum size=1.8em},
    edgenode/.style = {midway,inner sep=0pt},
    treeedge/.style = {draw,<-,very thick}
}
\begin{tikzpicture}
    \node[treenode] (root) {};
    \node[treenode,below left=of root] (l1) {};
    \node[treenode,below right=of root] (r1) {};
    \node[treenode,below=of l1] (l2) {u};
    \node[treenode,below=of r1] (r2) {x};
    \node[treenode,below=of l2] (v) {v};
    \node[treenode,below=of r2] (w) {w};
    
    \draw[treeedge] (root) -- (l1) node[edgenode] (rl1) {};
    \draw[treeedge] (l1) -- (l2) node[edgenode] (l1l2) {};
    \draw[treeedge] (l2) -- (v) node[edgenode] (l2v) {};
    \draw[treeedge] (root) -- (r1) node[edgenode] (rr1) {};
    \draw[treeedge] (r1) -- (r2) node[edgenode] (r1r2) {};
    \draw[treeedge] (r2) -- (w) node[edgenode] (r2w) {};

    \draw[thick] (v) -- (w) node[edgenode] (vw) {};

    \draw[ultra thick] (l2v) to [bend left] (r2w);

\end{tikzpicture}
\tikzset{
    treenode/.style = {circle,draw,fill=lightgray,thick,minimum size=1.8em},
    edgenode/.style = {midway,inner sep=0pt},
    treeedge/.style = {draw,<-,very thick}
}
\begin{tikzpicture}
    \node[treenode] (root) {u};
    \node[treenode,below left=of root] (l1) {v};
    \node[treenode,below right=of root] (r1) {};
    \node[treenode,below=of l1] (l2) {w};
    \node[treenode,below=of r1] (r2) {};
    \node[treenode,below=of l2] (v) {};
    \node[treenode,below=of r2] (w) {};
    
    \draw[treeedge] (root) -- (l1) node[edgenode] (rl1) {};
    \draw[treeedge] (l1) -- (l2) node[edgenode] (l1l2) {};
    \draw[treeedge] (l2) -- (v) node[edgenode] (l2v) {};
    \draw[treeedge] (root) -- (r1) node[edgenode] (rr1) {};
    \draw[treeedge] (r1) -- (r2) node[edgenode] (r1r2) {};
    \draw[treeedge] (r2) -- (w) node[edgenode] (r2w) {};

    \draw[thick] (v) -- (w) node[edgenode] (vw) {};

    \draw[ultra thick] (l2v) to [bend right=60] (l1l2);
    \draw[ultra thick] (l1l2) to [bend right=60] (rl1);
    \draw[ultra thick] (r2w) to [bend left=60] (r1r2);
    \draw[ultra thick] (r1r2) to [bend left=60] (rr1);

    \draw[ultra thick,dashed] (l2v) to [bend left] (r2w);

\end{tikzpicture}
\tikzset{
    treenode/.style = {circle,draw,fill=lightgray,thick,minimum size=1.8em},
    edgenode/.style = {midway,inner sep=0pt},
    treeedge/.style = {draw,<-,very thick}
}
\begin{tikzpicture}
    \node[treenode] (root) {};
    \node[treenode,below left=of root] (l1) {};
    \node[treenode,below right=of root] (r1) {};
    \node[treenode,below=of l1] (l2) {};
    \node[treenode,below=of r1] (r2) {u};
    \node[treenode,below=of l2] (v) {v};
    \node[treenode,below=of r2] (w) {w};
    
    \draw[treeedge] (root) -- (l1) node[edgenode] (rl1) {};
    \draw[treeedge] (l1) -- (l2) node[edgenode] (l1l2) {};
    \draw[treeedge] (l2) -- (v) node[edgenode] (l2v) {};
    \draw[treeedge] (root) -- (r1) node[edgenode] (rr1) {};
    \draw[treeedge] (r1) -- (r2) node[edgenode] (r1r2) {};
    \draw[treeedge] (r2) -- (w) node[edgenode] (r2w) {};

    \draw[thick] (v) -- (w) node[edgenode] (vw) {};

    \draw[ultra thick,dashed] (l2v) to [bend right=60] (l1l2);
    \draw[ultra thick,dashed] (l1l2) to [bend right=60] (rl1);
    \draw[ultra thick,dashed] (r2w) to [bend left=60] (r1r2);
    \draw[ultra thick,dashed] (r1r2) to [bend left=60] (rr1);

    \draw[ultra thick,dashed] (l2v) to [bend left] (r2w);

    \draw[ultra thick] (vw) to [bend left] (r2w);
\end{tikzpicture}
    \end{center}
    \caption[The three rules for the construction of $G'$.]{The directed edges are tree edges, and the undirected edge is a non-tree edge.
    Left: The first rule adds an edge between the two parent edges of $v$ and $w$.
    Center: The second rule connects all nodes on the two paths from $v$ to $w$ to their lowest common ancestor.
    Right: The edge $\{v,w\}$ is connected to the component using the third rule.}
    \label{fig:bcrules}
\end{figure}

As Tarjan and Vishkin point out, each node $v$ can determine each connection of its parent edge that is formed according to the first rule by comparing $l(v) + nd(v)$ with the label $l(u)$ of each of its neighbors $u$ in $G$; if $l(v) + nd(v) \le l(u)$, then the two parent edges of $v$ are connected in $G''$.
For the second rule, each node $v$, $l(v) \neq 1$ with child $w$ connects its parent edge with the parent edge of $w$ if $low(w) < v$ or $high(w) \ge v + nd(v)$.

\paragraph{Step 4: Compute Connected Components of $G''$.}
To compute the connected components of $G''$, we execute the algorithm of Theorem~\ref{thm:connected} on $G''$.
Note that every two nodes that are connected in $G''$ are simulated by adjacent nodes in $G$; therefore, the local communication in $G''$ can be carried out using the local edges of $G$.
Furthermore, since each node of $G$ simulates at most one node of $G''$, the global communication can also be simulated with the same communication capacity as in Theorem~\ref{thm:connected}.
After $O(\log n)$ rounds, w.h.p., we have established a well-formed tree on each connected component of $G''$.

\paragraph{Step 5: Extend $G''$ to $G'$.}
Finally, we incorporate the non-tree edges into the connected components of $G''$ using the following rule of Tarjan and Vishkin.
\begin{enumerate}
    \setcounter{enumi}{2}
    \item If $(w,u)$ is an edge of $T$ and $\{v,w\}$ is an edge in $G-T$, such that $l(v)<l(w)$, add $\{\{u,w\},\{v,w\}\}$ to $G''$.
\end{enumerate}
An example can be found in the right image of Figure \ref{fig:bcrules}.
Note that this only extends the connected components of $G''$ by single nodes (i.e., it does not merge components of $G''$).
Therefore, afterwards we know the biconnected component of each edge of $G$.
Specifically, if there is only one biconnected component in $G'$ (which can easily be determined by counting the number of nodes that act as the root of a well-formed tree in $G''$) we can determine whether $G$ is biconnected.
Furthermore, we can determine the cut nodes and bridge edges in $G$.
We conclude Theorem~\ref{thm:biconnectivity}.

\subsection{Maximal Independent Set} \label{sec:mis}
Finally, we describe our Maximal Independent Set (MIS) algorithm.
Recall that in the MIS problem, we ask for a set $S \subseteq V$ such that (1) no two nodes in $S$ are adjacent in the initial graph $G$ and (2) every node $v \in V \setminus S$ has a neighbor in $S$.
By a result of Kuhn, Moscibroda and Wattenhofer \cite{KMW04}, there are graphs of degree $d$ in which computing the MIS takes $\Omega({\frac{\log d}{\log\log d}})$ rounds, even in the \LOCAL model.
In models in which the communication graph is much tighter (which roughly corresponds to our notion of \emph{global communication}), the runtime is often \emph{exponentially} better: for example, both in the congested clique and the MPC model \cite{GGJ20,BBD+19,GGKMR18,BFU19} one can achieve a runtime of $O(\log\log n)$.
Many state-of-the-art MIS algorithms employ the so-called shattering technique \cite{BEPSS16,Gha16}, which conceptually works in two stages\footnote{Note that the faster algorithms are more intricate and use more preprocessing stages to reduce degrees, but still rely on this scheme.}:
First, there is the so-called shattering stage, where the problem is solved for the majority of nodes using a local strategy.
As result of this stage, each nodes knows --- with probability $1-o(d)$ --- whether it is in the MIS or has a neighbor in the MIS.
This implies that each undecided node has in expectation less than one undecided neighbor.
Thus, by a Galton-Watson argument, the graph is \emph{shattered} into small isolated subgraphs of undecided nodes.
Then, in the second stage, the MIS is solved on these subgraphs.
In models with massive global communication, all remaining nodes and edges of a component are gathered at single node using the global communication and then solved locally. 
This, of course, requires this node to receive a huge amount of messages in a single round.
Because of this high message load, this approach cannot directly be used in our model.
However, we can do something similar that requires far less messages while still coming close to the $\Omega({\frac{\log d}{\log\log d}})$ bound for \LOCAL.
This emphasizes that even a small amount of non-local communication is as strong as unbounded local communication. 
More precisely, we prove the following theorem.

\mis*

Before we go into the details of our algorithm, we take a short detour to the \CONGEST model.
Here, the MIS problem can be solved in time $O(\log n)$, in expectation and w.h.p., due to a celebrated algorithm by Luby \cite{Luby86} and Alon et al. \cite{ALI86}.
The idea behind the algorithms is quite simple:
Each node picks a random rank in $[0,1]$ which is sent to all neighbors.
Then, all local minima join the MIS and inform their neighbors about it.
All remaining nodes, i.e., nodes that did not join the set and have no neighbor that joined the set, repeat this process until every node has decided.
Later, in \cite{MRNZ09} Métivier et al. provided a simpler analysis, which shows that it is actually sufficient to send a single bit per round and edge.

For our algorithm, we take a closer look at the fact that Métivier's algorithm has an expected runtime of $O(\log n)$.
In particular, it holds that in every round in expectation half of all edges disappear due to nodes deciding (see \cite{MRNZ09} or the appendix of \cite{Gha16} for a comprehensive proof).
Thus, if we execute it on a subgraph with $m^2$ edges, where $m^2 << n^2$, it finishes after $O(\log m)$ rounds in expectation.
That means, by Markov's inequality, with at least constant probability, the algorithm actually only takes $O(\log m)$ rounds.
Therefore, if we execute it $O(\log n)$ times independently in parallel, there must be at least one execution that finishes within $O(\log m)$ rounds, w.h.p.

Now, again, observe the MIS framework using the shattering technique and consider the undecided nodes after the shattering stage.
Instead of reporting all edges to an observer that solves the problem locally for each subgraph of undecided nodes, the nodes can simply report to this observer when their executions finish.
Once there is one execution in which all nodes finished, the observer signals the nodes to stop via broadcast and also tells them which execution finished.
To do so efficiently, we execute the algorithm of Theorem~\ref{thm:connected} on each component of undecided nodes and let the root of each established well-formed tree act as the observer.

More precisely, our algorithm to solve the MIS problem operates in the following three steps of length $O(\log d + \log\log n)$ each. 
To synchronize these steps, we need to assume that, in addition to $\log L$ as an approximation of $\log \log n$, the nodes know an approximation of $O(\log d)$.

\paragraph*{Step 1: Shatter the Graph into Small Components.}
First, we run Ghaffari's (Weak-)MIS algorithm from \cite{Gha16} for $O(\log d)$ rounds.
Let $G_1, \dots, G_k$ be the connected components of $G$ that only consist of undecided nodes (obviously, the nodes can use the local edges to determine which of its neighbors are in the same component).
The remainder of our algorithm will run on each of these $G_i$'s in parallel.

\paragraph*{Step 2: Construct an Overlay for each Component.}
Next, we establish a well-formed  tree $S_i$ on each $G_i$ using the algorithm of Theorem~\ref{thm:connected}.

\paragraph*{Step 3: Execute Métivier's Algorithm in Parallel.}
Finally, we construct an MIS for each $G_i$ as follows:
\begin{enumerate}
    \item On each $G_i$, we run the MIS algorithm of Métivier et al. independently $\Theta(\log n)$ times in parallel.
    Since each execution only needs messages of size $1$, this can be done in the \CONGEST model.
    More precisely, the nodes simply send random bit strings of length $O(\log n)$, where the $i^{th}$ bit belongs to execution $i$.
    \item  Whenever an execution $i$ finishes on a node $v \in V_i$, i.e., a node or one of its neighbors joins the MIS, it uses $S_i$ to send a message to the root that contains the execution and the current round.
    Since there are at most $O(\log n)$ executions finishing in each given round, the information on which executions have finished can be fitted into $O(\log n)$ bits.

    \item The root broadcasts all finished executions to the nodes using $S_i$.
    \item The nodes adopt the result of the first execution that finishes.
    If several executions finish simultaneously, the lexicographically smallest one is chosen.
\end{enumerate}

We are now able to prove Theorem \ref{thm:mis}.

\begin{proof}[Proof of \autoref{thm:mis}]
    First, note that Ghaffari's algorithm can seamlessly be implemented in the \CONGEST model as it only sends $O(\log n)$ sized messages.
    After executing it, each knows with probability $1-o(\Delta)$ whether it is in the MIS.
    Furthermore, the random decision only depends on a node's $2$-neighborhood.
    Thus, w.h.p, the graph is \emph{shattered} into isolated, undecided components $G_1, \dots, G_k$ of size at most $O(d^4\log_{d} n)$ (see, e.g., \cite[Lemma 4.2, (P2)]{Gha16}).
    
    Now consider the construction of well-formed trees $S_1, \dots, S_k$ for these components.
    Since each component has size $O(d^4\log_{d} n)$, the construction takes time $O(\log(d^4\log_{d} n)) = O(\log d + \log \log n)$, w.h.p., by Theorem~\ref{thm:connected}.
    Further, the resulting trees $S_1, \dots, S_k$ have a height of $O(\log d + \log \log n)$.
    This allows us to (deterministically) compute aggregate functions on each $S_i$ in time $O(\log d + \log \log n)$.
    
    Now consider the last step and fix a component $G_i$ with its corresponding tree $S_i$.
    Let $j$ be the be the index of the first successful execution of Métivier's algorithm, i.e., the first execution where all nodes have either joined the MIS or have a neighbor that joined.
    Then, after $O(\log d + \log \log n)$ rounds, the root of $S_i$ is aware of index $j$ through a simple aggregation.
    The root then broadcasts $j$ to all nodes in $S_i$.
    Thus, after another $O(\log d + \log \log n)$ rounds, all nodes are aware of $j$ and stop. 
    
    Finally, we observe that the algorithm of Theorem~\ref{thm:connected} requires a global capacity of $O(\log^3 n)$, which dominates the required global capacity of all the other algorithms.
    The theorem follows.
\end{proof}

\section{Concluding Remarks and Future Work}
\label{sec:future}

In this paper, we answered the following longstanding open question: \emph{Can an overlay network of polylogarithmic degree be transformed into a graph of diameter $O(\log n)$ in time $O(\log n)$ with polylogarithmic communication?}
Whereas our solution is asymptotically time-optimal, our communication bounds may likely be improved.
As pointed out in Section~\ref{sec:problem}, if the initial degree is $d$, then our nodes need to be able to communicate $\Theta(d \log n)$ many messages.
For constant degree, $O(\log n)$ messages suffice, i.e., the algorithm works in the NCC$_0$.
However, as is implicitly proposed in~\cite{AAC+05}, there might be an algorithm that only requires a communication capacity of $\Theta(d)$.
Eradicating the additional $\log n$ factor from our algorithm seems to be non-trivial and poses an interesting goal.


\subsection{Churn-resistent Overlay Construction}

As mentioned earlier, another possibly interesting application of our algorithm is the construction of robust overlay networks under churn, i.e., nodes joining a leaving the network during the construction.
Here, one promising approach is to ensure that our algorithm maintains a sufficiently high vertex expansion throughout every evolution.
That means, each subset of nodes must not only have many edges that lead out of the subset, but also must be connect to many \emph{different} nodes to handle a big fraction of nodes leaving.
Thus, we must additionally analyze how many random walk tokens emitting from a given subset end at the same node.
This likely can be done by using more advanced spectral and/or combinatorical methods.

To illustrate this claim, assume every $\ell$ rounds a, say, logarithmic fraction of the nodes fail and thereby drop all tokens they received.
This, of course, is a very basic churn model, but it is sufficient to convey our point as other churn models are usually stronger and we would face a similar problem.
If each nodes fails independently with probability $p := \frac{1}{\Delta}$, one can easily verify that the \emph{expected} number of outgoing edges in each phase is only decreased by $O(1)$.
Thus, in expectation, the algorithm continues to increase the graphs conductance by a constant factor each phase.
However, this result does not follow with high probability anymore as the individual walks that create the connections do not fail independent of one another. 
In the extreme case, i.e., if all walks of a single end end at the same failing node, the corresponding node does not create a single edge with probility $p >> \frac{1}{n^c}$.
To mitigate this, we need how many random walks (of a single node) end at distinct nodes.
If a constant fraction of walks ends at different nodes, then a constant number of the tokens is dropped independently.
This is enough for the Chernoff bound to kick in, as we could now show that constant fractions of tokens survives w.h.p.
(that means a constant fraction of the independent tokens).

\subsection{Property Testing}
Further, our algorithm can be used as a basis for conductance testing in hybrid models.
In conductance testing, for given parameters $\Phi$, $c\Phi^{2+o(1)}\log n$ and $\epsilon$, the goal is to accept that have conductance $\Phi$ and reject graphs that are at least $\epsilon$-far from having conductance $c\Phi^{2+o(1)}\log n$. 
Here, the term $\epsilon$-far means that an $\epsilon$-fraction of all edges must be changed to obtain 
a graph of the desired conductance.
In this particular scenario, it suffices to find a subset of size $\Theta(\epsilon n)$ with bad conductance, i.e., $O(c\Phi^{2+o(1)}\log n)$, to reject the graph.
There are two algorithms \cite{fichtenberger2018two} or \cite{LMOS20} that consider this problem in distributed models.
For these algorithms, either the runtime or the global communication are in $\widetilde{\Theta}(\Phi)$.
Our algorithm --- most likely --- can reduce this because we can reduce the conductance in a preprocessing step.
Any graph of conductance $\Theta(\Phi)$ can be transformed into a constant conductance graph in $O(\frac{\log(\Phi)}{\log(\ell)})$ rounds.
On the other hand, if we apply our algorithm on graph with a set with conductance $O(\Phi^{2+o(1)}\log n)$, we observe that in every round the conductance can only increase by a factor of $2\ell$ as at most $2\ell$ random walk enter or leave the subset in expectation.
Thus, after $O(\frac{\log(\Phi)}{\log(\ell)})$ rounds, if we choose $\ell$ careful enough, the set likely still has a somewhat bad conductance.
Then, one can apply the algorithm of \cite{fichtenberger2018two} or \cite{LMOS20} with reduced complexity on the graph created by our algorithm. 
Since the exact bounds for $\Phi$, $\Phi^{2+o(1)}\log n$ and $\epsilon$ rely on a very careful choice of $\ell$, we defer this application to future work.

\subsection{Minimum Spanning Trees}
Whereas our algorithm can be used to quickly compute spanning trees, we do not know whether our techniques can also help in finding minimum spanning trees.
There does not seem to be any reason to believe that computing an MST is inherently harder; however, one might need much more sophisticated techniques.
Whereas the MST algorithms for more powerful models such as the congested clique or the MPC model~\cite{Now19, JN18, GP16} hardly seem applicable, it might be worthwhile to investigate whether PRAM algorithms provide useful techniques for overlay networks~\cite{PR02,CHL01}.
Furthermore, coming up with \emph{deterministic} algorithms for this problem seems to be even harder~\cite{AS87}.

\subsection{Another Possibility for an $O(\log(n))$-time construction} \label{sec:halperin}
We point out, however, that it \emph{may} be possible to adapt the algorithm of Halperin and Zwick \cite{HZ01} to our model, which uses $O(n+m)$ processors to construct a spanning tree in time $O(\log{n})$.
Similar to \cite{AAC+05,AW07,GHSS17,GHS19}, the idea of the algorithm is to repeatedly merge supernodes.
To merge a sufficiently large set of supernodes at once, the authors observe that it suffices to perform $O(\log n)$ random walks of length $\ell$ to discover $\ell^{1/3}$ many supernodes, w.h.p.
As in the PRAM model, it is possible to perform such random walks in time $O(\log \ell)$ in overlay networks \cite{DGS16,AS18} under certain conditions.
If the initial graph is $d$-regular, and we allow a node capacity of $\Theta(d\log n)$, we believe that the algorithm of \cite{HZ01}, together with the algorithm of \cite{AW07}, can be applied to our model to construct a low-diameter overlay in time $O(\log n)$.
However, this adaption is highly non-trivial and the resulting algorithm will be significantly more complex than our solution.
Further, our algorithm has the advantage that it's runtime is closely tied to the graph's conductance, which makes it much faster on graphs that already have a conductance of $o(n^\varepsilon)$.
Also, our algorithm can be used for other problems that depend on the graphs conductance, e.g., property testing.

\clearpage

\bibliographystyle{plain}
\bibliography{main}

\end{document}